\documentclass[11pt]{article}
\usepackage{mathrsfs}
\usepackage{amsmath}
\usepackage[colorlinks=true, allcolors=blue, urlcolor=black]{hyperref}

\makeatletter
\newcommand{\removelatexerror}{\let\@latex@error\@gobble}
\makeatother
\usepackage{amsmath}
\usepackage{amsthm}
\usepackage{amsfonts}
\usepackage{amssymb}
\usepackage{graphicx}
\usepackage{dsfont}
\usepackage{mathtools}
\usepackage{cleveref}
\usepackage{pgfplots}
\usepackage{bbm}
\usepackage[noend]{algpseudocode}
\usepackage{stmaryrd}

\usepackage{algorithm}
\usepackage{complexity}
\usepackage{enumitem}
\usepackage{tikz}
\usetikzlibrary{positioning}
\usepackage{thm-restate}
\usepackage{float}
\usetikzlibrary{calc}
\usepackage{thm-restate}
\usetikzlibrary{shapes,decorations.markings}
\usepackage{caption}
\usepackage{subcaption}
\usepackage{comment}

\usepackage{lipsum}
\usetikzlibrary{snakes}

\newtheorem{theorem}{Theorem}[section]

\newtheorem{definition}[theorem]{Definition}
\newtheorem{claim}[theorem]{Claim}
\newtheorem{lemma}[theorem]{Lemma}
\newtheorem{proposition}[theorem]{Proposition}
\newtheorem{question}[theorem]{Question}

\newtheorem{fact}[theorem]{Fact}
\newtheorem{corollary}[theorem]{Corollary}

\renewcommand{\E}{\mathbb{E}}
\newcommand{\Var}{\textnormal{Var}}
\newcommand{\Varsub}[1]{\underset{#1}{\Var}}
\newcommand{\DISJ}{\mathsf{DISJ}}
\newcommand{\IND}{\mathsf{IND}}
\newcommand{\Esub}[1]{\underset{#1}{\E}}
\newcommand{\GapMAJ}{\mathsf{GapMAJ}}
\newcommand{\GH}{\mathsf{GH}}

\renewcommand{\A}{\mathcal{A}}
\newcommand{\sumsub}[1]{\underset{#1}{\sum}}

\newcommand{\tricount}{\mathsf{TC}}
\newcommand{\VarObsRound}[0]{\overline{\Phi}_{\textnormal{obs}}}
\newcommand{\VarBobRound}[0]{\overline{\Phi}_{\textnormal{Bob}}}
\newcommand{\VarObs}[0]{\Phi_{\textnormal{obs}}}

\usepackage[english]{babel}
\usepackage[T1]{fontenc}
\usepackage{caption}

\newcommand{\eps}[0]{\varepsilon}

\usepackage[margin=1in]{geometry}

\algblockdefx{Withprob}{EndWithprob}[1]{\textbf{with probability} #1 \textbf{do}}{}

\makeatletter
\ifthenelse{\equal{\ALG@noend}{t}}%
  {\algtext*{EndWithprob}}
  {}%
\makeatother

\title{Gap-Majority Lemmas in Communication Complexity}
\author{ Pachara Sawettamalya\footnote{Department of Computer Science, Princeton University. Supported by NSF CAREER award CCF-233994. \ \href{mailto:ps3122@princeton.edu}{\url{ps3122@princeton.edu}}}  \and Huacheng Yu\footnote{Department of Computer Science, Princeton University. Supported by NSF CAREER award CCF-233994. \ \href{mailto:yuhch123@gmail.com}{\url{yuhch123@gmail.com}}}}
\date{\today}

\usepackage[backend=biber, style=alphabetic, doi=false, url=false, isbn=false, minalphanames = 3, maxalphanames = 4, maxbibnames=10]{biblatex}
\begin{document}
\maketitle

\begin{abstract}

We prove an information-theoretically optimal \emph{gap-majority lemma} in the two-player randomized communication model. For a base function $f: \mathcal{X} \to \{\pm 1\}$, its $n$-fold \emph{gap-majority composition}, denoted $\mathsf{GapMAJ} \circ f^n$, takes $n$ inputs $(X_1, \ldots, X_n)$ and distinguishes whether $f^{+n}(X_1,\ldots,X_n) := f(X_1) + \ldots + f(X_n)$ is at least $0.01\sqrt{n}$ or at most $-0.01\sqrt{n}$. We show that if computing $f$ with success probability $0.501$ requires $I$ bits of information, then computing $\mathsf{GapMAJ} \circ f^n$ with success probability $0.99$ requires $n \cdot (I - O(1))$ bits of information. This result is asymptotically optimal in two aspects: it achieves the correct linear scaling of information cost and the correct constant-constant tradeoff between error rates. This makes $\mathsf{GapMAJ}$, to our knowledge, only the third explicit outer gadget that admits a strong composition theorem in the two-player communication setting, following the identity and XOR gadgets.

From an application side, our gap-majority lemma can be viewed as a generic amplification tool that lifts the hardness of deciding $f$ into the hardness of approximating $f^{+n}$. Using this framework, we give a new proof to the communication lower bound of Gap-Hamming and derive a tight streaming lower bound of triangle counting, demonstrating the versatility of the gap-majority lemma.

% \vspace{2cm}

% \textcolor{red}{TODO LIST
% \begin{enumerate}
%     \item Title ok?
%     \item Abstract is not so smooth, especially the last sentence of both paragraphs.
%     \item Will remove table of contents later.
%     \item Is the intro smooth? Is there anything to add? Any missing references?
%     \item Is the last sentence of Section 1.1 (``This makes $\mathsf{GapMAJ}$ ...'') accurate? Is it too bold of a claim?
%     \item Is the second paragraph of Section 4 understandable?
%     \item Is the second \& third paragraphs of Section 5 understandable?
%     \item Does the name of Section 4 \& 5 make sense?
%     \item Is the structure of Section 6 ok?
%     \item For the arxiv version, let's add an acknowledgement for Elena, Yang, Yinchen, Santhoshini (for discussion) and for Sepehr (for triangle counting). For the submission, perhaps best not to include it.
%     \item I like the length of the current draft (28 pages once table of contents removed, actual content only 21). Can still add more if needed, but prefer to keep it under 30.
% \end{enumerate}
% }

\end{abstract}

\thispagestyle{empty}
\newpage
% \tableofcontents
\pagenumbering{roman}
\newpage
\pagenumbering{arabic}

\section{Introduction}
\label{sec:intro}

Given a base function $f$, consider the task of computing its $n$-fold product function $f^n(x_1, \ldots, x_n) := (f(x_1), \ldots, f(x_n)).$ A natural approach is to apply an optimal algorithm for $f$ independently to each input, using $n$ times the cost of computing a single instance. But is this optimal?

This problem is known as the \emph{direct-sum} question, and it is among the most central in computational complexity theory. It has been studied extensively across a variety of computational models such as circuit complexity \cite{Yao82, Lev87, Imp95,IW97}, communication complexity \cite{FederKNN95, BYJKS04, BBCR10, JPY12, BRWY13, MackenzieS25}, information complexity \cite{BBCR10, BR11, Braverman15}, and query complexity \cite{JainKS10, Dru12, Ben-DavidK18, BlaisB19, BenDavidB25, BesselmanGGM025},  to name just a few. 

In parallel, another line of research investigates the complexity of \emph{composition functions} $g \circ f^n$. For an arbitrary ``outer gadget'' $g$, the optimal bounds (i.e. proportional to the complexity of $f$ and $g$) are known under certain complexity measures such as deterministic query complexity \cite{Savick02, Tal13, Montanaro14} and quantum query complexity \cite{LeeMRSS11, Reichardt11a}. However, for general complexity measures, a starting point is to understand the complexity of $g\circ f^n$ when $g$ is an explicit gadget. 
That is, to study how the cost of computing $g \circ f^n$ scales with the cost of computing $f$ and with $n$, for a specific $g$. Naively, one could still compute $f$ on each input and then apply $g$ to the outputs. In the spirit of the direct-sum question, we may ask whether this simple strategy is optimal.

\begin{question}[Composition Theorem]
For a prescribed $n$-ary gadget $g$, how does the cost of computing $g \circ f^n$ scale with the cost of computing $f$, and with $n$?
\label{ques:gadget}
\end{question}

Besides the \emph{identity} gadget for which \Cref{ques:gadget} becomes the direct-sum problem, perhaps the most extensively studied gadget is the $\mathsf{XOR}$ function, which computes the parity of its $n$-bit input. Results of this type, often called the \emph{XOR lemma}, have also been well-established in various computational models, e.g. \cite{IW97, BBCR10, BKLS20, Yu22, IR24a, IR24b, SawettamalyaY25}. Beyond the XOR lemma, however, our understanding of \Cref{ques:gadget} is relatively limited and is mostly contained to query complexity with respect to the $\mathsf{OR}$, $\mathsf{GapOR}$, and $\mathsf{MAJ}$ gadgets \cite{FeigeRPU94, Ben-DavidGK020, GoosM21}.

\paragraph{Gap-Majority Lemma.} We focus on the \emph{gap-majority} gadget which computes the majority of $n$ binary inputs if the bias is $\Omega(\sqrt{n})$. Formally, define the gadget $\GapMAJ_n: \{\pm1\}^n \rightarrow \{\pm1\}$ as:
\begin{align*}
\GapMAJ_n(z_1,\ldots,z_n) :=
\begin{cases}
+1 & \text{if } \sum_{i=1}^{n} z_i \geq 0.01 \sqrt{n}, \\
-1 & \text{if } \sum_{i=1}^{n} z_i \leq -0.01 \sqrt{n}.
\end{cases}
\end{align*}
Given a base function $f: \mathcal{X} \rightarrow \{\pm 1\}$, we aim, with probability at least $0.99$, to compute the composite function
\begin{align*}
\GapMAJ_n \circ f^n(X_1,\ldots,X_n) :=
\begin{cases}
+1 & \text{if } \sum_{i=1}^{n} f(X_i) \geq 0.01 \sqrt{n},\\
-1 & \text{if } \sum_{i=1}^{n} f(X_i) \leq -0.01 \sqrt{n}.
\end{cases}
\end{align*}

At first glance, it is even unclear what the naive approach should be. One natural attempt is as follows: suppose there exists an algorithm $\mathcal{A}$ that computes $f(X)$ with error $1/(100n)$. We can estimate $\sum_{i=1}^n f(X_i)$ by applying $\mathcal{A}$ independently to each input and summing the results. By the union bound, this recovers the exact sum with probability $0.99$, yielding the correct output.

However, we can exploit the generous margin of $\pm 0.01\sqrt{n}$ to do better. Suppose $\mathcal{A}$ computes $f(X)$ with a small \emph{constant} error, say $10^{-100}$. By standard concentration bounds, the estimate of $\sum_{i=1}^n f(X_i)$ is within $\pm 10^{-10}\sqrt{n}$ of the true value with high constant probability, which suffices to distinguish between the ``$+1$'' and ``$-1$'' cases of $\GapMAJ_n \circ f^n$. Thus, the naive approach requires only $n$ times the resources needed to compute $f$ with low ``constant'' error.

With this benchmark established, the natural question is whether such linear scaling is optimal. We refer to this as the \emph{gap-majority lemma} question.

\begin{question}[Gap-Majority Lemma]
Does computing $\GapMAJ_n \circ f^n$ with success probability $0.99$ require $\Omega(n)$ times the cost of computing $f$ with success probability $0.501$?
\end{question}

\subsection{Main Results}

We study the gap-majority lemma in the realm of two-player randomized communication. In this setting, two players hold separated parts of inputs of a function $f: \mathcal{X} \times \mathcal{Y} \to \{\pm 1\}$, namely Alice has $X \in \mathcal{X}$, Bob has $Y \in \mathcal{Y}$, and they exchange a (possibly randomized) sequence of messages to compute $f(X, Y)$ with reasonable constant success probability. The scheme by which the players exchange messages and produce an output is called a \emph{communication protocol}.

Correspondingly, the \emph{gap-majority composition} problem of a base function $f$, denoted $\GapMAJ_n \circ f^n$, assigns $X = (X_1, \ldots, X_n)$ to Alice and $Y = (Y_1, \ldots, Y_n)$ to Bob. The players then must, with high constant probability, jointly compute
\begin{align*}
\GapMAJ_n \circ f^n(X_1,\ldots,X_n, Y_1,\ldots,Y_n) :=
\begin{cases}
+1 & \text{if } \sum_{i=1}^{n} f(X_i, Y_i) \geq 0.01 \sqrt{n}, \\
-1 & \text{if } \sum_{i=1}^{n} f(X_i, Y_i) \leq -0.01 \sqrt{n}.
\end{cases}
\end{align*}

Our main result establishes the gap-majority lemma in this setting when the notion of cost is \emph{internal information}, denoted $\IC(\cdot)$, which quantifies the amount of information revealed by the protocol.\footnote{We will formally define the information cost in \Cref{sec:prelim}.} In particular, we show that computing $\GapMAJ_n \circ f^n$ requires $\Omega(n)$ times the \emph{information} needed to compute a single instance of $f$, up to a small additive loss.

\begin{theorem}[Gap-Majority Lemma]
Let $f: \mathcal{X} \times \mathcal{Y} \rightarrow \{\pm 1\}$ and $\mu$ be an input distribution to $f$ such that $\E_\mu[f(x,y)] = 0$. Let $\pi$ be a protocol that computes $\GapMAJ_n \circ f^n$ correctly with probability $0.99$ over $\mu^n$. Then, there exists a protocol $\eta$ that computes $f$ correctly with probability $0.501$ over $\mu$ and $\IC(\eta) \le \frac{\IC(\pi)}{n} + O(1).$
\label{thm:gap_majority}
\end{theorem}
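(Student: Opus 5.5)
We plan to prove \Cref{thm:gap_majority} by a direct-sum style embedding followed by an advantage-extraction argument. The protocol $\eta$ will, on input $(x,y)\sim\mu$, publicly sample a uniformly random coordinate $J\in[n]$ and place $(x,y)$ there. The remaining coordinates $(X_{-J},Y_{-J})$ are filled using the standard mix of public and private randomness: for $i<J$, $X_i$ is public and $Y_i$ is sampled privately by Bob conditioned on $X_i$; for $i>J$, $Y_i$ is public and $X_i$ is sampled privately by Alice. The players then run $\pi$ on the resulting input. The usual chain-rule argument (as in \cite{BBCR10}) gives
\[
\IC_\mu(\text{simulation}) \le \frac{\IC_{\mu^n}(\pi)}{n}.
\]
The remaining problem is that the output of $\pi$ is only weakly correlated with $f(X_J,Y_J)$. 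So $\eta$ has to post-process the transcript, possibly with $O(1)$ additional bits of communication, to turn this into a guess for $f(x,y)$ that is correct with probability $0.501$.

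\textbf{Step 1: correlation.} Write $S=\sum_i f(X_i,Y_i)$. Since $\E_\mu f=0$, $S/\sqrt n$ is approximately Gaussian, and the gap events $S\ge 0.01\sqrt n$ and $S\le -0.01\sqrt n$ each have probability about $0.5$. The protocol $\pi$ outputs $\mathrm{sign}(S)$ correctly with probability $0.99$ on these events. The middle region $|S|<0.01\sqrt n$ has probability only about $0.008$. From this we expect to derive
\[
\E[\pi_{\rm out}\cdot S] \ge c\sqrt n
\]
for an absolute constant $c>0$. By symmetry of the uniform $J$, this gives
\[
\E[\pi_{\rm out} f(X_J,Y_J)] \ge c/\sqrt n.
\]
This bias is far too small: we need a constant bias of $0.002$.

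\textbf{Step 2: amplifying to constant advantage (the main obstacle).} A $1/\sqrt n$ correlation per coordinate cannot simply be read off the output. We must instead use the transcript $T$. Let $g_i(T)=\E[f(X_i,Y_i)\mid T]$. Because $\pi$ computes the sign of $S$, which fluctuates on scale $\sqrt n$, the transcript must carry substantial information about the individual $f$-values. Concretely, Step 1 gives
\[
\sum_i \E\bigl[|g_i(T)|\bigr] \ge \E\bigl[\pi_{\rm out}\textstyle\sum_i g_i(T)\bigr] \ge c\sqrt n,
\]
which alone still only yields average bias $c/\sqrt n$.

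To get a constant, we would exploit the anticoncentration of the rest of the sum. Fix $J$ and condition on the transcript. Since $\pi$ is $0.99$-correct, then for a constant fraction of coordinates and transcripts, $f(X_J,Y_J)$ is ``pivotal'': flipping it would change the answer. However, pivotality alone only happens with probability $\Theta(1/\sqrt n)$, which is the same barrier again.

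Our approach is therefore to let $\eta$ \emph{rejection-sample/condition} on the event that the embedded coordinate is pivotal, i.e.\ that $S_{-J}$ lies in a window of width $O(1)$ around the threshold. The players would run $\pi$ only on the conditioned distribution, using correlated sampling to reach that event. Equivalently, we would reweight the transcript distribution by a likelihood ratio estimated by each player on their own side. On this event, $\pi$'s output must agree with $f(x,y)$ with probability noticeably above $1/2$.

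The cost of this conditioning is the key difficulty. Conditioning on an event of probability about $1/\sqrt n$ naively costs $\log\sqrt n$ bits of information. We would instead try to show that the conditioning costs only $O(1)$ bits via a divergence bound. The plan is to compare $\mu^n$ with $\mu^n$ conditioned on $S_{-J}$ lying in a window of size $O(\sqrt n)$ with appropriate tilting. Such a tilt has $O(1)$ KL divergence yet preserves constant bias, using Gaussian tilt calculus where $\exp(\lambda S/\sqrt n)$ reweighting changes each coordinate by only $O(1/n)$ KL. The players would implement this tilt via private rejection sampling and then output $\pi_{\rm out}$, possibly flipped according to a public threshold. The total information is $\IC(\pi)/n+O(1)$, with bias at least $0.001$.

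I expect Step 2 — making the $O(1)$ information overhead rigorous while extracting constant advantage, rather than $1/\sqrt n$ — to be the heart of the proof.
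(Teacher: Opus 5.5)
Your embedding and Step~1 are fine: the standard embedding gives $\IC\le\IC(\pi)/n$, and $\E[\pi_{\mathrm{out}}S]=\Omega(\sqrt n)$ holds. But they only give bias $\Theta(1/\sqrt n)$. Step~2, which carries the whole theorem, is not just unproven; the mechanism you propose cannot work.

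\textbf{Why Step~2 fails.} Conditioning on ``$J$ is pivotal'' is meaningless for a gap problem with distributional error. A single coordinate moves $S$ by $2$, so no coordinate is pivotal inside the promise $|S|\ge 0.01\sqrt n$. The event that $S_{-J}$ lies in a width-$O(1)$ window has probability $O(1/\sqrt n)\ll 0.01$, so it can absorb all of $\pi$'s error; near $\pm 0.01\sqrt n$ it even overlaps the gap region, where $\pi$ is unconstrained. Conditioned on it, $\pi_{\mathrm{out}}$ need not correlate with $f(X_J,Y_J)$ at all. In your embedding the players could not even condition cheaply, since $S_{<J}$ is known only to Bob and $S_{>J}$ only to Alice. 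The tilting fallback fails too. A reweighting with $O(1)$ KL divergence cannot put constant mass on any window of width $o(\sqrt n)$ for $S_{-J}$. So for the naive protocol, whose output is essentially a threshold of $S$, the output bit still reacts to the single summand $f(X_J,Y_J)$ only with probability $o(1)$. The constant advantage is not in $\pi_{\mathrm{out}}$; it is in the posterior of $f(X_J,Y_J)$ given the whole transcript. Your plan never says which extra $O(1)$ bits expose it.

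\textbf{What is missing.} The paper uses a different potential and a different decomposition. The potential is the observer's variance. Correctness $0.99$ gives $\VarObs(\pi@f^{+n})=\E_{MR}\Var[S\mid MR]\le 0.99n$, because $|\E[S\mid A]|\ge 0.67\sqrt n$. Any single-coordinate protocol with $\E_{MR}\Var[f\mid MR]\le 0.998$ then decides $f$ with probability $0.501$ by the MAP guess, since $4p(1-p)\ge 2p$ for $p\le 1/2$.

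What remains is a direct-sum theorem that splits information and observer's variance \emph{simultaneously}. Your one-shot embedding does not split the variance, because the coordinates become correlated given the transcript. For instance, take the paper's $f(x,y)=x-y$ example with $\pi$ = ``Alice sends $\sum_i x_i$, Bob announces the answer''. There the posterior variance of $f(X_J,Y_J)$ given the embedded protocol's public view $(M,R,J,X_{<J},Y_{>J})$ stays close to $1$ for every $J$, although $\E\Var[S\mid MR]\le 0.99n$.

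The paper fixes this by recursive halving. In $\pi_0$ Alice appends (a rounding of) $Z_0=\E[f^+(X_0,Y_0)\mid MRX_0Y_1]$, and in $\pi_1$ Bob appends the analogous $Z_1$. Revealing $Z_0$ makes $\VarObs(\pi_0@f^{+n/2})$ equal to $\E\Var[f^+(X_0,Y_0)\mid MRX_0Y_1]$. The rectangle property gives $f^+(X_0,Y_0)\perp f^+(X_1,Y_1)\mid MRX_0Y_1$, so the two halves' variances sum to at most the parent's. Each split costs only $O(\log(n/\Delta))$ bits, which sums to $O(n)$ over the recursion tree, i.e.\ $O(1)$ per leaf.
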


We make a few remarks regarding this theorem. First, we may even relax the range of $f$ so that it only requires $f(x,y) \in \{\pm 1\}$ over the support of $\mu$. Second, the $O(1)$-additive ``loss'' in information cost is in fact unavoidable. To see this, consider the simple example $f(x,y) := x - y$ with respect to the input distribution $\mu$ which is uniform over $(x,y) \in [10^4] \times [10^4]$ conditioned on $x - y \in \{\pm 1\}$. In this example, any protocol $\eta$ that computes $f(x,y)$ over $\mu$ with probability $0.501$ must reveal $\Omega(1)$ bits of information.\footnote{To see this, consider the protocol $\eta$ from Alice’s perspective. Knowing $x$, her prior view of $y$ is equally likely to be $x-1$ or $x+1$. After the communication, she gains a small constant advantage in predicting $y$. Hence, her posterior view of $y$ shifts from her prior view by $\Omega(1)$, meaning that she must have learned $\Omega(1)$ bits of information.} On the other hand, there is a deterministic protocol $\pi$ which computes $\GapMAJ_n \circ f^n$ while communicating only $O(\log{n})$ bits: Alice sends the value $\sum_{i=1}^n x_i$ to Bob, who then can compute $f^{+n}(x_1,\ldots,x_n,y_1,\ldots,y_n) = \sum_{i=1}^n x_i - \sum_{i=1}^n y_i$ and decides if it is at least $0.01\sqrt{n}$ or at most $-0.01\sqrt{n}$.

More importantly, although the $O(1)$-additive loss is small and oftentimes negligible, it comes at the expense of \emph{round complexity}: in our proof of \Cref{thm:gap_majority}, the protocol $\eta$ uses $O(\log n)$ more rounds than $\pi$ does. Such a mismatch in the number of rounds can be undesirable in certain applications. For example, a standard approach to proving single-pass streaming lower bounds is to reduce a streaming algorithm to a \emph{one-way} communication problem and then prove a communication lower bound. The one-way restriction is typically the primary source of hardness, and as a result, losing an additional $O(\log n)$ rounds in the reduction may lead to a significantly weaker lower bound.

To remedy this, we also provide a variant of the gap-majority lemma \emph{which also preserves the number of rounds!}

\begin{theorem}[Round-Preserving Gap-Majority Lemma]
Let $f: \mathcal{X} \times \mathcal{Y} \rightarrow \{\pm 1\}$ and $\mu$ be an input distribution to $f$ such that $\E_\mu[f(x,y)] = 0$. Let $\pi$ be an $r$-round protocol that computes $\GapMAJ_n \circ f^n$ correctly with probability $0.99$ over $\mu^n$. Then, there exists an $r$-round protocol $\eta$ that computes $f$ correctly with probability $0.501$ over $\mu$ and $\IC(\eta) \le \frac{\IC(\pi)}{n} + O(\log n).$
\label{thm:round_gap_majority}
\end{theorem}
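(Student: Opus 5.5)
The plan is the standard direct-sum embedding combined with a \emph{stochastic sign-detection} step that needs no extra rounds. Fix $\pi$. In $\eta$, the players publicly sample a coordinate $J\in[n]$ uniformly, embed their single input $(x,y)\sim\mu$ as $(X_J,Y_J)$, and fill the other coordinates with samples from $\mu$. As in \cite{BBCR10}, for each $i\neq J$ one of $X_i,Y_i$ is public and the other is sampled privately from the conditional distribution. Then they run $\pi$ once, so $\eta$ has exactly $r$ rounds. By the chain rule, the internal information of this simulation is at most $\IC(\pi)/n$.

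The difficulty is that the output of $\pi$ correlates with $f(x,y)$ only weakly, roughly $\Theta(1/\sqrt n)$ per coordinate. To obtain a constant advantage without losing rounds, the players should not use the uniform filling. Instead they publicly choose the planted sum: sample $s\in\{-\lceil 0.01\sqrt n\,\rceil-1,\ \lceil 0.01\sqrt n\,\rceil\}$ (more precisely, a value chosen so that adding $f(x,y)=\pm1$ lands exactly on the threshold boundary), and sample the other $n-1$ coordinates from $\mu^{n-1}$ conditioned on $\sum_{i\ne J}f(X_i,Y_i)=s$. After this choice, $f(x,y)$ determines on which side of the gap the total falls. We need $\pi$ to be correct on such inputs with probability bounded away from $1/2$. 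To arrange this, we mix over a shift $s=t+\sigma$ with $t$ of order $\sqrt n$, chosen so that the conditioned distribution is a reweighting of $\mu^n$ with density at most $O(1)$ on the relevant events. A local CLT for the $\pm1$ sum under $\E_\mu f=0$ gives $\Pr[\sum f=s]=\Theta(1/\sqrt n)$ uniformly for $|s|\le C\sqrt n$, so the $0.01$ error of $\pi$ over $\mu^n$ translates into error at most $0.01\cdot O(1)$ on the conditioned distribution. The output of $\eta$ is then $\pi$'s answer, which is correct except on a small-error set, plus random guessing on the slice of $s$ values where the answer is uninformative. Tuning the mixture gives success probability at least $0.501$.

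The information accounting is where the $O(\log n)$ loss enters. Conditioning the other coordinates on $s$ breaks the product structure, so the chain-rule bound no longer directly gives $\IC(\pi)/n$. To handle this, I would reveal publicly the value $S_{-J}=\sum_{i\neq J}f(X_i,Y_i)$, which is in fact a public coin of $\eta$. Then I would compare $I(\Pi;X\mid Y,J,S_{-J})$ with the unconditioned embedding. Conditioning on a random variable of entropy $O(\log n)$ changes the mutual information by at most $H(S_{-J})\le O(\log n)$, via $I(A;B\mid C,D)\le I(A;B\mid C)+H(D)$. The same bound applies to the Bob side. Hence $\IC(\eta)\le \IC(\pi)/n+O(\log n)$.

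The main obstacle is step two: showing that the biased filling is close enough to $\mu^n$, in the sense of a bounded density ratio, so that the $0.99$ guarantee survives. This requires a uniform local limit theorem for $\sum f(X_i,Y_i)$ on an interval of width $\Theta(\sqrt n)$, together with control of the lattice issue, which the two-valued support handles with parity adjustments. I would first try Berry--Esseen plus a discrete local CLT for sums of i.i.d.\ symmetric-mean $\pm1$ variables, which is a simple binomial computation since $f(X_i,Y_i)$ is a fair coin under $\mu$.
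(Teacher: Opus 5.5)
\textbf{There is a genuine gap in your correctness step.} Flipping $f(x,y)$ moves the total $\sum_i f(X_i,Y_i)$ by only $2$, whereas the promise gap has width $0.02\sqrt n$. So for $n>10^4$ no planted $s$ puts $s+1$ and $s-1$ on opposite sides of the gap. With $s=\lceil 0.01\sqrt n\,\rceil$, the case $f(x,y)=-1$ lands inside the gap, where $\pi$ guarantees nothing. With $s=-\lceil 0.01\sqrt n\,\rceil-1$, both cases land on the $-1$ side.

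Mixing over $s$ does not repair this. Let $q(t):=\Pr[A=1\mid \sum_i f(X_i,Y_i)=t]$ under $\mu^n$. Averaging over the uniform $J$, your embedded input conditioned on $S_{-J}=s$ and $f(x,y)=b$ is distributed exactly as $\mu^n$ conditioned on the total being $s+b$. Hence any output rule based on $s$ and $\pi$'s answer succeeds with probability at most $\frac12+\frac12\sum_s w(s)\,|q(s+1)-q(s-1)|$. Now take the protocol that learns the sum exactly, answers correctly outside the gap, and inside the gap outputs $+1$ with probability interpolating linearly from $0$ to $1$. It never errs on promise inputs, yet $|q(s+1)-q(s-1)|\le 100/\sqrt n$ for every $s$. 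So every mixture $w$ gives advantage at most $50/\sqrt n$, and the naive protocol with small constant per-coordinate error behaves similarly.

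Separately, your density-ratio and information arguments need $w(s)/\Pr[S_{-J}=s]=O(1)$. Concentrating $w$ near the threshold makes this ratio $\Theta(\sqrt n)$. That turns your error transfer into the vacuous bound $0.01\cdot\Theta(\sqrt n)$. It also invalidates your inequality $I(A;B\mid C,D)\le I(A;B\mid C)+H(D)$, which presumes $D$ has its natural law; under a planted law the per-coordinate information can be inflated by that same ratio.

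\textbf{The missing idea} is that $\pi$'s one-bit answer carries only $O(1/\sqrt n)$ advantage about a single coordinate. You must extract $f(X_J,Y_J)$ from the whole transcript, with Bob outputting the Bayes-optimal guess from his view, and you need a potential certifying that this guess is good. The paper uses posterior variance. Any $0.99$-correct $\pi$ has $\VarObsRound(\pi@f^{+n})\le 0.99n$, by a Braverman--Garg--Woodruff type calculation. A round-preserving decomposition then splits this variance and the information cost across coordinates simultaneously:
\begin{itemize}
\item It peels off the last coordinate into $\pi^n$, where $X_{<n}$ is public and Bob samples $Y_{<n}$ privately. Unrolled, this is the same public/private pattern as your embedding.
\item The rest goes into $\pi^{<n}$, where Alice appends to her final message $M_r$ the rounded squares of $\mathbb{E}[f^{+}(X_{<n},Y_{<n})\mid MRX_{<n}Y_n,A=a]$ for each $a\in\{\pm1\}$.
\item The rectangle property then gives $\VarBobRound(\pi^n@f)+\VarObsRound(\pi^{<n}@f^{+(n-1)})\le\VarObsRound(\pi@f^{+n})+\Delta/32$.
\end{itemize}
These appended values cost $O(\log(n/\Delta))$ bits per step. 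They, not the entropy of a revealed partial sum, are the source of the $O(\log n)$ loss. Averaging over the $n$ peeled protocols bounds Bob's variance by $0.998$, hence success $0.501$, with no extra round.
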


It is helpful to view \Cref{thm:round_gap_majority} as a counterpart to \Cref{thm:gap_majority}: one can preserve the number of rounds at the cost of $O(\log n)$ extra bits, as opposed to $O(1)$ in the unbounded-round setting. It remains an open question whether the best of both worlds can be achieved; that is, whether there exists an $r$-round protocol $\eta$ with only $O(1)$ bits of information loss. We conjecture this is possible.

% It is helpful to view \Cref{thm:round_gap_majority} as a counterpart to \Cref{thm:gap_majority}: one can choose to either to preserve the number of rounds at the cost of $O(\log n)$ extra bits, or to lose only $O(1)$ bits for a price of extra $O(\log{n})$ rounds. It remains an open question whether the best of both worlds can be achieved; that is, whether one can construct such $r$-round protocol $\eta$ with only $O(1)$ bits of information loss. We conjecture this is possible. 

Finally, we emphasize that both of our gap-majority lemmas are \emph{strong}, in the sense that they achieve asymptotic optimality in both the scaling of information cost and the error tradeoffs, as witnessed by the naive approach. This makes $\GapMAJ$, to our knowledge, only the \emph{third} explicit gadget known to admit a strong composition theorem in two-player communication, following the identity gadget (i.e., the direct-sum theorem) \cite{BBCR10, BRWY13} and the XOR gadget (i.e., the XOR lemma) \cite{Yu22, IR24b, SawettamalyaY25}.

\subsection{Implication}
\label{subsec:framework}

\paragraph{Why Gap-Majority?} Gap-type problems have long served as fundamental tools for proving lower bounds for approximation problems. 
The underlying reason is simple yet powerful: many approximation tasks can be reduced to distinguishing between two well-separated cases of a decision problem—precisely the scenario highlighted by a ``gap'' function.

A particularly important instance of the gap-majority lemma arises when the base function is $f: \{\pm 1\} \times \{\pm 1\} \rightarrow \{\pm 1\}$ is the XOR function (with basis change), defined as $f(x,y) := xy$. In this case, the composition $\GapMAJ_n \circ f^n$ translates to the well-celebrated \emph{Gap-Hamming} problem in the two-player communication setting: Alice and Bob each hold a vector in $\{\pm 1\}^n$, and their goal is to decide whether the inner product between their vectors is at least $0.01\sqrt{n}$ or at most $-0.01\sqrt{n}$. The problem was formally introduced by Indyk and Woodruff \cite{IndykW03} who also proved the optimal lower bound of $\Omega(n)$ in \emph{one-way} communication. The complete \emph{unbounded round} lower bound, however, only arrived a decade later \cite{ChakrabartiR12,Vidick12,Sherstov12}. Regardless, since its inception (or even prior to that), Gap-Hamming has quickly become a cornerstone of communication complexity which, both explicitly and implicitly, leads to countless lower bounds across multiple computational models \cite{AlonMS99, IndykW03, Woodruff04, IndykW05, BlaisBM12, JayramW13, AndoniCKQWZ16}.

\paragraph{A new framework for proving lower bounds.}
While the Gap-Hamming problem alone has already yielded a rich array of lower bounds results, our gap-majority lemmas provide an even more generic framework for amplifying the hardness of ``any'' base \emph{decision} problem $f(x,y)$ into the hardness of an associated \emph{approximation} problem $f^{+n}(x_1,\ldots,x_n,y_1,\ldots,y_n) := f(x_1,y_1) + \ldots + f(x_n,y_n)$. By a standard reduction from our main results (\Cref{thm:gap_majority} and \ref{thm:round_gap_majority}), we obtain the following corollaries. 

\begin{corollary} Let $f: \mathcal{X} \times \mathcal{Y} \rightarrow \left\{\pm 1\right\}$ and $\mu$ be an input distribution to $f$ such that $\E_\mu[f(x,y)] = 0$. Let $\pi$ be a protocol that computes a $\pm 0.01 \sqrt{n}$ approximation to $f^{+n}$ with probability 0.99 over $\mu^n$. Then, there is a protocol $\eta$ which decides $f$ with probability 0.501 over $\mu$ and $\IC(\eta) \leq \frac{\IC(\pi)}{n}+O(1)$.
\label{cor:good_approx}
\end{corollary}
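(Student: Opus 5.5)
The plan is to reduce \Cref{cor:good_approx} directly to \Cref{thm:gap_majority} by turning the approximation protocol $\pi$ into a protocol $\pi'$ for $\GapMAJ_n \circ f^n$ with the same information cost. Given $\pi$ and inputs $(X,Y)\sim\mu^n$, $\pi'$ runs $\pi$ to obtain an estimate $\tilde S$ with $|\tilde S - f^{+n}(X,Y)| \le 0.01\sqrt{n}$ with probability $0.99$, and then outputs $+1$ if $\tilde S \ge 0$ and $-1$ otherwise. This output step is computed from the transcript alone (the last speaker's output is part of the transcript, or else it is announced at the end), so it reveals nothing beyond $\pi$. Hence $\IC(\pi') \le \IC(\pi)$, and in fact equality holds up to the convention for the final output.

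Correctness needs a little care because of the margins. If $f^{+n} \ge 0.01\sqrt n$, then a $\pm 0.01\sqrt n$ approximation only guarantees $\tilde S \ge 0$, which is exactly the boundary. So the threshold at $0$ (ties broken toward $+1$) works for the $+1$ case. In the $-1$ case, $f^{+n}\le -0.01\sqrt n$ gives only $\tilde S \le 0$, and a tie would be misclassified. To avoid this I would make the reduction robust in one of two ways. The first is to replace $0.01$ by a slightly smaller approximation constant in the hypothesis. The second, which I would present instead, is to note that \Cref{thm:gap_majority} is insensitive to the exact constant $0.01$ in the gap: its proof uses only that the gap is $c\sqrt n$ for some constant $c>0$, so it applies equally with gap $0.02\sqrt n$. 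Under gap $0.02\sqrt n$, the estimate $\tilde S$ lies strictly on the correct side of $0$ whenever the approximation succeeds. Either way, $\pi'$ computes the (constant-adjusted) $\GapMAJ_n\circ f^n$ with probability $0.99$ over $\mu^n$.

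Finally I apply \Cref{thm:gap_majority} to $\pi'$ with the same $f$ and $\mu$ satisfying $\E_\mu[f]=0$. This yields $\eta$ computing $f$ with probability $0.501$ over $\mu$ and $\IC(\eta)\le \IC(\pi')/n + O(1) \le \IC(\pi)/n+O(1)$. The same argument with \Cref{thm:round_gap_majority} gives the round-preserving version, since $\pi'$ uses the same rounds as $\pi$.

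The only step I expect to be delicate is the boundary issue in matching the approximation accuracy to the gap constant. It is resolved by the constant-insensitivity of the main theorem, or equivalently by a trivial rescaling of $n$-independent constants. Everything else is immediate.
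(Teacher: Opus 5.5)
Your reduction is the same ``standard reduction'' the paper invokes: threshold the estimate at $0$ to get a $\GapMAJ_n\circ f^n$ protocol with no extra information cost, then apply \Cref{thm:gap_majority}. You are also more careful than the paper, which gives no proof and leaves implicit the tie $\tilde S=0$ at the boundary $\pm 0.01\sqrt n$.

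One correction to your justification. Gap-insensitivity does not hold for every constant $c>0$. Under the correctness convention of Appendix~\ref{appendix:BGW_reprove}, once the promise region has $\mu^n$-mass below $0.01$, a constant-output protocol qualifies, and \Cref{lem:gapmaj_then_low_variance} fails for it. So you should say explicitly that re-running the appendix computation at $c=0.02$ still leaves ample slack below $0.99n$. It does.
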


\begin{corollary} Let $f: \mathcal{X} \times \mathcal{Y} \rightarrow \left\{\pm 1\right\}$ and $\mu$ be an input distribution to $f$ such that $\E_\mu[f(x,y)] = 0$. Let $\pi$ be an $r$-round protocol that computes a $\pm 0.01 \sqrt{n}$ approximation to $f^{+n}$ with probability 0.99 over $\mu^n$. Then, there is an $r$-round protocol $\eta$ which decides $f$ with probability 0.501 over $\mu$ and $\IC(\eta) \leq \frac{\IC(\pi)}{n}+O(\log n)$.
\label{cor:good_approx_1way}
\end{corollary}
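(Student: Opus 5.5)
The plan is a direct reduction from the approximation problem to the gap-majority problem, followed by an application of \Cref{thm:round_gap_majority}. Starting from the $r$-round protocol $\pi$ that outputs an estimate $\tilde{S}$ of $f^{+n}(X,Y)$ with $|\tilde{S} - f^{+n}(X,Y)| \le 0.01\sqrt{n}$ (with probability $0.99$ over $\mu^n$ and the protocol's randomness), we build a protocol $\pi'$ for $\GapMAJ_n \circ f^n$. The protocol $\pi'$ runs $\pi$ unchanged, and whichever player produces the output declares $+1$ if $\tilde{S} \ge 0$ and $-1$ otherwise. The transcript of $\pi'$ is the transcript of $\pi$, possibly with the final output bit appended. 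Therefore $\pi'$ has $r$ rounds (the output is computed locally and sent no further than $\pi$'s own output), and $\IC(\pi') \le \IC(\pi) + O(1)$. If the output is a function of the transcript, we even have $\IC(\pi') \le \IC(\pi)$.

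The key step is checking correctness, and it needs some care with the threshold. Under the promise $f^{+n} \ge 0.01\sqrt{n}$, the guarantee gives $\tilde{S} \ge 0.01\sqrt{n} - 0.01\sqrt{n} = 0$. Under the promise $f^{+n} \le -0.01\sqrt{n}$, it gives $\tilde{S} \le 0$. The boundary value $\tilde{S} = 0$ is ambiguous. There are two ways to handle it.
\begin{enumerate}
\item Read the approximation guarantee as strict. This is consistent with the $0.01$ constants being arbitrary.
\item Note that \Cref{thm:round_gap_majority} is insensitive to the constant in the gap. One can ask $\pi$ for a $\pm c\sqrt{n}$ approximation with $c < 0.01$, or equivalently apply the theorem with threshold $0.02\sqrt{n}$ and accept a change in the hidden constants.
\end{enumerate}
Since $f^{+n}$ takes only integer values of fixed parity, one can also break the tie at $\tilde{S}=0$ by rounding to the parity-consistent side. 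I would simply remark that the constants are immaterial.

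Given this, $\pi'$ computes $\GapMAJ_n \circ f^n$ with probability at least $0.99$ over $\mu^n$. The promise inputs form a sub-event, so correctness conditioned on the promise follows as well. Applying \Cref{thm:round_gap_majority} to $\pi'$ yields an $r$-round protocol $\eta$ that decides $f$ with probability $0.501$ over $\mu$ and satisfies $\IC(\eta) \le \IC(\pi')/n + O(\log n) \le \IC(\pi)/n + O(\log n)$. The $O(1)/n$ from the output bit is absorbed into $O(\log n)$.

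The main obstacle is bookkeeping rather than mathematics. It consists of the boundary case $\tilde S=0$ and of making sure the output bit neither adds a round nor meaningfully increases information cost. Both are dispatched above.
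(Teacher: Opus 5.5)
Your proposal is correct and is exactly the ``standard reduction'' the paper invokes: Bob thresholds his estimate at $0$, which gives an $r$-round protocol for $\GapMAJ_n \circ f^n$ with the same transcript (hence the same information cost) and a binary answer space, and then \Cref{thm:round_gap_majority} applies. One inessential slip: the parity tie-break at $\tilde S = 0$ does not work, since $\pm 0.01\sqrt{n}$ have the same parity, and you cannot ask the given $\pi$ for a finer approximation; your other fix does suffice, though, because the tie event has probability $O(1/\sqrt{n})$ under $\mu^n$ (alternatively, use the threshold $0.02\sqrt{n}$), and either change is absorbed by the ample slack in the constants behind \Cref{lem:gapmaj_then_low_variance_1way}.
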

These corollaries yield a clean framework for proving communication lower bounds for approximation problems. Conceptually, this framework plays a role analogous to that of direct-sum theorems, which lift lower bounds from a base problem to its $n$-fold product version, a technique that has seen frequent success in the literature. In our setting, however, the emphasis is on systematically lifting hardness from a \emph{decision} problem to an \emph{approximation} problem via the following recipe:
\begin{enumerate}
    \item \textbf{Reduction.} Show that an algorithm for approximating the target problem $\mathcal{P}$ can be used to approximate $f^{+n}$ within $\pm 0.01\sqrt{n}$, for some appropriate choices of a base decision problem $f$ and $n \in \mathbb{Z}^+$.
    \item \textbf{Base lower bound.} Prove that any protocol that decides $f$ with probability $0.501$ requires at least $\mathcal{I}$ bits of information.
\end{enumerate}
Then, by applying either \Cref{cor:good_approx} or \Cref{cor:good_approx_1way} as a black box, 
we obtain an information and communication lower bound of roughly $\Omega(n \cdot \mathcal{I})$ for the approximation problem $\mathcal{P}$.

\paragraph{Lower Bound Applications.}
\label{subsec:intro_apps}

For completeness, we showcase the power of the gap-majority lemma through a few of its quick applications.

\begin{itemize}
    \item \textbf{Communication Complexity of Gap-Hamming.} As previously noted, the Gap-Hamming problem is a particularly important instance of the gap-majority composition. While the optimal communication lower bound of $\Omega(n)$ bits has long been established \cite{ChakrabartiR12, Vidick12, Sherstov12}, it is only natural to recover such bound through the lens of our gap-majority framework.  We present this proof in \Cref{subsec:GH}. Indeed, treating the gap-majority lemma as a ``black-box'' amplification tool, our proof proceeds through a very simple and clean reduction to the classical \emph{set-disjointness} problem. 
    
    % In fact, our proof is entirely information-theoretic and indeed yields an $\Omega(n)$ \emph{information complexity} lower bound, thereby joining the list of such results in the literature \cite{ChakrabartiKW12, KerenidisLLRX15, BravermanGPW16}.
    
    \item \textbf{Streaming Complexity of Triangle Counting.} Given an $n$-vertex, $m$-edge graph with $T$ triangles presented as an edge-arrival stream, the \emph{triangle counting} problem asks for a $(1 \pm \eps)$-approximation to the number of triangles in $G$, that is, to output an estimate $\widetilde{T}$ satisfying $|\widetilde{T}-T|\leq \eps T$. A naive solution is to store the entire graph and use $\widetilde{O}(m)$ space. This bound turns out to be beatable subjected to the range of parameters. For instance, Buriol et al. \cite{BuriolFLMS06} gave a sampling-based algorithm using $\widetilde{O}\left(\frac{mn}{\eps^2 T}\right)$ space, surpassing the naive algorithm when $G$ has many triangles. 
    
    % This bound was subsequently matched by Ahn, Guha, and McGregor~\cite{AhnGM12} who further extended the result to turnstile streams, allowing both edge insertions and deletions.
    
    To this end, we show, via gap-majority framework, that these two quantitative upper bounds are jointly optimal: any randomized single-pass streaming algorithm for $(1 \pm \eps)$-approximate triangle counting requires $\Omega\left(\min\left\{m,\;\frac{mn}{\eps^2 T}\right\}\right)$ bits of space, for a reasonable range of $T$. Prior to our work, the best-known space lower bound is $\Omega\left(\min\left\{m,\;\frac{mn}{T}\right\}\right)$ bits due to Braverman, Ostrovsky, and Vilenchik \cite{BravermanOV13}.\footnote{In fact, there are other known lower bounds parameterized by additional structural parameters. We omit these bounds since our regime is parameterized solely by $n,m,T$, and $\eps$. Readers may consult \cite{JayaramK21} for further references.} Hence, our contribution lies in bringing up the missing $1/\eps^2$ factor of these existing lower bounds. We present this proof in \Cref{subsec:tricount}.
\end{itemize}

% Moreover, with nontrivial effort, one could separately establish lower bounds of $\Omega\left(\frac{mn}{T}\right)$ and $\Omega(1/\eps^2)$ for appropriate ranges of $n,m,T,$ and $\eps$.

For brevity, we defer the extended introduction and complete lower bound proof of both problems to \Cref{sec:lb_apps}. We leave it as an intriguing direction whether our gap-majority lemmas can be applied (either in a black-box or white-box manner) to derive lower bounds beyond those listed here.

\section{Preliminaries}
\label{sec:prelim}

\paragraph{Notations.} For a base function $f$ with real-valued outputs, we define its \emph{$n$-fold summation} as $f^{+n}(X_1,\ldots,X_n) := f(X_1) + \ldots + f(X_n)$. We may occasionally write $f^{+n}$ simply as $f^{+}$, omitting the number of summands. 

Following standard convention in probability, we use uppercase letters to denote random variables and lowercase letters for their realization. For a joint distribution $\pi$ over multiple random variables, $\pi(A)$ denotes the marginal distribution of $A$, and $\pi(a)$ its probability mass at $A=a$. Similarly, $\pi(A \mid B)$ denotes the conditional distribution of $A$ given $B$, and $\pi(A \mid b)$ the distribution of $A$ conditioned on $B=b$. We write $\E[A]$ and $\Var[A]$ for the expectation and variance of $A$, respectively, and use subscripts to indicate the variable of integration, e.g. $\E_A[f(A)]$.  
When clear from context, we omit the subscript. We will also frequently use the following simple fact which can be proven via the Cauchy-Schwarz inequality.

% The following simple fact will be frequently used in our proofs.

% In addition, we will frequently rely on the following simple fact.

\begin{fact}
    $\Esub{B} \ \Varsub{A} \left[f(A)\mid B\right] = \Esub{A}\left[f(A)^2\right] - \Esub{B}\left[\Esub{A}\left[f(A)\mid B\right]^2\right] \leq \Varsub{A}\left[f(A)\right].$
\label{fact:cond_var}
\end{fact}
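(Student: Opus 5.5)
The statement to prove is Fact~\ref{fact:cond_var}. I will proceed by the law of total variance, handling the identity and the inequality separately.

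\textbf{The identity.} Fix $B=b$ and expand the conditional variance as
\[
\Var_A[f(A)\mid b]=\E_A[f(A)^2\mid b]-\E_A[f(A)\mid b]^2 .
\]
Taking expectation over $B$ and using the tower property $\E_B\big[\E_A[f(A)^2\mid B]\big]=\E_A[f(A)^2]$ gives the claimed equality
\[
\Esub{B}\,\Varsub{A}[f(A)\mid B]=\Esub{A}[f(A)^2]-\Esub{B}\Big[\Esub{A}[f(A)\mid B]^2\Big].
\]
This step is pure bookkeeping. The only point to note is that $A$ and $B$ are jointly distributed, and the unconditional expectation over $A$ is taken with respect to the marginal of $A$.

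\textbf{The inequality.} Since $\Var_A[f(A)]=\E[f(A)^2]-\E[f(A)]^2$, it suffices to show
\[
\E_B\Big[\E_A[f(A)\mid B]^2\Big]\ \ge\ \E[f(A)]^2 .
\]
This follows from Jensen's inequality, or equivalently Cauchy--Schwarz (as the paper suggests), applied to the random variable $Z:=\E_A[f(A)\mid B]$. Concretely, $\E[Z]^2\le\E[Z^2]$, and $\E[Z]=\E[f(A)]$ by the tower property. Equivalently, the gap between the two sides is $\Var_B[Z]\ge 0$, which is the usual law of total variance.

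\textbf{Obstacle.} There is no genuine obstacle. The only care needed is to assume $f(A)$ has a finite second moment, which holds automatically in the finite-support settings the paper uses.
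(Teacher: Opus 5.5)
Your proof is correct and follows the route the paper indicates: the identity comes from expanding the conditional variance pointwise and applying the tower property, and the inequality is Cauchy--Schwarz (Jensen) applied to $\E[f(A)\mid B]$, whose mean is $\E[f(A)]$. The paper gives no further argument beyond naming Cauchy--Schwarz, so nothing is missing.
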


\paragraph{Communication Protocol.}
We adopt the distributional view of randomized communication protocols.  
A protocol $\pi$ is represented as a joint distribution $(X,Y,M,R)$, where $(X,Y) \sim \mu$ is the input pair given to Alice and Bob, $M= (M_1,M_2,\ldots)$ is the transcript, and $R$ denotes public randomness. The protocol operates as follows. Alice and Bob receive inputs $(X,Y) \sim \mu$ respectively. Assume Alice speaks first. She draws the first message $M_1 \sim \pi(M_1 \mid XR)$ and sends it to Bob. Bob then draws the second message $M_2 \sim \pi(M_2 \mid M_1YR)$ and sends it back to Alice. The communication between players goes on as needed, and thereby forms a transcript $M$. Without restricting the number of \emph{rounds} (i.e. the number of messages), if the protocol is designated to compute a function $f$, we may assume that the final message of the protocol is the players' answer to $f(X,Y)$. We can always enforce this assumption, e.g. if Alice is responsible for computing $f(X,Y)$, she simply sends an extra (short) message to Bob indicating the answer. The \emph{communication cost} of $\pi$, denoted $\mathsf{CC}(\pi)$, is measured by the worst-case length of the message $M$.

% We will also examine protocols that operates on product inputs which Alice receives an $n$-coordinate input $X = (X_1,\ldots,X_n)$ and Bob receives an $n$-coordinate input $Y = (Y_1,\ldots,Y_n)$ such that each $(X_i,Y_i) \sim \mu$ i.i.d.

\paragraph{Protocol with Bounded Rounds.} A special case of communication protocols occurs when we restrict the number of rounds of communication to $r$. We may assume wlog that $r$ is odd. In this setting, Alice and Bob can communicate for $r$ rounds and towards the end of their conversation, Bob is asked to produce an answer to $f(X,Y)$. Note that Alice may not know this answer. We denote an $r$-round protocol by $\pi = (X,Y,M,A,R)$, where $M = (M_1,\ldots,M_r)$ is the sequence of $r$ messages, $R$ is shared public randomness, and $A$ is Bob’s output computed from $MYR$. Denote $\mathcal{A}$ to be the space of all possible answers $A$. For instance, we use $\mathcal{A} = \{\pm 1\}$ for decision problems.

\paragraph{Information Cost.} The information cost of a protocol $\pi = (X,Y,M,R)$ is given by the quantity:
$$\IC(\pi) := I(M:X \mid YR) + I(M:Y \mid XR)$$
where $I(\cdot:\cdot)$ denotes mutual information. It is helpful to interpret $\mathsf{IC}(\pi)$ as the amount of \emph{information} that $\pi$ reveals about the players' inputs. To see this, from Alice's view at the end of $\pi$, she learns the transcript $M$ while already knowing her own input $X$ and randomness $R$. Hence, the term $I(M:Y \mid XR)$ quantifies the information she gains about Bob's input $Y$ after running $\pi$. 

The information cost is also defined by the same quantity even if $\pi = (X,Y,M,A,R)$ is bounded-round. Note that we always have $\mathsf{IC}(\pi) \leq \mathsf{CC}(\pi)$.

\paragraph{Rectangle Property.} Rectangle property is one of the defining features of communication protocols. It asserts that in a communication protocol whose inputs consist of multiple coordinates drawn from a product distribution, conditioned on a set of disjoint inputs and the message $M$ (and randomness $R$), the remaining portions of players' input are independent.

\begin{proposition}[Rectangle property; Section 3.4 of \cite{Yu22}] Let $\pi = (X,Y,M,R)$ be a protocol over $n$-coordinate inputs $(X,Y) \sim \mu^n$. For any partition of coordinate $[n] = P \cup Q$, we have $X_P \perp Y_Q \mid MRX_QY_P$. If $\pi = (X,Y,M,A,R)$ is a bounded-round protocol, then $X_P \perp Y_Q \mid MRAX_QY_P$.
\label{prop:rectangle}
\end{proposition}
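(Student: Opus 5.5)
The plan is to prove \Cref{prop:rectangle} in two stages. First, show that the conditional independence already holds \emph{before} any communication, i.e.\ $X_P \perp Y_Q \mid R X_Q Y_P$. Second, show that conditioning further on the transcript (and on Bob's answer $A$ in the bounded-round case) only multiplies the conditional density by a factor that is a product of an ``$X_P$-part'' and a ``$Y_Q$-part''. A density of that product form on $(X_P,Y_Q)$ means the two are independent.

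\textbf{Step 1 (prior independence).} Since $R$ is public randomness independent of the inputs, and $(X,Y)\sim\mu^n$ has independent coordinates $(X_i,Y_i)\sim\mu$, we can write
\[
\Pr[X=x,Y=y,R=r]=\Pr[R=r]\prod_{i\in P}\mu(x_i,y_i)\prod_{i\in Q}\mu(x_i,y_i).
\]
Fixing $x_Q,y_P,r$, this expression, viewed as a function of $(x_P,y_Q)$, is proportional to $g(x_P)\,h(y_Q)$, where
\[
g(x_P):=\prod_{i\in P}\mu(x_i,y_i),\qquad h(y_Q):=\prod_{i\in Q}\mu(x_i,y_i).
\]
Here $g$ depends only on $x_P$ (with $y_P$ fixed) and $h$ depends only on $y_Q$ (with $x_Q$ fixed).

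\textbf{Step 2 (transcript factorization).} By the definition of a protocol, the messages are generated alternately: Alice draws $M_j\sim\pi(M_j\mid M_{<j}XR)$ for odd $j$, and Bob draws $M_j\sim\pi(M_j\mid M_{<j}YR)$ for even $j$. Private coins can be absorbed into these conditional distributions. Hence
\[
\Pr[M=m\mid X=x,Y=y,R=r]=\alpha(x,m,r)\,\beta(y,m,r),
\]
where $\alpha$ is the product of Alice's message probabilities and $\beta$ is the product of Bob's. In the bounded-round case, $A$ is drawn by Bob from $\pi(A\mid MYR)$, so we simply multiply the factor $\pi(a\mid m,y,r)$ into $\beta$. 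Now fix $x_Q,y_P,m,r$ (and $a$). Then $\alpha$ becomes a function of $x_P$ alone and $\beta$ a function of $y_Q$ alone.

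\textbf{Step 3 (conclusion).} Combining the two steps, the conditional probability of $(X_P,Y_Q)=(x_P,y_Q)$ given $M R X_Q Y_P$ (and $A$) is proportional to
\[
\bigl[g(x_P)\,\alpha(x_P,x_Q,m,r)\bigr]\cdot\bigl[h(y_Q)\,\beta(y_P,y_Q,m,r)\bigr].
\]
This is a product of a function of $x_P$ and a function of $y_Q$. The normalizing constant depends only on the fixed values. A distribution of this form is a product distribution, which gives the claimed independence.

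The only subtle point is bookkeeping. We must check that public randomness is independent of the inputs, and that private randomness and the output $A$ depend only on one player's side, so that they land in $\alpha$ or $\beta$ respectively. Once this is verified, the remaining steps are routine.
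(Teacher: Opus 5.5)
Your proof is correct. It is the standard factorization argument for the rectangle property: the product prior on $(X_P,Y_Q)$ given $RX_QY_P$ is multiplied by a transcript likelihood that splits into an Alice factor $\alpha$ and a Bob factor $\beta$, with Bob's answer $A$ absorbed into $\beta$. The paper does not prove this proposition itself but imports it from Section~3.4 of \cite{Yu22}, so your write-up supplies the kind of argument that citation stands in for.
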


% For more detailed description and discussions of communication protocols, information cost, the rectangle property, and more, see Section~5 of \cite{SawettamalyaY25}.

\section{Approach}
\label{sec:approach}

In this section, we outline the roadmap toward our two main results (\Cref{thm:gap_majority} and \Cref{thm:round_gap_majority}), with the proof of key technical lemmas deferred to later sections.

\subsection{Gap-Majority Lemma} 

Our proof of \Cref{thm:gap_majority} proceeds by introducing a new measure called the \emph{observer’s variance}.

\begin{definition}[Observer's variance] Let $\pi = (X,Y,M,R)$ be a communication protocol, and let $f$ be a function on the input $(X,Y)$.  
The \emph{observer’s variance} of $\pi$ with respect to $f$ is defined as
$$\VarObs(\pi @ f) := \Esub{MR} \ \Varsub{XY} \ [f(X,Y) \mid MR].$$
\end{definition}

In words, the observer’s variance measures, \emph{from the perspective of an external observer}, the uncertainty of $f(X,Y)$ after the protocol concludes. Particularly from an external view, they see only the transcript $M$ and the public randomness $R$, but not the players’ private inputs.  For each realization of $MR$, the observer considers the posterior distribution of $f(X,Y)$ conditioned on these values and evaluates the uncertainty $f(X,Y)$ via its variance.  Averaging over all $MR$ then yields $\VarObs(\pi @ f)$, representing the expected variance of $f$ given the protocol’s external view.

With this definition in place, we now state the two key ingredients that together establish the gap-majority lemma (\Cref{thm:gap_majority}).

\paragraph{Ingredient \#1: Gap-majority implies low variance.} 
We first show that any protocol that computes gap-majority with high constant probability must admit low linear observer's variance, say at most $0.99n$. This phenomenon was previously observed in a streaming setting by Braverman, Garg, and Woodruff~\cite{BravermanGW20}, who showed that any streaming algorithm capable of computing the exact majority necessarily has low linear variance. For our purposes, we strengthen this result to the communication setting for gap-majority composition, as stated in \Cref{lem:gapmaj_then_low_variance}. We emphasize that our proof follows closely from \cite{BravermanGW20} to which we defer to the appendix.

\begin{restatable}[Gap-majority implies low variance]{lem}{variancebound}
Let $f: \mathcal{X} \times \mathcal{Y} \to \{\pm 1\}$ and let $\mu$ be an input distribution to $f$ satisfying $\E_\mu[f(x,y)] = 0$. Suppose $\pi$ is a protocol that computes $\GapMAJ_n \circ f^n$ with probability $0.99$ over $\mu^n$. Then $\VarObs(\pi @ f^{+n}) \le 0.99n$.
\label{lem:gapmaj_then_low_variance}
\end{restatable}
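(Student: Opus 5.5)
The plan is to reduce the variance bound to a lower bound on the correlation between the protocol's output and the sum $S := f^{+n}(X,Y) = \sum_{i=1}^n f(X_i,Y_i)$. Since the pairs $(X_i,Y_i)$ are i.i.d.\ from $\mu$ with $\E_\mu[f]=0$ and $f\in\{\pm1\}$, we have $\E[S]=0$ and $\E[S^2]=n$. By \Cref{fact:cond_var},
$$\VarObs(\pi @ f^{+n}) = n - \Esub{MR}\left[\E[S\mid MR]^2\right],$$
so it suffices to show that $Z := \E[S \mid MR]$ satisfies $\E[Z^2] \ge 0.01n$. No rectangle property or anticoncentration of the posterior is needed; only second moments enter.

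Let $A\in\{\pm1\}$ be the protocol's answer, which is a function of $MR$ (the final message by our convention). Then $\E[AS] = \E[A\,\E[S\mid MR]] = \E[AZ] \le \sqrt{\E[Z^2]}$ by Cauchy--Schwarz, so it is enough to show $\E[AS] \ge c\sqrt n$ for a constant $c \ge 0.1$. Partition the probability space into three events:
\begin{itemize}
\item $G$: the input lies in the gap region $|S| < 0.01\sqrt n$;
\item $W$: the input is in the promise region and $A$ is wrong;
\item the remaining good event, on which $AS = |S|$.
\end{itemize}
This gives
$$\E[AS] \;\ge\; \E|S| - 2\,\E\big[|S|\mathbf 1_{W}\big] - 2\,\E\big[|S|\mathbf 1_G\big].$$
On $G$ we have $|S|<0.01\sqrt n$, so the last term is at most $0.02\sqrt n$. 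Since $\Pr[W]\le 0.01$ (reading ``correct with probability $0.99$ over $\mu^n$'' as error probability at most $0.01$), Cauchy--Schwarz gives $\E[|S|\mathbf 1_W] \le \sqrt{\E[S^2]\Pr[W]} \le 0.1\sqrt n$.

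The main step is a constant-factor lower bound on $\E|S|$ that holds for every $n$, not just asymptotically via the CLT. I would use the standard H\"older/Paley--Zygmund moment bound
$$\E|S| \ge \frac{(\E S^2)^{3/2}}{(\E S^4)^{1/2}}.$$
For a sum of independent mean-zero $\pm1$ variables, $\E S^4 = 3n^2 - 2n \le 3n^2$, which yields $\E|S| \ge \sqrt{n/3} \ge 0.57\sqrt n$. Combining the three estimates,
$$\E[AS] \ge (0.57 - 0.2 - 0.04)\sqrt n > 0.3\sqrt n,$$
so $\E[Z^2] \ge 0.09n$ and hence $\VarObs(\pi @ f^{+n}) \le 0.91n \le 0.99n$.

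The step I expect to need the most care is the bookkeeping around the promise. The error probability must be read so that $\Pr[W]\le 0.01$ is measured under the full distribution $\mu^n$, including the gap inputs on which any output is allowed, and the gap contribution must be charged separately as above. If instead ``correct with probability $0.99$'' is meant conditionally on the promise, the same argument goes through: the conditional error bound only weakens $\Pr[W]$ by the factor $\Pr[\neg G]$, which is at most $1$, so we still have $\Pr[W]\le 0.01$. The numerical slack is ample either way.
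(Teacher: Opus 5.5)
Your proof is correct, and it takes a genuinely different, more elementary route than the paper's.

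\textbf{The paper's route.} It also starts by coarsening the observer's view: it replaces $MR$ by the answer $A$ via \Cref{fact:cond_var}, which reduces the goal to lower-bounding $\E_A[\E[S\mid A]^2]$. It then bounds $|\E[S\mid A=\pm 1]|$ by an extremal counting argument. The error bound shows that the weights $p_z=\pi(A=1\mid z)$ carry mass at least $0.486\cdot 2^n$ on $\{\langle z,\mathbf{1}_n\rangle\ge 0.01\sqrt n\}$ and at most $0.014\cdot 2^n$ on the opposite side. The paper places this mass in the worst possible way and evaluates the resulting binomial sums with a Gaussian-tail approximation that carries $o(2^n\sqrt n)$ error terms. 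This yields roughly $\VarObs(\pi@f^{+n})\le 0.55n$.

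\textbf{Your route.} You replace all of that with three ingredients:
\begin{itemize}
\item the single correlation bound $\E[AS]\le\sqrt{\E[Z^2]}$;
\item the H\"older/fourth-moment bound $\E|S|\ge\sqrt{n/3}$;
\item the Cauchy--Schwarz charge $\E[|S|\mathbf{1}_W]\le\sqrt{n\Pr[W]}$ for the error event.
\end{itemize}

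\textbf{What this buys.}
\begin{itemize}
\item Your bound is non-asymptotic and valid for every $n$. The paper's binomial estimates hold only for $n$ large enough; for instance, its count $|\{z:\langle z,\mathbf{1}_n\rangle\ge 0.01\sqrt n\}|\ge 0.496\cdot 2^n$ already fails at $n=2$.
\item It uses only the second and fourth moments of $S$, not its exact binomial law.
\item It needs no numerical tail computations.
\item It visibly tolerates any error probability $\delta$ with $2\sqrt{\delta}+0.02<1/\sqrt3$.
\item Working with $\E[AS]$ handles both answers at once. This sidesteps the paper's separate ``symmetric'' case, whose sign the paper states loosely: it writes $\ge 0.6724\sqrt n$ for $A=-1$ where $\le -0.6724\sqrt n$ is meant.
\end{itemize}

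\textbf{What it costs.} You give up only the constant: $0.91n$ versus about $0.55n$. This does not matter, since the proof of \Cref{thm:gap_majority} only uses $\VarObs(\pi@f^{+n})\le 0.99n$.

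\textbf{A minor slip.} Your own estimate gives the gap term as $2\cdot 0.01\sqrt n=0.02\sqrt n$, not $0.04\sqrt n$. Using $0.04$ is merely conservative and does not affect the conclusion.
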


\paragraph{Ingredient \#2: A new direct-sum theorem.} 
To motivate our second ingredient, it is best to discuss the following toy question: How should the information cost and observer's variance for $f^{+n}$ scale relative to those for a single instance of $f$?

In an easy direction, suppose that we have a protocol $\theta$ for computing $f$ using $\mathcal{I}$ bits of information, and with variance $\sigma^2$. We can then employ a naive approach by applying $\theta$ on the $n$ pairs of inputs independently, and then sum up the answers. This approach simply use $n \cdot \mathcal{I}$ bits of information. But how does the variance of $f^{+n}$ scale? The key observation is that each execution of $\theta$ is independent; thus, we can directly sum up the variance to $n\sigma^2$. Notice that the optimality question of this naive protocol is exactly the direct-sum problem: we may ask if the scaling factor $n$ of information costs and variances are tight, or even stronger, are they \emph{simultaneously} tight?

Our second ingredient is a direct-sum theorem that answers this question in an affirmative, barring a small additive loss. We state the result here and present its proof in \Cref{sec:direct_sum_var}.

\begin{restatable}[Main technical lemma]{lem}{mainlemma}
Let $f: \mathcal{X} \times \mathcal{Y} \rightarrow \{\pm 1\}$ be an arbitrary function, and let $\mu$ be any input distribution over $\mathcal{X} \times \mathcal{Y}$. Suppose $\pi$ is a protocol over the input distribution $\mu^n$. Then, for any $\Delta \in (0,1)$, there exists a protocol $\eta$ over the input distribution $\mu$ such that 
$$\IC(\eta) \leq \frac{\IC(\pi)}{n} + O\Big(1 + \log \frac{1}{\Delta}\Big) \hspace{5mm} \text{ and } \hspace{5mm}
    \VarObs(\eta @ f) \leq \frac{\VarObs(\pi @ f^{+n})}{n} + \Delta.$$
\label{lem:main_lemma}
\end{restatable}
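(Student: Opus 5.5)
We build $\eta$ by the standard embedding argument of direct-sum theorems, adapted so that the observer's variance is also controlled. Given $(X,Y)\sim\mu$, the players pick a uniformly random coordinate $J\in[n]$ using public randomness and place their input there. For the other coordinates, they publicly sample $X_{<J}$ and $Y_{>J}$. Alice then privately samples $X_{>J}$ conditioned on $Y_{>J}$, and Bob privately samples $Y_{<J}$ conditioned on $X_{<J}$; this is possible because $\mu^n$ is a product across coordinates. They then run $\pi$. This simulation $\theta$ satisfies $\IC(\theta)\le \IC(\pi)/n$ by the chain rule.

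Next we bound the observer's variance of $\theta$. An external observer sees $M$, $R$, $J$, $X_{<J}$, $Y_{>J}$. Write $F_i=f(X_i,Y_i)$ and decompose
\[
\VarObs(\pi@f^{+n})=\E\,\Var\Big[\sum_i F_i \,\Big|\, MR\Big].
\]
The idea is a chain-rule identity for variance. Let $G_j$ denote the conditioning on $MR X_{\le j}Y_{>j}$. Then $\E\Var[\sum_i F_i\mid MR]$ dominates $\sum_j \E\Var[F_j\mid MR X_{<j}Y_{>j}]$, up to cross terms. The rectangle property (\Cref{prop:rectangle}) makes $F_j$ conditionally uncorrelated with the other coordinates given $MRX_{<j}Y_{>j}$ and the outside coordinates.

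Concretely, I would use a martingale-style telescoping. Consider the Doob martingale obtained by revealing, one coordinate at a time, the mixed conditioning $MR X_{<j}Y_{>j}$, moving through the hybrids $MR Y_{>0}$, then $MRX_1Y_{>1}$, and so on. The variance of $\sum_i F_i$ given $MR$ should then be at least the sum of the per-coordinate conditional variances. Averaging over $J$ gives $\VarObs(\theta@f)\le \VarObs(\pi@f^{+n})/n$ directly, provided such a superadditivity holds.

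I expect superadditivity to fail exactly: revealing $X_{<j}$ also moves $F_i$ for $i<j$, which is the same phenomenon behind the $x-y$ example in the introduction. This is where the additive loss enters. The fix I would attempt is to have $\eta$ additionally reveal a coarse, $\Delta$-precision rounding of the conditional expectations the observer lacks. Specifically, Alice announces a discretization of $\E[f(X,Y)\mid MR\,X\,\text{(public)}]$ and Bob a discretization of the analogous quantity with his input. Each announcement costs $O(\log(1/\Delta))$ bits of information, plus $O(1)$. After it, the observer's posterior mean is within $\Delta$ of the players' joint view, so the missing cross-term covariance is absorbed.

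The main obstacle is making this correction rigorous. We must show that revealing these rounded posterior means shrinks the observer's variance to within $\Delta$ of the "hybrid" variance whose average over $J$ telescopes to $\VarObs(\pi@f^{+n})/n$. I would attack it via Fact~\ref{fact:cond_var}: write every variance as $\E[F^2]$ minus the expected squared conditional mean, so that everything reduces to comparing second moments of posterior means. The rectangle property gives the product structure $\E[F_J\mid\cdot]$ used for the telescoping. The rounding error contributes at most $\Delta$ to the squared means, and its information cost is bounded by the entropy of the rounded value.
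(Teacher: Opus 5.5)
Your embedding (public $X_{<J},Y_{>J}$, private completions by each player) is exactly the leaf embedding the paper's construction produces, and $\IC(\theta)\le \IC(\pi)/n$ is fine. The gap is the variance step, and your proposed fix does not close it. Announcing $\Delta$-roundings of the players' posterior means of the \emph{single} value $F_J$ cannot yield $\VarObs(\eta@f)\le \VarObs(\pi@f^{+n})/n+\Delta$.

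Take the paper's own $x-y$ example, made cyclic to avoid boundary effects. Let $x_i$ be uniform in $\mathbb{Z}_K$ with $K>2n+2$, $y_i=x_i-F_i$ with $F_i$ uniform in $\{\pm1\}$, and $f(x,y)=x-y$ on the support. Let $\pi$ have Alice send $\sum_i x_i$ and Bob send $\sum_i y_i$. The observer then knows $S=f^{+n}$ exactly, so $\VarObs(\pi@f^{+n})=0$.

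In your $\eta$ with $J=j$, Alice's view $MRXY_{>j}$ reveals $S$ and $F_{>j}$, but among $F_{\le j}$ only their sum $S_{\le j}$. So her posterior mean of $F_j$ is $S_{\le j}/j$, and her posterior variance averages to $1-1/j$. Symmetrically, Bob's posterior mean is $S_{\ge j}/(n-j+1)$. Now fix a constant $\Delta$ (e.g.\ the $\Delta=0.008$ used for \Cref{thm:gap_majority}) and take $n/4\le j\le 3n/4$. The $\Delta$-roundings locate $S_{\le j}$ and $S_{\ge j}$ only within windows of width $\Theta(n\Delta)$, far wider than their $\Theta(\sqrt n)$ fluctuations. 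Hence the observer's posterior variance of $F_j$ stays $1-O(1/n)$, and $\VarObs(\eta@f)\ge 1/2-o(1)$.

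So the announcements do not bring the observer ``within $\Delta$ of the players' joint view''; here each player alone is nearly ignorant of $F_j$. There is also no Doob martingale to telescope, since $MRX_{\le j}Y_{>j}$ and $MRX_{\le j+1}Y_{>j+1}$ are not nested. Recovering enough partial-sum information at the coordinate level would cost $\Theta(\log n)$ bits per coordinate. That is the regime of the paper's chain decomposition for \Cref{lem:main_lemma_oneway}, not $O(1+\log(1/\Delta))$.

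The missing idea is a \emph{hierarchical} decomposition. The paper halves the coordinates recursively. At each split, Alice in $\pi_0$ announces $Z_0=\E[f^+(X_0,Y_0)\mid MRX_0Y_1]$ and Bob in $\pi_1$ announces $Z_1=\E[f^+(X_1,Y_1)\mid MRX_0Y_1]$. These are posterior means of the \emph{whole half-block sums} under the common conditioning $MRX_0Y_1$. This makes each child's observer variance equal to the announcing player's variance given $MRX_0Y_1$. Since $X_1\perp Y_0\mid MRX_0Y_1$, the two add up to $\E\,\Var[f^{+n}(X,Y)\mid MRX_0Y_1]\le \VarObs(\pi@f^{+n})$ with no cross term. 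The only variance loss comes from discretization.

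The announcements are inherited by all descendants. So the leaf for coordinate $j$ is your embedding plus one block-sum announcement at every dyadic scale along its path; in the example above, exact announcements of this kind determine $F_j$. A level-$\ell$ announcement costs $O(\log(n2^{-\ell}/\Delta))$ bits. By the chain-rule bound $\IC(\pi_0)+\IC(\pi_1)\le \IC(\pi)+4\log_2|\mathcal{Z}|$, it is paid once per node rather than once per leaf. The total overhead is therefore $\sum_{\ell}2^{\ell-1}O(\log(n2^{-\ell}/\Delta))=O(n(1+\log(1/\Delta)))$, i.e.\ $O(1+\log(1/\Delta))$ per coordinate. This amortization across scales is what a flat, single-level embedding cannot supply.
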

\vspace{-3.5mm}

Crucially, we view this lemma as a central technical contribution of our work. Note that unlike \Cref{lem:gapmaj_then_low_variance}, the condition $\E_\mu[f(x,y)] = 0$ is \emph{not} required here. Whether \Cref{lem:main_lemma} has further applications beyond the gap-majority setting, in our opinion, is a direction worth exploring. 

As a remark, \Cref{lem:main_lemma} extends naturally to the setting where $f$ maps to the interval $[-d,d]$ and $\Delta \in (0,d^2)$ for some $d \in \mathbb{R}^+$. In this case, the statement remains the same except that the additive loss in the information cost becomes $O(1+\log{\frac{d^2}{\Delta}})$. For brevity, we omit this proof.

\paragraph{Putting all together.} We now combine both ingredients to recover the gap-majority lemma. 

\begin{proof}[Proof of \Cref{thm:gap_majority}] Recall that $\pi$ is a protocol that computes $\GapMAJ_n \circ f^n$ over $\mu^n$ with probability $0.99$. By \Cref{lem:gapmaj_then_low_variance}, we have $\VarObs(\pi @ f^{+n}) \leq 0.99n.$ Then, applying \Cref{lem:main_lemma} with $\Delta = 0.008$ yields a protocol $\eta = (X,Y,M,R)$ over inputs $(X,Y) \sim \mu$ such that $\IC(\eta) \leq \frac{\IC(\pi)}{n}+O(1)$ and 
$\VarObs(\eta @ f) \leq \frac{ \VarObs(\pi @ f^{+n})}{n} + 0.008 \leq 0.998$. Expanding the observer's variance, we have 
$$\Esub{MR} \ \Varsub{XY} \ [f(X,Y) \mid MR] \leq 0.998.$$

Now consider the following protocol over inputs $(X,Y) \sim \mu$: the players first execute $\eta$, and then outputs their answer to $f(X,Y)$ as the more-likely answer between $\{\pm 1\}$ among the posterior distribution $\eta(f(X,Y) \mid MR)$. The information cost of this protocol remains at $\IC(\eta) \leq \frac{\IC(\pi)}{n}+O(1)$.

It only remains to argue the correctness probability of such protocol. With probability $\eta(mr)$, the protocol realizes $MR = mr$. Denote $0 \leq p_{mr} \leq 1/2$ be such that the distribution of $\eta(f(X,Y) \mid mr)$ is $\{\pm 1\}$ with probabilities $\{p_{mr},1-p_{mr}\}$. Then, we have
$$0.998 \geq \Esub{MR} \ \Varsub{XY} \ [f(X,Y) \mid MR] = \sum_{mr} \eta(mr) \cdot 4p_{mr}(1-p_{mr}) \geq 2 \cdot \sum_{mr} \eta(mr) \cdot p_{mr}.$$
Meanwhile, the distributional error is $\sumsub{mr} \ \eta(mr) \cdot p_{mr} \leq 0.499.$ In other words, the protocol produces a correct answer with probability $0.501$. 
\end{proof}

\subsection{Round-Preserving Gap-Majority Lemma}
\label{subsec:approach_oneway}

We now sketch the proof of \Cref{thm:round_gap_majority} which is an analogue of \Cref{thm:gap_majority} restricted to the class of $r$-round communication. As outlined in \Cref{sec:prelim}, we may assume that $r$ is odd and write an $r$-round protocol as $\pi = (X,Y,M,A,R)$, where $M = (M_1,\ldots,M_r)$ is the sequence of $r$ messages, and $A$ is the output produced solely by Bob after the $r$ rounds of communication. Since Bob can no longer communicate the answer $A$ back to Alice (otherwise it would have taken an extra round), the observer's variance is no longer a reliable proxy for distributional error. To accommodate this, we refine the notion of variances by distinguishing it into two types: observer's and Bob's.

\begin{definition}[Variances in $r$-round communication] \label{def:1-way-var} Given an $r$-round protocol $\pi = (X,Y,M,A,R)$ and a function $f$ over input pair $(X,Y)$. Define the \emph{observer's variance} of $\pi$ with respect to $f$ by the quantity:
$$\VarObsRound(\pi @ f) := \Esub{MRA} \ \Varsub{XY} \ [f(X,Y) \mid MRA]$$
and \emph{Bob's variance} of $\pi$ with respect to $f$ by the the quantity:
    $$\VarBobRound(\pi @ f) := \Esub{MRAY} \ \Varsub{X} \ [f(X,Y) \mid MRAY].$$
\end{definition}

In principle, the observer's variance remains the same as before, only now adapted to the bounded round setting: an external observer sees $MRA$ and measures the expected variance of $f(X,Y)$ conditioned on this view. Bob's variance, however, is new: it quantifies the expected variance of $f(X,Y)$ from Bob's perspective at the end of the protocol, given that he sees $MRAY$. When $f$ is a decision problem, this quantity turns out to be related to the distributional error.

Following the same avenue, we introduce the two key ingredients tailored to $r$-round communication. The variance bound (\Cref{lem:gapmaj_then_low_variance_1way}) parallels \Cref{lem:gapmaj_then_low_variance}, and for brevity, we omit its proof. The round-preserving direct-sum theorem (\Cref{lem:main_lemma_oneway}) is analogous to \Cref{lem:main_lemma}, except for a slightly larger additive loss in the information cost. Notably, the proof of \Cref{lem:main_lemma_oneway} is more technically involved than that of \Cref{lem:main_lemma} due to the round constraint. We present this proof in \Cref{sec:direct_sum_oneway}.

\begin{restatable}[Bounded-round version of \Cref{lem:gapmaj_then_low_variance}]{lem}{varianceboundoneway} Let $f: \mathcal{X} \times \mathcal{Y} \to \{\pm 1\}$ and let $\mu$ be an input distribution to $f$ satisfying $\E_\mu[f(x,y)] = 0$.  Suppose $\pi$ is an $r$-round protocol that computes $\GapMAJ_n \circ f^n$ with probability $0.99$ over $\mu^n$. Then $\VarObsRound(\pi @ f^{+n}) \leq 0.99n$.
\label{lem:gapmaj_then_low_variance_1way}
\end{restatable}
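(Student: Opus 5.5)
We reduce to the unbounded-round statement, \Cref{lem:gapmaj_then_low_variance}, by observing that nothing in its argument uses the fact that the answer is written on the transcript. The only thing used is that the observer's view determines a $\pm1$ guess that is correct with probability $0.99$. In the $r$-round setting the observer's view is $MRA$, and it contains Bob's output $A$. So we set $S := f^{+n}(X,Y)$ and $V := (M,R,A)$; the goal becomes $\E_V \Var[S\mid V]\le 0.99n$, given that $A$ is a function of $V$ and $\Pr[A = \GapMAJ_n(S)] \ge 0.99$ whenever $|S|\ge 0.01\sqrt n$.

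\textbf{Step 1: decompose the variance.} Under $\mu^n$ with $\E_\mu f=0$, the variable $S$ is a sum of $n$ i.i.d.\ uniform $\pm1$ variables. By \Cref{fact:cond_var},
$$\E_V \Var[S\mid V] = n - \E_V\big[\E[S\mid V]^2\big].$$
It therefore suffices to show $\E_V[\E[S\mid V]^2]\ge 0.01n$.

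\textbf{Step 2: correlate with the answer.} By Cauchy--Schwarz, using $A\in\{\pm1\}$,
$$\E_V\big[\E[S\mid V]^2\big]\ \ge\ \big(\E_V[A\cdot\E[S\mid V]]\big)^2 = \big(\E[AS]\big)^2.$$
We then lower bound $\E[AS]$ by $c\sqrt n$ for a suitable constant $c$. Split according to $|S|$:
$$\E[AS] = \E\big[|S|\big] - 2\,\E\big[|S|\cdot\mathbb 1[A\ne \mathrm{sgn} S]\big].$$
The bad event has two parts:
\begin{itemize}[label={}, leftmargin=*]
\item On $|S|<0.01\sqrt n$, the contribution is at most $0.01\sqrt n$.
\item On $|S|\ge 0.01\sqrt n$, the error event has probability at most $0.01$. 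By Cauchy--Schwarz its contribution is at most $\sqrt{\E[S^2]}\cdot\sqrt{0.01} = 0.1\sqrt n$.
\end{itemize}
By the CLT (or a Paley--Zygmund/Khintchine bound), $\E|S|\approx\sqrt{2n/\pi}\approx 0.8\sqrt n$. Hence
$$\E[AS]\gtrsim (0.8-0.02-0.2)\sqrt n \approx 0.58\sqrt n,$$
which gives $\E_V[\E[S\mid V]^2]\ge 0.3n > 0.01n$ with room to spare.

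\textbf{Main obstacle.} The constants need care, in particular a non-asymptotic lower bound on $\E|S|$ valid for all $n$. Khintchine's inequality gives $\E|S|\ge \sqrt{n/2}\approx 0.707\sqrt n$, which is still enough: $(0.707-0.22)^2 > 0.2$. The only bounded-round subtlety is to condition on $A$ as part of the observer's view, which is exactly how $\VarObsRound$ is defined. This is why the statement is phrased with $\VarObsRound$ rather than $\VarBobRound$.
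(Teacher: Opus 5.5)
Your proof is correct, but it takes a different route from the paper. The paper omits this proof and says it parallels the unbounded-round argument of Appendix~A.

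\textbf{The paper's route.} The observer's view is first coarsened to the answer alone, $\E_V\Var[S\mid V]\le\E_A\Var[S\mid A]$ by \Cref{fact:cond_var}. Then $\E[S\mid A=1]$ is lower-bounded by an extremal argument over $p_z=\Pr[A=1\mid Z=z]$. Under the error budget, the worst case puts $A=1$ on the correctly answered $z$ with the smallest $\langle z,\mathbf{1}_n\rangle$ and on the wrongly answered $z$ with the most negative $\langle z,\mathbf{1}_n\rangle$. The resulting binomial sums are evaluated by Gaussian approximation up to $o(2^n\sqrt n)$ errors.

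\textbf{Your route.} You reduce everything to the single correlation $\E[AS]$ via $\E_V[\E[S\mid V]^2]\ge(\E[AS])^2$. You then charge wrong answers by the gap width plus $\E[|S|\,\mathbbm{1}_E]\le\sqrt{n\Pr[E]}$, where $E$ is the error event on $|S|\ge0.01\sqrt n$. This is shorter and needs no numerics. It is also non-asymptotic: the appendix's estimates, e.g.\ $|\{z:\langle z,\mathbf{1}_n\rangle\ge0.01\sqrt n\}|\ge0.496\cdot2^n$, hold only for large $n$. You do not even need Szarek's sharp Khintchine constant, since H\"older gives $\E|S|\ge(\E S^2)^{3/2}/(\E S^4)^{1/2}\ge\sqrt{n/3}$, and $(0.577-0.22)^2>0.01$.

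\textbf{What the paper's route buys.} The extremal analysis prices wrong answers at their exact Gaussian-tail cost rather than through $\sqrt{\Pr[E]}$. That would matter only for sharper constants or larger error probabilities.

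\textbf{A side remark on the constants.} Your correlation step is tighter than the appendix's step $\E[S\mid A=1]\ge\E[S\,\mathbbm{1}[A=1]]$. Since $\E S=0$,
\[
2^{-n}\sum_z p_z\langle z,\mathbf{1}_n\rangle=\E[S\,\mathbbm{1}[A=1]]=\E[AS]/2\le\E|S|/2<0.5\sqrt n .
\]
So the appendix's claimed $\sum_z p_z\langle z,\mathbf{1}_n\rangle\ge0.6724\cdot2^n\sqrt n$ cannot hold. The bracket there, times $1/\sqrt{2\pi}$, is about $0.336$, so the constant is off by a factor of $2$. With the corrected value, $\E[AS]=2\,\E[S\,\mathbbm{1}[A=1]]\ge0.672\sqrt n$, and your inequality recovers the paper's final bound $n-(0.6724\sqrt n)^2$. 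Even the paper's own route with $0.336$ gives about $0.89n$, so the lemma is unaffected.
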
 

\begin{restatable}[Main technical lemma for $r$-round communication]{lem}{mainlemmaoneway}
Let $f: \mathcal{X} \times \mathcal{Y} \rightarrow \{\pm 1\}$ be an arbitrary function, and let $\mu$ be any input distribution over $\mathcal{X} \times \mathcal{Y}$. Suppose $\pi$ is an $r$-round protocol over the input distribution $\mu^n$ which Bob produces an answer from a discrete set $\A$. Then, for any $\Delta \in (0,1)$, there is an $r$-round protocol $\eta$ for computing $f$ over the input distribution $\mu$ such that 
\begin{equation*}
    \IC(\eta) \leq \frac{\IC(\pi)}{n}+O(|\A|\cdot \log{\frac{n}{\Delta}}) \hspace{7mm} \text{ and } \hspace{7mm}\VarBobRound(\eta @ f) \leq \frac{\VarObsRound(\pi @ f^{+n})}{n} + \Delta.
\end{equation*}
\label{lem:main_lemma_oneway}
\end{restatable}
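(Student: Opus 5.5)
The plan is to embed a single instance $(X,Y)\sim\mu$ into a random coordinate $J\in[n]$ of $\pi$, generating the other coordinates so that the round structure is kept. Standard direct-sum embeddings need Alice to privately sample $Y_{<J}$ and Bob to privately sample $X_{>J}$, which is round-preserving but does not control the variance. What must be controlled here is Bob's variance of $f(X_J,Y_J)$ given $MRAY_J$ plus whatever else Bob knows. We therefore reuse the decomposition behind \Cref{lem:main_lemma}. Write the martingale of conditional expectations of $f^{+n}$ obtained by revealing the coordinate pairs $(X_1,Y_1),\dots,(X_n,Y_n)$ one at a time on top of the observer's view $MRA$. Orthogonality of martingale increments and \Cref{fact:cond_var} then give
$$\sum_{j=1}^n \E\,\Var\big[f(X_j,Y_j)\,\big|\,MRA\,X_{<j}Y_{<j}\big]\le \VarObsRound(\pi@f^{+n}).$$
By the rectangle property (\Cref{prop:rectangle}), conditioning additionally on Bob-side information ($Y_{\ge j}$ private to Bob, $X_{<j},Y_{<j}$ public) can only reduce the variance, by \Cref{fact:cond_var} again. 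So if $\eta$ samples $J$ uniformly, publicly samples $(X_{<J},Y_{<J})$, lets Bob privately sample $Y_{>J}$, and has Alice privately sample $X_{>J}$, then Bob's variance is at most $\VarObsRound/n$. The problem is that Alice cannot sample $X_{>J}$ consistently with what $\pi$'s transcript would correlate with Bob's $Y_{>J}$.

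To fix this I would follow the standard round-preserving simulation. Publicly sample $X_{<J},Y_{<J}$ from the prior. Alice privately samples $X_{>J}$ and Bob privately samples $Y_{>J}$, both independently from $\mu$; this is legitimate because $\mu^n$ is a product distribution. The players then run $\pi$ with $R$ public, each computing their messages from their full $n$-coordinate input, and Bob outputs $A$. This preserves the $r$ rounds exactly. Its information cost is bounded by the chain rule: $I(M:X_J\mid Y_J R X_{<J}Y_{<J}Y_{>J}) $ averaged over $J$ sums to at most $I(M:X\mid YR)/n$ on Alice's side, and symmetrically on Bob's side. However, the public revealing of both $X_{<J}$ and $Y_{<J}$ breaks the chain-rule accounting, since it needs $X_{<J}$ known to Bob.

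The fix I would try first is the one used in \Cref{lem:main_lemma}, with its $O(\log\frac1\Delta)$ loss, which is presumably a correlated-sampling or rejection step. Instead of revealing $(X_{<J},Y_{<J})$ exactly, the players reveal only a discretized estimate of the partial sum $\sum_{j<J}f(X_j,Y_j)$ together with public $X_{<J}$ and private $Y_{<J}$, or some sufficient statistic for the martingale. Carrying out that revelation in the bounded-round setting, where $A$ is seen only by Bob and cannot be sent back, is what forces the variance to be defined relative to $A$. In that setting the revelation must be folded into the existing $r$ messages. Concretely, Bob's last message is replaced by his posterior conditioned on each possible answer $a\in\A$. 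Rounding these posteriors to precision $\Delta/n$ costs $O(|\A|\log\frac n\Delta)$ bits, which matches the claimed loss.

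I expect the main obstacle to be showing that these rounded posterior messages reproduce the variance inequality, i.e.\ that
$$\VarBobRound(\eta@f)\le \frac1n\sum_j \E\,\Var[f_j\mid MRA X_{<j}Y_{<j}]+\Delta,$$
while charging their information only $O(|\A|\log\frac n\Delta)$. The approach is to bound the rounding error in variance by $\Delta$ using $|f|\le1$, and to bound the entropy of the appended message by its description length.
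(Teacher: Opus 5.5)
\paragraph{There is a genuine gap, starting with the inequality your plan rests on.} The bound
$\sum_{j}\E\,\Var[f(X_j,Y_j)\mid MRAX_{<j}Y_{<j}]\le\VarObsRound(\pi@f^{+n})$ is false. The martingale increments of $\E[f^{+n}\mid MRA,(X,Y)_{\le j}]$ are not $f_j-\E[f_j\mid\cdot]$. They also contain the revision of the conditional mean of the future terms $f_{>j}$, and that revision can cancel against $f_j$.

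For a counterexample, take $n=2$, $f(x,y)=x$ with $x$ uniform, and the one-round protocol in which Alice sends $s=X_1+X_2$ and Bob outputs $A=s$. Then $\VarObsRound(\pi@f^{+2})=0$, but $\E\,\Var[f_1\mid MRA]=1/2$. Since \Cref{fact:cond_var} only lets you add conditioning, nothing built on this sum can reach $\VarObsRound/n$.

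The correct split (\Cref{clm:decomp_var_oneway}) conditions on the cross pattern $MRAX_{<n}Y_n$. By the rectangle property, $X_n\perp Y_{<n}$ given this view. Hence $\Var[f^{+n}\mid MRAX_{<n}Y_n]$ is exactly the sum of two terms: Bob's variance of $f_n$, and the Alice-view variance of $f^{+(n-1)}(X_{<n},Y_{<n})$. Then \Cref{fact:cond_var} bounds the total by $\VarObsRound(\pi@f^{+n})$. The rectangle property is used here for independence, so that variances add, not merely as ``more conditioning''. In the example above, both terms are $0$.

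\paragraph{The embedding and the extra message also fail as described.} Sampling $X_{>J}$ and $Y_{>J}$ privately and independently does not produce $\mu^n$ unless $\mu$ itself is a product distribution. Being a product over coordinates does not help, since $X_j$ and $Y_j$ are correlated. Each non-embedded coordinate needs one side public and the other sampled conditionally on it. By contrast, making $(X_{<J},Y_{<J})$ public does not break the chain-rule accounting, so that is not the obstacle.

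With $r$ odd, Alice sends $M_r$ and Bob's output $A$ depends on it. So Bob cannot append anything conditioned on $A=a$ without an extra round. The missing idea is that Alice, the non-outputting party, appends to $M_r$ her posterior means $Z_n^a=\E[f^{+}(X_{<n},Y_{<n})\mid MRX_{<n}Y_n,A=a]$ for every $a\in\A$, suitably rounded. She needs all $a$ because she does not know $A$.

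This message caps the observer's variance of the residual protocol $\pi^{<n}$ on $\mu^{n-1}$ (with $Y_n$ public and $X_n$ Alice-private) at the Alice-view term above. That yields $\VarBobRound(\pi^n@f)+\VarObsRound(\pi^{<n}@f^{+(n-1)})\le\VarObsRound(\pi@f^{+n})+O(\Delta)$ at an information cost of only $O(|\A|\log\frac n\Delta)$. Iterating this peel-off $n-1$ times and averaging over coordinates gives the lemma.

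Your proposal leaves exactly this step, which you flag as the main obstacle, unresolved. Also, \Cref{lem:main_lemma} is not a correlated-sampling argument: its additive loss likewise comes only from rounding an appended posterior-mean message.
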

\vspace{-1.5mm}

We conclude by combining \Cref{lem:gapmaj_then_low_variance_1way} and \Cref{lem:main_lemma_oneway} to prove our second main result.

\begin{proof}[Proof of \Cref{thm:round_gap_majority}] Recall that $\pi$ is an $r$-round protocol that computes $\GapMAJ_n \circ f^n$ over $\mu^n$ with probability $0.99$. By \Cref{lem:gapmaj_then_low_variance_1way}, we have $\VarObsRound(\pi @ f^{+n}) \leq 0.99n$. Then, applying \Cref{lem:main_lemma_oneway} with $\Delta = 0.008$ yields an $r$-round protocol $\eta = (X,Y,M,A,R)$ over inputs $(X,Y) \sim \mu$ such that $\IC(\eta) \leq \frac{\IC(\pi)}{n} + O(\log{n})$ and $\VarBobRound(\eta @ f) \leq \frac{\VarBobRound(\pi @ f^{+n})}{n} + 0.008 \leq 0.998$. Here, we use the fact that the answer space is $\A = \{\pm 1\}$ (i.e. output to $\GapMAJ_n \circ f^n$). Expanding Bob's variance, we have
$$\Esub{MRAY} \ \Varsub{X} \ [f(X,Y) \mid MRAY] \leq 0.998.$$

Now consider the following $r$-round protocol over inputs $(X,Y) \sim \mu$: the players first execute $\eta$, then Bob outputs $f(X,Y)$ as the more-likely answer between $\{\pm1\}$ among the posterior distribution $\eta(f(X,Y) \mid MRAY)$. The information cost of this protocol remains at $\IC(\eta) \leq \frac{\IC(\pi)}{n} + O(\log{n})$.

It only remains to argue the correctness probability of such protocol. With probability $\eta(mray)$, the protocol realizes $MRAY=mray$. Denote $0 \leq p_{mray} \leq 1/2$ be such that the distribution $\eta(f(X,Y) \mid MRAY)$ is $\{\pm 1\}$ with probabilities $\{ p_{mray}, 1- p_{mray}\}$. Then, we have 
$$0.998 \geq \Esub{MRAY} \ \Varsub{X} \ [f(X,Y) \mid MRAY] = \sum_{mray} \eta(mray) \cdot 4p_{mray}(1-p_{mray}) \geq 2 \cdot \sum_{mray} \eta(mray) \cdot p_{mray}.$$
Meanwhile, the distributional error is $\sumsub{mray} \eta(mray) \cdot p_{mray} \leq 0.499.$ In other words, the protocol produces a correct answer with probability at least $0.501$.
\end{proof}

\subsection{Proof Outline}

Three proofs are deferred from this section. The proof of \Cref{lem:gapmaj_then_low_variance} is provided in Appendix \ref{appendix:BGW_reprove}. The proof of \Cref{lem:main_lemma} and \Cref{lem:main_lemma_oneway} are provided in \Cref{sec:direct_sum_var} and \Cref{sec:direct_sum_oneway} respectively.

\section{Direct-Sum for Variance and Information}
\label{sec:direct_sum_var}

In this section, we prove \Cref{lem:main_lemma}, thereby completing the proof of the gap-majority lemma (\Cref{thm:gap_majority}). We adopt the setting of the lemma: let $f: \mathcal{X} \times \mathcal{Y} \rightarrow \{\pm 1\}$ be an arbitrary base function, and let $\mu$ be an arbitrary input distribution. Denote by $\pi = (X,Y,M,R)$ a protocol over the product distribution $\mu^n$, where $X = (X_1,\ldots,X_n)$ and $Y = (Y_1,\ldots,Y_n)$ with each $(X_i,Y_i)$ drawn independently from $\mu$. Let $\Delta \in (0,1)$ be an arbitrary additive loss parameter.

A key component of our proof is an adaptation of the \emph{protocol decomposition} technique \cite{Yu22, SawettamalyaY25}. Given a protocol $\pi$ operating on $n$ input pairs drawn according to $\mu^n$, a \emph{decomposition} is a procedure that yields two smaller-yet-meaningful subprotocols, $\pi_0$ and $\pi_1$, each operating on $n/2$ input pairs drawn according to $\mu^{n/2}$. As employed in \cite{Yu22, SawettamalyaY25}, their decompositions are carefully designed to trace how \emph{information} is distributed across different parts of the input. A key distinction in our work is that our decomposition is tailored to handle not only the distribution of information, but also the distribution of \emph{observer's variances} across all the $n$ input pairs.

Particularly, we describe the protocols $\pi_0$ and $\pi_1$ (yielded by our decomposition) in \Cref{pi0,pi1}. Each protocol operates on $n/2$ pairs of input, denoted $(X', Y')$, drawn from $\mu^{n/2}$.

\begin{figure}[H]
    \centering
    \begin{tabular}{|p{15.7cm}|}
    \hline  ~\\
    \multicolumn{1}{|c|}{\textbf{A protocol $\pi_0$ over the input distribution $(X',Y') \sim \mu^{n/2}$}} \\ 
    \\ \hline
    \vspace{-2.85mm}
    % \textbf{Procedures:}
    \begin{enumerate}
        \item Alice pretends that her input consists of $n$ coordinates, and writes it as $X = (X_0,X_1)$ where $X_0$ and $X_1$ denotes the first and last $n/2$  coordinates of $X$ respectively. 
        \vspace{-1.1mm}
        \item Bob pretends that his input consists of $n$ coordinates, and writes it as $Y = (Y_0,Y_1)$ where $Y_0$ and $Y_1$ denotes the first and last $n/2$ coordinates of $Y$ respectively.
        \vspace{-1.1mm}
        \item Alice embeds $X'$ into $X_0$, and Bob embeds $Y'$ into $Y_0$.
        \vspace{-1.1mm}
        \item Players use public randomness to jointly draw $Y_1$. Alice further privately draws $X_1$.
        \vspace{-1.1mm}
        \item Alice and Bob simulate $\pi$ with respect to the inputs $(X,Y)$.
        \vspace{-1.1mm}
        \item Upon completion, Alice computes $Z_0 = \Esub{Y_0}\left[f^+(X_0, Y_0) \mid MRX_0Y_1\right]$
        and sends it to Bob.
    \end{enumerate}
    \vspace{0mm}
    \\
    \hline
    \end{tabular}
    \caption{The protocol $\pi_0$ resulting from decomposing $\pi$.}
    \label{pi0}
\end{figure}

\begin{figure}[H]
    \centering
    \begin{tabular}{|p{15.7cm}|}
    \hline ~\\
    \multicolumn{1}{|c|}{\textbf{A protocol $\pi_1$ over the input distribution $(X',Y') \sim \mu^{n/2}$}} \\ 
    \\ \hline
    \vspace{-2.85mm}
    % \textbf{Procedures:}
    \begin{enumerate}
        \item Alice pretends that her input consists of $n$ coordinates, and writes it as $X = (X_0,X_1)$ where $X_0$ and $X_1$ denotes the first and last $n/2$ coordinates of $X$ respectively. 
        \vspace{-1.1mm}
        \item Bob pretends that his input consists of $n$ coordinates, and writes it as $Y = (Y_0,Y_1)$ where $Y_0$ and $Y_1$ denotes the first and last $n/2$ coordinates of $Y$ respectively. 
        \vspace{-1.1mm}
        \item Alice embeds $X'$ into $X_1$, and Bob embeds $Y'$ into $Y_1$.
        \vspace{-1.1mm}
        \item Players use public randomness to jointly draw $X_0$. Bob further privately draws $Y_0$.
        \vspace{-1.1mm}
        \item Alice and Bob simulate $\pi$ with respect to the inputs $(X,Y)$.
        \vspace{-1.1mm}
        \item Upon completion, Bob computes $Z_1 = \Esub{X_1}\left[f^+(X_1, Y_1) \mid MRX_0Y_1\right]$
        and sends it to Alice.
    \end{enumerate}
    \vspace{0mm}
    \\
    \hline
    \end{tabular}
    \caption{The protocol $\pi_1$ resulting from decomposing $\pi$.}
    \label{pi1}
\end{figure}

Equivalently, we may adopt the following distributional view of each protocol. Note that we write $(M,Z_i)$ to denote a transcript $M$ followed by an extra message $Z_i$.

\begin{itemize}
    \item $\pi_0 = (X_0, Y_0, (M,Z_0), RY_1)$ where $Z_0 = \Esub{Y_0}\left[f^+(X_0, Y_0) \mid MRX_0Y_1\right]$ computable by Alice. This is because $Z_0$ is a deterministic function of $MRX_0Y_1$ in which she knows by the end of $\pi$.
    \item $\pi_1 = (X_1, Y_1, (M,Z_1), RX_0)$ where $Z_1 = \Esub{X_1}\left[f^+(X_1, Y_1) \mid MRX_0Y_1\right]$ computable by Bob. This is because $Z_1$ is a deterministic function of $MRX_0Y_1$ in which he knows by the end of $\pi$.
\end{itemize}

For the moment, we assume that Alice and Bob can communicate the exact values of $Z_0$ and $Z_1$, and that these variables take values from some discrete space $\mathcal{Z}$. Of course, this assumption does not necessarily hold in general. The purpose of this simplification is purely expository; we will remove this assumption shortly, incurring only minor losses.

The primary advantage of this decomposition is that it enables a clean algebraic decomposition of our two key measures of interest: the information cost and the observer's variance.

\paragraph{Decomposition of Information Costs.} We first show that under the decomposition, the information costs decompose linearly with some small additive loss.
\begin{claim} $\IC(\pi_0) + \IC(\pi_1) \leq \IC(\pi) + 4 \cdot \log_2{|\mathcal{Z}|}$.
\label{clm:IC_decomp}
\end{claim}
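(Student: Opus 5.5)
We will expand each of the four information terms of $\pi_0$ and $\pi_1$ with the chain rule, treating the extra message $Z_i$ separately from $M$. By definition,
\begin{align*}
\IC(\pi_0) &= I(MZ_0 : X_0 \mid Y_0 R Y_1) + I(MZ_0 : Y_0 \mid X_0 R Y_1),\\
\IC(\pi_1) &= I(MZ_1 : X_1 \mid Y_1 R X_0) + I(MZ_1 : Y_1 \mid X_1 R X_0).
\end{align*}
For each term, the chain rule gives $I(MZ_i : \cdot \mid \cdot) = I(M : \cdot \mid \cdot) + I(Z_i : \cdot \mid M,\cdot) \le I(M:\cdot\mid\cdot) + H(Z_i) \le I(M:\cdot\mid\cdot) + \log_2|\mathcal{Z}|$. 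This produces the four additive terms of $\log_2|\mathcal{Z}|$. It then remains to show that the sum of the four $M$-only terms is at most $\IC(\pi) = I(M:X\mid YR) + I(M:Y\mid XR)$.

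\textbf{Bob-side terms.} We pair $I(M:X_0\mid Y_0Y_1R)$ with $I(M:X_1\mid X_0Y_1R)$. Since $Y=(Y_0,Y_1)$, the first equals $I(M:X_0\mid YR)$. For the second, we will show it is at most $I(M:X_1\mid X_0 YR)$. Then the chain rule gives the sum $I(M:X\mid YR)$.

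The comparison uses the standard fact that $I(A:B\mid C) \le I(A:B\mid CD)$ whenever $I(B:D\mid C)=0$, or more generally whenever $B\perp D\mid C$ holds before revealing $A$, together with the rectangle property. Concretely, take $A=M$, $B=X_1$, $C=X_0Y_1R$ and $D=Y_0$. Expanding $I(M:X_1Y_0\mid C)$ in two ways gives
\[
I(M:X_1\mid C) + I(M:Y_0\mid C X_1) = I(M:Y_0\mid C) + I(M:X_1\mid CY_0).
\]
This alone is not immediately the desired inequality. Instead we use the equivalent form $I(M:X_1\mid C) \le I(M:X_1\mid CY_0)$, which holds iff $I(X_1:Y_0\mid MC) \le I(X_1:Y_0\mid C)$. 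The right-hand side is $0$ because inputs are independent across coordinates and sides under $\mu^n$ and $R$ is independent of the inputs. The left-hand side is also $0$ by \Cref{prop:rectangle} with $P=\{1\text{st half}\}$, $Q=\{2\text{nd half}\}$: it states $X_1\perp Y_0\mid MRX_0Y_1$. So the inequality holds (with equality).

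\textbf{Alice-side terms.} Symmetrically, pair $I(M:Y_1\mid X_0X_1R)=I(M:Y_1\mid XR)$ with $I(M:Y_0\mid X_0Y_1R)$. The latter equals $I(M:Y_0\mid XY_1R)$ by the same rectangle-property argument, which swaps the roles of $X_1$ and $Y_0$. The chain rule then sums these to $I(M:Y\mid XR)$.

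\textbf{Main obstacle.} The only delicate step is this conditioning swap. One must check that both the prior independence $X_1\perp Y_0\mid X_0Y_1R$ and the posterior independence given $M$ hold exactly. They do, by the product structure of $\mu^n$ and \Cref{prop:rectangle}. One should also note that in $\pi_0$ the publicly drawn $Y_1$ and in $\pi_1$ the public $X_0$ are part of the public randomness, so the conditioning sets above match the definition of $\IC$. Combining the pieces gives
\[
\IC(\pi_0)+\IC(\pi_1)\le \IC(\pi)+4\log_2|\mathcal{Z}|.
\]
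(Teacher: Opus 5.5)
Your proof is correct and follows the paper's argument: you peel off each $Z_i$ at a cost of $\log_2|\mathcal{Z}|$ per mutual-information term, use $X_1 \perp Y_0 \mid MRX_0Y_1$ to enlarge the conditioning from $X_0Y_1R$ to $XY_1R$ or $X_0YR$, and sum via the chain rule. There are two slips, and neither affects your conclusion because both mutual informations involved are zero: that independence is the rectangle property with $P$ the \emph{second} half and $Q$ the first (not the reverse), and the correct equivalence is $I(M:X_1\mid C)\le I(M:X_1\mid CY_0)$ iff $I(X_1:Y_0\mid C)\le I(X_1:Y_0\mid MC)$, so the prior independence alone already gives the inequality you need.
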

\begin{proof} Recall that $\pi_0 = (X_0, Y_0, (M,Z_0), RY_1)$. We can then write:
\begin{align*}
\IC(\pi_0) 
&= I(MZ_0 : X_0 \mid RY) + I(MZ_0 : Y_0 \mid RX_0Y_1) \\
&\le I(M : X_0 \mid RY) + I(M : Y_0 \mid RX_0Y_1) + 2 \log_2 |\mathcal{Z}| \\
&= I(M : X_0 \mid RY) + I(M : Y_0 \mid RXY_1) + 2 \log_2 |\mathcal{Z}|, \tag{rectangle property}
\end{align*}
where the last equality follows because $X_1 \perp Y_0 \mid MRX_0Y_1$.

Similarly, we also have $\IC(\pi_1) \le I(M : X_1 \mid RX_0Y) + I(M : Y_1 \mid RX) + 2 \log_2 |\mathcal{Z}|.$ Adding the two bounds and applying the chain rule for mutual information yields $\IC(\pi_0) + \IC(\pi_1) 
\le I(M : X \mid RY) + I(M : Y \mid RX) + 4 \log_2 |\mathcal{Z}| 
= \IC(\pi) + 4 \log_2 |\mathcal{Z}|$.
\end{proof}

\paragraph{Decomposition of Variances.} Next, we show that the observer's variance decomposes linearly. We begin by reinterpreting the observer's variance of each protocol.
\begin{claim} We have  
$$\VarObs(\pi_0 @ f^{+n/2}) = \Esub{MRX_0Y_1} \ \Varsub{Y_0}\left[f^+(X_0,Y_0) \mid MRX_0Y_1\right]$$ and $$\VarObs(\pi_1 @ f^{+n/2}) = \Esub{MRX_0Y_1} \ \Varsub{X_1}\left[f^+(X_1,Y_1) \mid MRX_0Y_1\right].$$
\label{clm:var_pi0_pi1}
\end{claim}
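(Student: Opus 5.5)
Since $\pi_0 = (X_0, Y_0, (M,Z_0), RY_1)$, the observer of $\pi_0$ sees the transcript $(M,Z_0)$ together with the public randomness $RY_1$. By definition,
$$\VarObs(\pi_0 @ f^{+n/2}) = \Esub{MZ_0RY_1} \ \Varsub{X_0Y_0}\left[f^+(X_0,Y_0) \mid MZ_0RY_1\right].$$
The plan is to rewrite this conditional variance so that it matches the target expression, in which $X_0$ is fixed and only $Y_0$ varies.

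\textbf{Step 1 (law of total variance).} Condition additionally on $X_0$ inside each posterior of $MZ_0RY_1$:
$$\Var\left[f^+(X_0,Y_0)\mid MZ_0RY_1\right] = \E_{X_0}\Var_{Y_0}\left[f^+\mid MZ_0RX_0Y_1\right] + \Var_{X_0}\left(\E_{Y_0}\left[f^+\mid MZ_0RX_0Y_1\right]\right).$$

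\textbf{Step 2 (removing $Z_0$).} $Z_0$ is a deterministic function of $MRX_0Y_1$. Hence, once we condition on $MRX_0Y_1$, further conditioning on $Z_0$ changes nothing. So $\E_{Y_0}[f^+\mid MZ_0RX_0Y_1] = Z_0$, and likewise the inner variance equals $\Var_{Y_0}[f^+\mid MRX_0Y_1]$.

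\textbf{Step 3 (killing the second term).} The second term becomes $\Var_{X_0}(Z_0 \mid MZ_0RY_1)$. This is zero, because $Z_0$ is part of the conditioning.

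\textbf{Step 4 (averaging).} Averaging over $MZ_0RY_1$ and then over $X_0$ gives an expectation over $MRX_0Y_1$, since $Z_0$ is determined by these variables. This yields the first identity. Note that the joint distribution of $(X,Y,M,R)$ in $\pi_0$ is exactly that of $\pi$ on $\mu^n$, because the simulation samples $X_1,Y_1$ from their correct marginals. So all expectations can be taken under $\pi$.

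\textbf{The identity for $\pi_1$.} The argument is symmetric. Now $\pi_1 = (X_1,Y_1,(M,Z_1),RX_0)$, and we condition on $Y_1$ in place of $X_0$. Here $Z_1 = \E_{X_1}[f^+(X_1,Y_1)\mid MRX_0Y_1]$ is a function of $MRX_0Y_1$, which again kills the between-term.

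\textbf{Main subtlety.} The key point is that including the extra message $Z_i$, namely the posterior mean given the speaker's full view, makes the observer's posterior mean coincide with that speaker's posterior mean. This is exactly what collapses the total-variance decomposition to a single term. Beyond that, the only care needed is the bookkeeping of which variables are treated as public randomness in each subprotocol. The rectangle property is not needed for this claim.
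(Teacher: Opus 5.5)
Your proposal is correct and is essentially the paper's argument. Both hinge on the fact that, because $Z_0$ is a function of $MRX_0Y_1$, the observer's posterior mean $\E[f^+(X_0,Y_0)\mid MRY_1Z_0]$ equals $Z_0$; you use this through the law of total variance (the between-$X_0$ term vanishes), whereas the paper writes both sides as $\E[f^+(X_0,Y_0)^2]-\E[Z_0^2]$, which is the same computation.
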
 
\begin{proof} By symmetry, we will only derive the first equation. Recall that $\pi_0 = (X_0, Y_0, (M,Z_0), RY_1)$. We first expand
\begin{align}
    \VarObs(\pi_0 @ f^{+n/2}) & = \Esub{MRY_1Z_0} \ \Varsub{X_0Y_0} \left[ f^{+}(X_0, Y_0)\mid MRY_1Z_0\right] \nonumber \\
    & = \Esub{X_0Y_0} \left[f^+(X_0, Y_0)^2\right] - \sum_{mry_1z_0} \pi(mry_1z_0) \cdot \left[\Esub{X_0Y_0} \left[f^+(X_0, Y_0) \mid mry_1z_0 \right]\right]^2 \label{eq:expand_gamma0}
\end{align}
We further unpack the final expectation term:
\begin{align*}
    \Esub{X_0Y_0} \left[f^+(X_0, Y_0) \mid mry_1z_0\right] & = \sum_{x_0} \pi(x_0 \mid mry_1z_0) \cdot \Esub{Y_0} \left[f^+(x_0, Y_0) \mid mrx_0y_1z_0\right] \\
    & = \sum_{x_0} \pi(x_0 \mid mry_1z_0) \cdot \Esub{Y_0} \left[f^+(x_0, Y_0) \mid mrx_0y_1\right] \tag{$z_0$ is deterministic given $mrx_0y_1$} \\
    & = \sum_{x_0} \pi(x_0 \mid mry_1z_0) \cdot z_0 \tag{definition of $z_0$} \\
    & = z_0.
\end{align*}
Plugging this back into \Cref{eq:expand_gamma0}, we obtain
\begin{align}
    \VarObs(\pi_0 @ f^{+n/2})
    & = \Esub{X_0Y_0} \left[f^+(X_0, Y_0)^2\right] - \sum_{mry_1z_0} \pi(mry_1z_0) \cdot z_0^2 \nonumber \\
    & = \Esub{X_0Y_0} \left[f^+(X_0, Y_0)^2\right] - \E\left[Z_0^2\right] \label{eq:varfinal}
\end{align}

Recall again that $Z_0= \Esub{Y_0}\left[f^+(X_0, Y_0) \mid MRX_0Y_1\right]$ is a deterministic function of $MRX_0Y_1$. Hence, we can instead write $\E\left[Z_0^2\right] = \sumsub{mrx_0y_1} \pi(mrx_0y_1) \cdot \left[\Esub{Y_0}\left[f(x_0,Y_0) \mid mrx_0y_1\right]\right]^2.$
Plugging this back into \Cref{eq:varfinal}, we achieve the claim.
\end{proof}

As a result, we obtain a linear decomposition of observer's variances.
\begin{claim} $\VarObs(\pi_0 @ f^{+n/2}) +\VarObs(\pi_1 @ f^{+n/2}) \leq \VarObs(\pi @ f^{+n})$
\label{clm:var_decomp}
\end{claim}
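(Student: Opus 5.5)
Starting from \Cref{clm:var_pi0_pi1}, I would write both variances in the "second moment minus squared conditional mean" form of \Cref{fact:cond_var}, all conditioned on the common view $W := MRX_0Y_1$. Set $S_0 := f^+(X_0,Y_0)$ and $S_1 := f^+(X_1,Y_1)$, so that $f^{+n}(X,Y) = S_0 + S_1$. Then
\begin{align*}
\VarObs(\pi_0 @ f^{+n/2}) + \VarObs(\pi_1 @ f^{+n/2}) = \E_W\big[\Var[S_0\mid W] + \Var[S_1 \mid W]\big].
\end{align*}
Given $W$, the only randomness left in $S_0$ is $Y_0$, and the only randomness left in $S_1$ is $X_1$.

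The key step is to use the rectangle property (\Cref{prop:rectangle}) with $P = \{\text{second half}\}$ and $Q = \{\text{first half}\}$. It gives $X_1 \perp Y_0 \mid MRX_0Y_1$. Hence, conditioned on $W$, $S_0$ is a function of $Y_0$ alone and $S_1$ is a function of $X_1$ alone, so they are conditionally independent. Their covariance therefore vanishes, and
\[
\Var[S_0\mid W] + \Var[S_1\mid W] = \Var[S_0+S_1 \mid W] = \Var[f^{+n}(X,Y) \mid MRX_0Y_1].
\]

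It then remains to compare $\E_{MRX_0Y_1}\Var[f^{+n}\mid MRX_0Y_1]$ with $\E_{MR}\Var[f^{+n}\mid MR] = \VarObs(\pi @ f^{+n})$. Conditioning on more information can only reduce expected variance: this is exactly \Cref{fact:cond_var}, applied conditionally on $MR = mr$ with $B = X_0Y_1$, and then averaged over $mr$. This yields the claimed inequality.

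The main care point is the conditional-independence step: one must check that the partition used in \Cref{prop:rectangle} is exactly the one giving $X_1\perp Y_0$ given $MRX_0Y_1$, as already used in \Cref{clm:IC_decomp}. The rest is routine algebra.
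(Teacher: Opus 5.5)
Your proposal is correct and follows essentially the same route as the paper. You rewrite both variances via \Cref{clm:var_pi0_pi1} as expectations over $MRX_0Y_1$, use the rectangle property $X_1 \perp Y_0 \mid MRX_0Y_1$ to merge the two conditional variances into $\Var[f^{+n}(X,Y)\mid MRX_0Y_1]$, and finish by applying \Cref{fact:cond_var} to drop the extra conditioning on $X_0Y_1$.
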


\begin{proof} Following \Cref{clm:var_pi0_pi1}, we derive:
\begin{align*}
    & \VarObs(\pi_0 @ f^{+n/2}) + \VarObs(\pi_1 @ f^{+n/2}) \\
    & =  \Esub{MRX_0Y_1} \ \Varsub{Y_0}\left[f^+(X_0,Y_0) \mid MRX_0Y_1\right] + \Esub{MRX_0Y_1} \ \Varsub{X_1}\left[f^+(X_1,Y_1) \mid MRX_0Y_1\right] \\
    & = \Esub{MRX_0Y_1} \left[\Varsub{Y_0}\left[f^+(X_0,Y_0) \mid MRX_0Y_1\right] + \Varsub{X_1}\left[f^+(X_1,Y_1) \mid MRX_0Y_1\right]\right] 
\end{align*}
Using the rectangle property (\Cref{prop:rectangle}), we have $X_1 \perp Y_0 \mid MRX_0Y_1$, therefore $f^+(X_0,Y_0) \perp f^+(X_1,Y_1) \mid MRX_0Y_1$. We then finish the derivation:
    \begin{align*}
    & = \Esub{MRX_0Y_1} \ \Varsub{X_1Y_0}\left[f^+(X,Y) \mid MRX_0Y_1\right] \tag{$f^+(X_0,Y_0) \perp f^+(X_1,Y_1) \mid MRX_0Y_1$} \\
    & \leq \Esub{MR} \ \Varsub{XY} \left[f^+(X,Y) \mid MR\right] \tag{\Cref{fact:cond_var}} \\
    & = \VarObs(\pi @ f^{+n})
\end{align*}
as desired.
\end{proof}

\paragraph{Lifting the assumptions about $Z_0, Z_1$.} Recall that in the protocol $\pi_0$, we require Alice to compute $Z_0 = \Esub{Y_0}\left[f^+(X_0, Y_0) \mid MRX_0Y_1\right]$ and communicate it \emph{exact} value to Bob. To lift the assumptions, we now instead let Alice sends the value of $Z_0^2$ rounded down to the nearest multiple of $\frac{\Delta}{32}$. We denote such value by a random variable $\hat{Z}_0^2$. Bob also replaces $Z_1$ with $\hat{Z}_1^2$, defined analogously. The implications of such rounding scheme are as follows.

\begin{itemize}
    \item Observe that the range of $Z_0^2$ and $Z_1^2$ is $[0,n^2]$. This allows us to set a discrete space $\mathcal{Z}$ to be $\{0, \frac{\Delta}{32}, \ldots \frac{\Delta}{32}, \ldots, \frac{\Delta}{32} \cdot \lfloor n^2 \cdot \frac{32}{\Delta}\rfloor\}$. As a result, $|\mathcal{Z}| = O(\frac{n^2}{\Delta})$. Plugging this into \Cref{clm:IC_decomp}, the decomposition of information costs may incur an additive loss of at most $O(\log{\frac{n}{\Delta}})$ bits.
    \item In the variance analysis of \Cref{clm:var_pi0_pi1}, the exact value of $Z_0^2$ 
    (and symmetrically, $Z_1^2$) was only used in \Cref{eq:varfinal}. 
    Replacing $Z_0^2$ by $\hat{Z}_0^2$ introduces at most an additive error of $\frac{\Delta}{32}$ 
    to $\VarObs(\pi_0 @ f^{+n/2})$. 
    The same argument applies to $\pi_1$, giving a total additive loss of at most $\frac{\Delta}{16}$ 
    to the combined variance decomposition.
\end{itemize}

In summary, replacing the exact values $Z_0$ and $Z_1$ by their squared and rounded counterparts $\hat{Z}_0^2$ and $\hat{Z}_1^2$ preserves the linear decomposition properties, at the cost of an additive $O(\log \tfrac{n}{\Delta})$ in information and $\tfrac{\Delta}{16}$ in variance. This yields the following decomposition lemma.

\begin{lemma}[One-step decomposition lemma] Let $f: \mathcal{X} \times \mathcal{Y} \rightarrow \{\pm 1\}$ be an arbitrary function, and let $\mu$ be any input distribution over $\mathcal{X} \times \mathcal{Y}$. Suppose $\pi$ is a protocol over the input distribution $\mu^n$. Then, for any $\Delta \in (0,1)$, there exist protocols $\pi_0$ and $\pi_1$ over input distribution $\mu^{n/2}$ such that 
\begin{align*}
    \IC(\pi_0) + \IC(\pi_1) \leq \IC(\pi) + O(\log{\frac{n}{\Delta}}) \hspace{2mm} \text{ and } \hspace{2mm} \VarObs(\pi_0@f^{+n/2}) + \VarObs(\pi_1@f^{+n/2}) \leq \VarObs(\pi@f^{+n}) + \frac{\Delta}{16}.
\end{align*}
\end{lemma}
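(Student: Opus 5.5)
The plan is to take $\pi_0$ and $\pi_1$ to be the protocols of \Cref{pi0,pi1}, with one change: the final messages $Z_0, Z_1$ are replaced by the rounded squares $\hat Z_0^2, \hat Z_1^2$. Here $\hat Z_0^2$ is $Z_0^2$ rounded down to a multiple of $\Delta/32$. Both are deterministic functions of $MRX_0Y_1$, so Alice (respectively Bob) can compute and send them. Since $|f^+|\le n/2\le n$, both take values in a grid $\mathcal{Z}$ with $|\mathcal{Z}|=O(n^2/\Delta)$. We then verify the two inequalities separately.

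\textbf{Information.} We rerun the proof of \Cref{clm:IC_decomp} with $\hat Z_0^2$ in place of $Z_0$. That argument only used $I(M\hat Z_0^2 : X_0\mid RY)\le I(M:X_0\mid RY)+H(\hat Z_0^2)$ and the analogous bound for the $Y_0$ term. It then used the rectangle property (\Cref{prop:rectangle}) to rewrite $I(M:Y_0\mid RX_0Y_1)=I(M:Y_0\mid RXY_1)$, and finally the chain rule. Since $H(\hat Z_0^2)\le\log_2|\mathcal{Z}|$, the total additive loss is $4\log_2|\mathcal{Z}|=O(\log\frac{n}{\Delta})$.

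\textbf{Variance.} This is the step needing care, because with rounding the posterior mean is no longer the transmitted message. I would avoid recomputing the posterior mean and instead apply \Cref{fact:cond_var} directly. The observer of $\pi_0$ sees $MRY_1\hat Z_0^2$. Because $\hat Z_0^2$ is a function of $MRX_0Y_1$, conditioning on $MRX_0Y_1$ is finer, so \Cref{fact:cond_var} gives
\[
\VarObs(\pi_0@f^{+n/2}) = \E[f^+(X_0,Y_0)^2] - \E\big[\E[f^+\mid MRY_1\hat Z_0^2]^2\big] \le \E[f^+(X_0,Y_0)^2].
\]
This crude bound is too weak, so the sharper comparison is needed. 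Ideally $\VarObs(\pi_0)\le \E[f^{+2}]-\E[Z_0^2]+\Delta/32$. The natural quantity to compare against is $\E[\hat Z_0^2]$, using the pointwise bound $Z_0^2-\Delta/32\le \hat Z_0^2\le Z_0^2$. Concretely, I would follow the paper's expository step: the variance identity \eqref{eq:varfinal} used $Z_0$ only through $\E[Z_0^2]$. Treating the message as revealing $Z_0^2$ up to $\Delta/32$ (with the sign of $Z_0$ handled by also sending one extra bit, costing $O(1)$ information), the posterior mean given the observer's view is pinned to within the rounding cell. Its square then differs from $Z_0^2$ by at most $\Delta/32$ in expectation. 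So I would send $(\mathrm{sign}(Z_0),\hat Z_0^2)$, reconstruct $\tilde Z_0=\mathrm{sign}(Z_0)\sqrt{\hat Z_0^2}$, and argue as in \Cref{clm:var_pi0_pi1}. The posterior mean of $f^+(X_0,Y_0)$ given $MRY_1\tilde Z_0$ is $\E[Z_0\mid MRY_1\tilde Z_0]$, and by Jensen its square's expectation is at least $\E[\,\E[Z_0\mid\cdot]^2]$. Bounding this below by $\E[Z_0^2]-\Delta/32$ requires that $Z_0$ vary little within a cell of fixed $\tilde Z_0$. This is the main obstacle: the cells are uniform in $Z_0^2$, not in $Z_0$, so near $0$ the width in $Z_0$ is $\sqrt{\Delta/32}$. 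That width is still fine, because $\Var[Z_0\mid \text{cell}]\le \E[(Z_0-\tilde Z_0)^2]\le \E[|Z_0^2-\tilde Z_0^2|]\le\Delta/32$, using $(a-b)^2\le|a^2-b^2|$ for same-sign $a,b$.

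With this, $\VarObs(\pi_0@f^{+n/2})\le \E[f^+(X_0,Y_0)^2]-\E[Z_0^2]+\Delta/32$, and symmetrically for $\pi_1$. The proof of \Cref{clm:var_decomp} then goes through verbatim on the unrounded quantities, via \Cref{clm:var_pi0_pi1}'s right-hand sides, the rectangle-property independence $X_1\perp Y_0\mid MRX_0Y_1$, and \Cref{fact:cond_var}. This gives the total $\VarObs(\pi@f^{+n})+\Delta/16$.
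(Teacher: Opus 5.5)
Your proposal is correct. It coincides with the paper's proof except in the rounding step: the same protocols $\pi_0,\pi_1$, the same information accounting via the chain rule and rectangle property (\Cref{clm:IC_decomp}), and the same variance decomposition (\Cref{clm:var_pi0_pi1}, \Cref{clm:var_decomp}) applied to the unrounded quantities.

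In the rounding step your version is the sound one. The paper has Alice send only $\hat Z_0^2$ and asserts this perturbs \eqref{eq:varfinal} by at most $\Delta/32$. But \eqref{eq:varfinal} rests on $\E[f^+(X_0,Y_0)\mid mry_1z_0]=z_0$, which requires the observer to see $Z_0$ itself. With only the unsigned square, the posterior mean $\E[Z_0\mid MRY_1\hat Z_0^2]$ can vanish by sign cancellation.

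For example, take $n=2$, $f(x,y)=xy$ with $\mu$ uniform on $\{\pm1\}^2$. Let $\pi$ have Alice send $a=X_0X_1$ and Bob send $b=Y_0Y_1$, and, only if $ab=1$, have both reveal $X_0$ and $Y_0$. Then $f^{+2}=X_0Y_0(1+ab)$ is determined by the transcript, so $\VarObs(\pi@f^{+2})=0$. Yet on $\{ab=-1\}$ Alice's $Z_0=X_0Y_0=\pm1$, which makes $\hat Z_0^2$ constant. Meanwhile the observer of $\pi_0$ knows $Y_0=bY_1$ but not Alice's private $X_1$, so sees $X_0Y_0$ as a fair coin. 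The same holds symmetrically for $\pi_1$, so the two observer's variances sum to $1$, far above $\Delta/16$.

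Sending the sign costs $O(1)$ bits and closes exactly this gap, via the bound $\E\bigl[\Var[Z_0\mid V]\bigr]\le\E\bigl[(Z_0-\tilde Z_0)^2\bigr]\le\E\bigl[|Z_0^2-\hat Z_0^2|\bigr]\le\Delta/32$ for the view $V=MRY_1\tilde Z_0$.

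Your write-up of that step could be tightened in two places:
\begin{itemize}
\item The Jensen sentence should simply be the identity $\E\bigl[\E[Z_0\mid V]^2\bigr]=\E[Z_0^2]-\E\bigl[\Var[Z_0\mid V]\bigr]$ together with the $V$-measurability of $\tilde Z_0$. Accordingly, your ``$\Var[Z_0\mid\text{cell}]$'' should mean the expected conditional variance given all of $V$.
\item Rounding the signed $Z_0$ itself to a multiple of $\Delta/32$ would give $(Z_0-\tilde Z_0)^2\le(\Delta/32)^2$ outright, with the same $O(\log\frac{n}{\Delta})$ message length, and would avoid the $(a-b)^2\le|a^2-b^2|$ step.
\end{itemize}
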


We are now ready to prove the main result of this section.

\mainlemma*

\begin{proof}
Starting from $\pi$, we perform the decomposition recursively until we reach the leaf level. More formally, we may assume that $n$ is a power of two. Let the starting protocol be $\pi_\emptyset := \pi$ over the input distribution $\mu^n$. We first decompose $\pi$ into $\pi_0$ and $\pi_1$, each over the input distribution $\mu^{n/2}$. We then further decompose $\pi_0$ into $\pi_{00}$ and $\pi_{01}$, and $\pi_1$ into $\pi_{10}$ and $\pi_{11}$, each over the input distribution $\mu^{n/4}$. This process continues recursively until the leaf level, yielding $n$ protocols $\{\pi_S\}_{S \in \{0,1\}^{\log_2{n}}}$, each operates over the input distribution $\mu$.  

Consider a level $\ell \in [\log_2{n}]$ of the decomposition where we decompose $2^{\ell-1}$ protocols $\{\pi_S\}_{S \in \{0,1\}^{\ell-1}}$ into $2^{\ell}$ protocols $\{\pi_{S}\}_{S \in \{0,1\}^{\ell}}$. There are $2^{\ell-1}$ decompositions at this level, each introducing an additive loss of $O\!\left(\log{\frac{n \cdot 2^{-\ell}}{\Delta}}\right)$ bits for information and an additive loss of $\frac{\Delta}{16}$ for variances. Summing over all levels gives $$\sum_{S \in \{0,1\}^{\log_2 n}} \VarObs(\pi_S@f) \leq \VarObs(\pi@f^{+n}) + \sum_{\ell=1}^{\log_2 n} 2^{\ell-1} \cdot \frac{\Delta}{16} \leq \VarObs(\pi@f^{+n}) + n\Delta,$$
and
$$\sum_{S \in \{0,1\}^{\log_2 n}} \IC(\pi_S) \leq \IC(\pi) + \sum_{\ell=1}^{\log_2 n} 2^{\ell-1} \cdot O\!\left(\log{\frac{n \cdot 2^{-\ell}}{\Delta}}\right) \leq \IC(\pi) + O\!\left(n + n \log{\frac{1}{\Delta}}\right).$$

This means an average protocol $\eta := \pi_S$ where $S$ is a uniformly random string in $\{0,1\}^{\log_2{n}}$ has $\IC(\eta) \leq \frac{\IC(\pi)}{n} + O(1+\log\frac{1}{\Delta})$ and $\VarObs(\eta@f) \leq \frac{\VarObs(\pi@f^{+n})}{n} + \Delta$. This concludes the proof.
\end{proof}

\section{Round-Preserving Direct-Sum for Variance and Information}
\label{sec:direct_sum_oneway}

We now prove \Cref{lem:main_lemma_oneway}, thereby completing the proof of the round-preserving gap-majority lemma (\Cref{thm:round_gap_majority}). We adopt the setting of the lemma: let $f: \mathcal{X} \times \mathcal{Y} \rightarrow \{\pm 1\}$ be an arbitrary function, and let $\mu$ be an arbitrary input distribution. Denote by $\pi = (X,Y,M,A,R)$ an $r$-round protocol over the product distribution $\mu^n$, where $X = (X_1,\ldots,X_n)$ and $Y = (Y_1,\ldots,Y_n)$ with each $(X_i,Y_i)$ drawn independently from $\mu$. We also write $M = (M_1,\ldots,M_r)$ as a sequence of $r$ messages. We may assume $r$ is odd so that Alice sends the last message $M_r$ and Bob is responsible for outputting the answer $A$ from the answer space $\mathcal{A}$. Note that Alice may not know this answer $A$. Let $\Delta \in (0,1)$ be an arbitrary additive error parameter. 

In a nutshell, the exact decomposition procedure from \Cref{sec:direct_sum_var} fails here for two reasons. First, in the protocol $\pi_1$ (\Cref{pi1}), Bob must send $Z_1$ back to Alice. This takes an extra round of communication to which we cannot afford. To overcome this, we change the structure of our decomposition. Specifically, we instead decompose the protocol $\pi$ operating on $n$ input coordinates into two subprotocols: $\pi^{<n}$ which operates on the first $n-1$ input pairs, and $\pi^{n}$ which operates on the single final input pair. Additionally, after simulating $\pi$ in $\pi^{<n}$ (similar to Line 5 in \Cref{pi0}), Alice will send an extra auxiliary message to Bob simultaneously with her last message $M_r$ (note that this does \emph{not} cost an extra round), whereas in $\pi^{n}$, Bob will \emph{no longer} send any extra message back to Alice! This modification preserves the number of rounds of both subprotocols at $r$ while introducing only a larger-yet-manageable additive loss in the information cost.

Second, in $r$-round protocols (when $r$ is odd), only Bob is required to produce the output. Consequently, the observer’s variance is no longer an accurate proxy for the error, since Bob cannot use an additional round to relay the answer to Alice, preventing the external observer from seeing it. To address this, we use the reformulation of variances that is tailored to $r$-round protocols (recall \Cref{def:1-way-var}). Precisely, we show that the decomposition of variances similar to \Cref{clm:decomp_IC_1way} still hold, if we instead measure the variance of $\pi^n$ from Bob's perspective.

\paragraph{Decomposition.} We now describe the modified decomposition scheme. Given an $r$-round protocol $\pi$, the decomposition produces two smaller $r$-round subprotocols $\pi^{<n}$ and $\pi^n$ described in \Cref{pi<n} and \ref{pi_n}. Note that $\pi^{<n}$ operates on an input distribution $(X',Y') \sim \mu^{n-1}$ and $\pi^{n}$ operates on an input distribution $(X',Y') \sim \mu$.

\begin{figure}[H]
    \centering
    \begin{tabular}{|p{15.7cm}|}
    \hline ~\\
    \multicolumn{1}{|c|}{\textbf{A protocol $\pi^{<n}$ over the input distribution $(X',Y') \sim \mu^{n-1}$}} \\ 
    \\ \hline
    \vspace{-2.85mm}

    % \textbf{Procedures:}
    \begin{enumerate}
         \item Alice pretends that her input consists of $n$ coordinates, and writes it as $X = (X_1,\ldots,X_n)$.
         \vspace{-5.8mm}
        \item Bob pretends that his input consists of $n$ coordinates, and writes it as $Y = (Y_1,\ldots,Y_n)$.
        \vspace{-1mm}
        \item Alice embeds $X'$ into $X_{<n}$, and Bob embeds $Y'$ into $Y_{<n}$.
        \vspace{-1mm}
        \item Players use public randomness to draw $Y_n$. Alice then privately draws $X_n$.
        \vspace{-1mm}
        \item Alice and Bob simulate the $r$ rounds of communication $M = (M_1,\ldots,M_r)$ with respect to the protocol $\pi$ and the input $(X,Y)$.
        \vspace{-1mm}
        \item For each $a \in \A$, Alice computes $Z_n^{a} := \Esub{Y_{<n}} \left[f^+(X_{<n}, Y_{<n}) \mid MRX_{<n}Y_n, A=a\right]$ and then sends $Z_n := \{Z_n^a\}_{a \in \A}$ to Bob along with $M_r$.
        \vspace{-1mm}
        \item Bob generates $A$.
    \end{enumerate}
    \vspace{0mm}
    \\
    \hline
    \end{tabular}
    \caption{The protocol $\pi^{<n}$ resulting from decomposing $\pi$}
    \label{pi<n}
\end{figure}

\begin{figure}[H]
    \centering
    \begin{tabular}{|p{15.7cm}|}
    \hline ~\\
    \multicolumn{1}{|c|}{\textbf{A protocol $\pi^{n}$ over the input distribution $(X',Y') \sim \mu$}} \\ 
    \\ \hline
    \vspace{-2.85mm}
    % \textbf{Procedures:}
    \begin{enumerate}
        \item Alice pretends that her input consists of $n$ coordinates, and writes it as $X = (X_1,\ldots,X_n)$. 
        \hspace{-1mm}
        \item Bob pretends that his input consists of $n$ coordinates, and writes it as $Y = (Y_1,\ldots,Y_n)$.
        \hspace{-1mm}
        \item Alice embeds $X'$ into $X_n$, and Bob embeds $Y'$ into $Y_n$.
        \hspace{-1mm}
        \item Players use public randomness to jointly draw $X_{<n}$. Bob then privately draws $Y_{<n}$. 
        \hspace{-1mm} 
        \item Alice and Bob simulate the $r$ rounds of communication $M = (M_1,\ldots,M_r)$ with respect to the protocol $\pi$ and the input $(X,Y)$.
        \vspace{-1mm}
        \item Bob generates $A$.
    \end{enumerate}
    \vspace{0mm}
    \\
    \hline
    \end{tabular}
    \caption{The protocol $\pi^n$ resulting from decomposing $\pi$.}
    \label{pi_n}
\end{figure}

Equivalently, we may adopt the following distributional view of each protocol.  
\begin{itemize}
    \item $\pi^n = (X_n, Y_n, M, A, RX_{<n})$.
    \item $\pi^{<n} = (X_{<n}, Y_{<n}, M \odot Z_n, A, RY_{n})$ where $Z_n$ consists of $Z_n^a = \Esub{Y_{<n}} \left[f^+(X_{<n}, Y_{<n}) \mid MRX_{<n}Y_n, A=a\right]$ for each $a \in \A$ computable solely by Alice. Here we write $M \odot Z_n$ to denote the $r$-round transcript $M$ with Alice appending $Z_n$ to her last message $M_r$. Note further that for a fixed value of $a\in \A$, $Z^a_n$ is a deterministic function of $MRX_0Y_1$.
\end{itemize}

For convenience, we make the same assumption as in the previous section: each $Z_n^{a}$ takes values from a discrete set $\mathcal{Z}$. As before, we will lift this assumption shortly.

\paragraph{Decomposition of Information Costs.} Analogous to the proof of \Cref{clm:IC_decomp}, without the $Z_n$ in $\pi^{<n}$, the information costs of $\pi^n$ and $\pi^{<n}$ sums up to the information cost of $\pi$. The extra message $Z_n$ of $\pi^{<n}$ incurs additional $O(|\A| \cdot \log{|\mathcal{Z}|})$ bits, as there are $|\A|$ choices of $a$, and each $Z_n^{a}$ can be represented with $O(\log{|\mathcal{Z}|})$ bits. This yields the following information cost decomposition. We omit its formal proof for brevity.
\begin{claim} $\IC(\pi^{n}) + \IC(\pi^{<n}) \leq \IC(\pi) + O(|\A| \cdot \log{|\mathcal{Z}|})$.
\label{clm:decomp_IC_1way}
\end{claim}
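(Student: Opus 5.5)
We mirror the proof of \Cref{clm:IC_decomp}: expand both information costs directly, absorb the auxiliary message $Z_n$ into an additive term, and use the rectangle property (\Cref{prop:rectangle}) so that the remaining terms reassemble via the chain rule into $\IC(\pi)$.

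For $\pi^n = (X_n, Y_n, M, A, RX_{<n})$, the public randomness is $RX_{<n}$, Alice's input is $X_n$ and Bob's is $Y_n$. Since $A$ is produced by Bob from $MYR$ plus private coins, it plays no role in the definition of $\IC$, which only involves the transcript $M$. Hence
\[
\IC(\pi^n) = I(M : X_n \mid RX_{<n}Y_n) + I(M : Y_n \mid RX).
\]
We want the first term in a form that fits the chain rule. For this we condition additionally on $Y_{<n}$: we aim to show
\[
I(M : X_n \mid RX_{<n}Y_n) \le I(M : X_n \mid RX_{<n}Y).
\]
This holds because $Y_{<n}$ is independent of $X_n$ given $RX_{<n}Y_n$, so adding it to the conditioning can only increase the information.

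For $\pi^{<n} = (X_{<n}, Y_{<n}, M\odot Z_n, A, RY_n)$, we first strip off $Z_n$. It consists of $|\A|$ values, each in $\mathcal{Z}$, so by the chain rule removing it from either mutual-information term costs at most $|\A|\log_2|\mathcal{Z}|$ bits. This gives
\[
\IC(\pi^{<n}) \le I(M : X_{<n} \mid RY) + I(M : Y_{<n} \mid RX_{<n}Y_n) + 2|\A|\log_2|\mathcal{Z}|.
\]
In the second term we add $X_n$ to the conditioning. The rectangle property gives $X_n \perp Y_{<n} \mid MRX_{<n}Y_n$, and $X_n \perp Y_{<n} \mid RX_{<n}Y_n$ holds a priori, so
\[
I(M : Y_{<n} \mid RX_{<n}Y_n) = I(M : Y_{<n} \mid RXY_n).
\]

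Summing the two bounds, the Alice-side terms are $I(M : X_{<n} \mid RY) + I(M : X_n \mid RX_{<n}Y)$, which equals $I(M : X \mid RY)$ by the chain rule. The Bob-side terms are $I(M : Y_n \mid RX) + I(M : Y_{<n} \mid RXY_n)$, which equals $I(M : Y \mid RX)$. Together this yields $\IC(\pi^n) + \IC(\pi^{<n}) \le \IC(\pi) + O(|\A|\log|\mathcal{Z}|)$, as claimed.

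The main point to verify is the inequality for $\pi^n$. Its justification is the standard fact that $I(M:X_n\mid C) \le I(M:X_n\mid CY_{<n})$ whenever $I(X_n : Y_{<n}\mid C)=0$, applied with $C=RX_{<n}Y_n$. The hypothesis holds because $\mu^n$ is a product distribution and $R$ is independent of the inputs.

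Finally, the equivalence between the simulated protocols and this distributional view requires that the sampled coordinates have the right joint distribution: in $\pi^n$ the coordinates $X_{<n}$ are public and $Y_{<n}$ is Bob's private sample, and symmetrically in $\pi^{<n}$. This is immediate from the product structure of $\mu^n$.
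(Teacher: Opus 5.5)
Your proposal is correct and follows the route the paper intends. The paper omits this proof, saying only that it is analogous to \Cref{clm:IC_decomp}: strip the auxiliary message $Z_n$ at a cost of at most $|\A|\log_2|\mathcal{Z}|$ per mutual-information term, use the rectangle property to add the missing coordinates to the conditioning, and reassemble $\IC(\pi)$ via the chain rule.

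The only minor difference is in the $\pi^n$ term. There you use the fact that conditioning on the a priori independent $Y_{<n}$ cannot decrease $I(M:X_n\mid RX_{<n}Y_n)$, where the paper's template would invoke the rectangle property to get equality. Since only an upper bound is needed, your version is fine.
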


\paragraph{Decomposition of Variances.} Analogous to \Cref{clm:decomp_var_oneway}, we show that the decomposition of variance still holds, if the variance of $\pi^n$ is now evaluated from Bob's perspective.

\begin{claim} $\VarBobRound(\pi^n @ f) + \VarObsRound(\pi^{<n} @ f^{+(n-1)}) \leq \VarObsRound(\pi @ f^{+n})$.
\label{clm:decomp_var_oneway}
\end{claim}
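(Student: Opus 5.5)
We mirror the proof of \Cref{clm:var_decomp}: rewrite each variance as an expectation over the common conditioning variables $MRAX_{<n}Y_n$, then use the rectangle property to merge the two conditional variances into one.

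\textbf{Step 1: rewrite $\VarBobRound(\pi^n @ f)$.} Since $\pi^n = (X_n, Y_n, M, A, RX_{<n})$, Bob sees the public randomness $RX_{<n}$, the transcript $MA$, and his input $Y_n$. Hence
$$\VarBobRound(\pi^n @ f) = \Esub{MRAX_{<n}Y_n} \ \Varsub{X_n}\left[f(X_n,Y_n) \mid MRAX_{<n}Y_n\right].$$
This requires no work: it is just the definition, once we identify the public randomness and the transcript of $\pi^n$.

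\textbf{Step 2: rewrite $\VarObsRound(\pi^{<n} @ f^{+(n-1)})$.} This mirrors \Cref{clm:var_pi0_pi1}. The observer of $\pi^{<n}$ sees $MRY_nAZ_n$. Given $A=a$, the observer reads off $Z_n^A = \Esub{Y_{<n}}[f^+(X_{<n},Y_{<n}) \mid MRX_{<n}Y_nA]$, which is a deterministic function of $MRAX_{<n}Y_n$. The tower argument from \Cref{clm:var_pi0_pi1} then shows that $\E[f^+(X_{<n},Y_{<n}) \mid MRY_nAZ_n] = Z_n^A$. Therefore
$$\VarObsRound(\pi^{<n} @ f^{+(n-1)}) = \E[f^+(X_{<n},Y_{<n})^2] - \E[(Z_n^A)^2] = \Esub{MRAX_{<n}Y_n}\ \Varsub{Y_{<n}}\left[f^+(X_{<n},Y_{<n}) \mid MRAX_{<n}Y_n\right].$$
There is a subtlety in this step. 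The conditioning must include $A$ itself, not merely $Z_n$, since $Z_n$ is the whole vector $\{Z_n^a\}_{a}$. The observer does see $A$ (it is part of the view $MRA$), so $Z_n^A$ is known to the observer and the argument goes through.

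\textbf{Step 3: combine.} By the bounded-round rectangle property (\Cref{prop:rectangle}) with $P=\{n\}$ and $Q=[n-1]$, we have $X_n \perp Y_{<n} \mid MRAX_{<n}Y_n$. Conditioned on $MRAX_{<n}Y_n$, the term $f(X_n,Y_n)$ depends only on $X_n$ and the term $f^+(X_{<n},Y_{<n})$ depends only on $Y_{<n}$, so the two terms are independent. Their conditional variances therefore add to $\Var_{X_nY_{<n}}[f^+(X,Y) \mid MRAX_{<n}Y_n]$. Finally, \Cref{fact:cond_var} with $B=MRAX_{<n}Y_n$, averaged after conditioning on $MRA$, bounds the expectation of this quantity by $\Esub{MRA}\Var[f^+(X,Y)\mid MRA] = \VarObsRound(\pi @ f^{+n})$.

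The main obstacle is Step 2: checking that $A$ is legitimately part of the observer's conditioning in $\pi^{<n}$. This is precisely why Alice sends $Z_n^a$ for every $a \in \A$, since she does not know $A$ but the observer does.
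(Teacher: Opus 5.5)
Your proposal is correct and follows essentially the paper's proof: you rewrite both variances as expectations over $MRAX_{<n}Y_n$, merge them via the rectangle property $X_n \perp Y_{<n} \mid MRAX_{<n}Y_n$, and finish with \Cref{fact:cond_var}. The one small difference is in Step 2, where you condition on the full vector $Z_n$ and get an equality, whereas the paper first discards $Z_n^{a'}$ for $a'\neq a$ via \Cref{fact:cond_var} and then computes each $a$-term, getting only $\le$; either suffices, but your tower argument actually needs that every $Z_n^{a}$, not just $Z_n^A$, is a deterministic function of $MRX_{<n}Y_n$, which is true.
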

\begin{proof}
By definition, with $\pi^n = (X_n, Y_n, M, A, RX_{<n})$, we have 
\begin{equation}
    \VarBobRound(\pi^n @ f) = \Esub{MRAX_{<n}Y_n} \ \Varsub{X_n} \left[f(X_n,Y_n) \mid MRAX_{<n}Y_n\right]
    \label{eq:gamma_Bob_pi_n}
\end{equation}
and with and $\pi^{<n} = (X_{<n}, Y_{<n}, M \odot Z_n, A, RY_{n})$, we have
\begin{align}
    \VarObsRound(\pi^{<n} @ f^{+(n-1)}) & = \Esub{MRAY_nZ_n} \ \Varsub{X_{<n}Y_{<n}} \left[f^+(X_{<n},Y_{<n}) \mid MRAY_nZ_n\right] \nonumber \\
    & = \sum_{a \in \A} \pi(a) \cdot \Esub{MY_nZ_n} \ \Varsub{X_{<n}Y_{<n}} \left[f^+(X_{<n},Y_{<n}) \mid MRY_nZ_na\right] \nonumber\\
    & \leq \sum_{a \in \A} \pi(a) \cdot \Esub{MY_nZ^a_n} \ \Varsub{X_{<n}Y_{<n}} \left[f^+(X_{<n},Y_{<n}) \mid MRY_nZ^a_na\right] \label{eq:gamma_<n_obs_1}
\end{align}
where the final inequality follows \Cref{fact:cond_var} with $Z^a_n \in Z_n$. Now fix $a \in \A$ and expand the expectation.
\begin{align}
    & \Esub{MRY_nZ^a_n} \ \Varsub{X_{<n}Y_{<n}} \left[f^+(X_{<n},Y_{<n}) \mid MRY_nZ^a_na\right] \nonumber \\
    & =  \Esub{X_{<n}Y_{<n}} \left[f^+(X_{<n}, Y_{<n})^2 \mid a\right] - \sum_{mry_nz_n^a} \pi(mry_nz_n^a\mid a) \cdot \left[\Esub{X_{<n}Y_{<n}} \left[f(X_{<n},Y_{<n}) \mid mry_nz_n^a a\right]\right]^2 \label{eq:gamma_<n_obs_2}
\end{align}
We further unpack the final expectation term.
\begin{align*}
    \Esub{X_{<n}Y_{<n}} \left[f(X_{<n},Y_{<n}) \mid mry_nz_n^a a\right] & = \sum_{x_{<n}} \pi(x_{<n} \mid mry_nz_n^a a) \cdot \Esub{Y_{<n}} \left[f^+(x_{<n}, Y_{<n}) \mid mrx_{<n}y_nz_n^a a\right] \\
    & = \sum_{x_{<n}} \pi(x_{<n} \mid mry_nz_n^a a) \cdot \Esub{Y_{<n}} \left[f^+(x_{<n}, Y_{<n}) \mid mrx_{<n}y_n a\right] \tag{$z_n^a$ is deterministic given $mrx_{<n}y_na$} \\
    & = \sum_{x_{<n}} \pi(x_{<n} \mid mry_nz_n^a a) \cdot z_n^a \tag{definition of $z_n^a$} \\
    & = z_n^a.
\end{align*}
Plugging this into \Cref{eq:gamma_<n_obs_2}, we obtain:
\begin{align}
    \Esub{MRY_nZ^a_n} \ \Varsub{X_{<n}Y_{<n}} \left[f^+(X_{<n},Y_{<n}) \mid MRY_nZ^a_na\right] & =  \Esub{X_{<n}Y_{<n}} \left[f^+(X_{<n}, Y_{<n})^2 \mid a\right] - \sum_{mry_nz_n^a} \pi(mry_nz_n^a\mid a) \cdot (z^a_n)^2 \nonumber \\
    & =  \Esub{X_{<n}Y_{<n}} \left[f^+(X_{<n}, Y_{<n})^2 \mid a\right] - \Esub{Z_n^a} \left[(Z^a_n)^2 \mid a\right] \label{eq:gamma_<n_obs_3}
\end{align}
Recall again that for a fixed $a \in \A$, $Z_n^{a} = \Esub{Y_{<n}} \left[f^+(X_{<n}, Y_{<n}) \mid MRX_{<n}Y_n, A=a\right]$ is a deterministic function of $MRX_{<n}Y_n$. Thus, we can write :
\begin{align}
\Esub{Z_n^a} \left[(Z^a_n)^2 \mid a\right] = \sum_{mrx_{<n}y_n} \pi(mrx_{<n}y_n \mid a) \cdot \left[\Esub{Y_{<n}} \left[f^+(X_{<n}, Y_{<n}) \mid mrx_{<n}y_na\right]\right]^2     \label{eq:gamma_<n_obs_4}
\end{align}
Combining \Cref{eq:gamma_<n_obs_1}, (\ref{eq:gamma_<n_obs_3}), and (\ref{eq:gamma_<n_obs_4}), we derive:
\begin{align}
    & \VarObsRound(\pi^{<n} @ f^{+(n-1)}) \nonumber \\ 
    & \leq \sum_{a \in \A} \pi(a) \cdot \Esub{MRY_nZ^a_n} \ \Varsub{X_{<n}Y_{<n}} \left[f^+(X_{<n},Y_{<n}) \mid MY_nZ^a_na\right] \nonumber \\
    & = \Esub{X_{<n}Y_{<n}}\left[f^+(X_{<n}, Y_{<n})^2\right] - \sum_{mrx_{<n}y_na} \pi(mrx_{<n}y_n a) \cdot \left[\Esub{Y_{<n}} \left[f^+(X_{<n}, Y_{<n}) \mid mrx_{<n}y_na\right]\right]^2 \nonumber \\
    & = \Esub{MRX_{<n}Y_nA} \ \Varsub{Y_{<n}} \left[f^+(X_{<n}Y_{<n}) \mid MRX_{<n}Y_nA\right] \label{eq:gamma_<n_obs_final}
\end{align}

Finally, combining \Cref{eq:gamma_Bob_pi_n} and (\ref{eq:gamma_<n_obs_final}), we obtain:
\begin{align*}
     & \VarBobRound(\pi^n @ f) + \VarObsRound(\pi^{<n} @ f^{+(n-1)}) \\
     & = \left[\Esub{MRAX_{<n}Y_n} \ \Varsub{X_n} \left[f(X_n,Y_n) \mid MRAX_{<n}Y_n\right]\right] + \left[ \Esub{MRAX_{<n}Y_n} \ \Varsub{Y_{<n}} \left[f^+(X_{<n},Y_{<n}) \mid MRAX_{<n}Y_n\right]\right] \\
     & = \Esub{MRAX_{<n}Y_n} \ \Varsub{X_nY_{<n}} \left[f^+(X,Y) \mid MRAX_{<n}Y_n\right] \tag{rectangle property} \\
     & \leq \Esub{MRA} \ \Varsub{XY} \left[f^+(X,Y) \mid MRA\right] \tag{\Cref{fact:cond_var}} \\
     & = \VarObsRound(\pi @ f^{+n}) \tag{definition}
\end{align*}
as desired.
\end{proof}

\paragraph{Lifting the assumptions about $Z_n$.} We now lift the assumptions of exact communication of $Z_n$ in the same fashion as we did in \Cref{sec:direct_sum_var}. For each $a \in \A$, instead of Alice sending each $Z_n^a$ exactly, she sends the value of $(Z_n^a)^2$ rounded down to closest multiple of $\frac{\Delta}{32}$.  The implications of such rounding scheme are as follows.
\begin{itemize}
    \item Observe that the range of each $Z_n^{a}$ is $[0,n^2]$. This allows us to set a discrete space $\mathcal{Z}$ to be $\{0, \frac{\Delta}{32}, \ldots \frac{\Delta}{32}, \ldots, \frac{\Delta}{32} \cdot \lfloor n^2 \cdot \frac{32}{\Delta}\rfloor\}$. As a result, $|\mathcal{Z}| = O(\frac{n^2}{\Delta})$. Plugging this into \Cref{clm:decomp_IC_1way}, the decomposition of information costs may incur an additive loss of $O(|\mathcal{A}| \cdot \log{\frac{n}{\Delta}})$ bits.
    \item Similar to before, such rounding scheme introduces an additive loss to observer variance of $\pi^{<n}$ by $\frac{\Delta}{32}$. Here, though, Bob's variace does not suffer any losses. As a result, the decomposition of variances may incur an additive loss of $\frac{\Delta}{32}$.
\end{itemize}
In summary, by replacing the exact communication of $Z_n$ by its squared and rounded version, we obtain the following round-preserving decomposition lemma.

\begin{lemma}[One-step round-preserving decomposition lemma] Let $f: \mathcal{X} \times \mathcal{Y} \rightarrow \{\pm 1\}$ be an arbitrary function and let $\mu$ be an arbitrary input distribution over $\mathcal{X} \times \mathcal{Y}$. Suppose $\pi$ is an $r$-round protocol over input distribution $\mu^n$ which Bob produces an answer from a discrete space $\A$. For any $\Delta \in (0,1)$, there exists $r$-round protocols $\pi^n$ over input distribution $\mu$ and $\pi^{<n}$ over input distribution $\mu^{n-1}$ such that $$\IC(\pi^n) + \IC(\pi^{<n}) \leq \IC(\pi) + O(|\A| \cdot \log{\frac{n}{\Delta}})$$ and $$\VarBobRound(\pi^n@f) + \VarObsRound(\pi^{<n}@f^{+(n-1)}) \leq \VarObsRound(\pi@f^{+n}) + \frac{\Delta}{32}.$$
\label{lem:one_step_one_way}
\end{lemma}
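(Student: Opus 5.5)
The statement to prove is the one-step round-preserving decomposition lemma (\Cref{lem:one_step_one_way}). The plan is to assemble the two claims already established, \Cref{clm:decomp_IC_1way} and \Cref{clm:decomp_var_oneway}, and then pay for replacing the idealized messages $Z_n^a$ by rounded, discretized ones.

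\textbf{Step 1: the protocols and their rounds.} Take $\pi^n$ and $\pi^{<n}$ exactly as in \Cref{pi_n} and \Cref{pi<n}. In $\pi^{<n}$, Alice sends, for each $a\in\A$, the value $(Z_n^a)^2$ rounded down to a multiple of $\frac{\Delta}{32}$; call it $\hat W^a$.

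Both protocols use $r$ rounds. In $\pi^n$, Bob sends nothing extra. In $\pi^{<n}$, the auxiliary message rides along with $M_r$, and Alice can compute it because $Z_n^a$ is a function of $MRX_{<n}Y_n$ and $a$, all known to her after $M_r$.

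\textbf{Step 2: information cost.} Rerun the argument of \Cref{clm:IC_decomp} with the decomposition $[n]=\{<n\}\cup\{n\}$.

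For $\pi^n=(X_n,Y_n,M,A,RX_{<n})$, write $\IC(\pi^n)=I(MA:X_n\mid RX_{<n}Y_n)+I(MA:Y_n\mid RX)$. Since $A$ is a function of $MYR$ (plus private randomness), I would fold $A$ into the transcript, or check that it costs nothing beyond $M$ via the rectangle property.

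For $\pi^{<n}$, bound the extra message by $|\A|\log_2|\mathcal Z|$ in each of the two terms. Then use the rectangle property, $X_n\perp Y_{<n}\mid MRX_{<n}Y_n$, to rewrite $I(M:Y_{<n}\mid RX_{<n}Y_n)$ as $I(M:Y_{<n}\mid RXY_n)$. Applying the chain rule gives a sum of at most $\IC(\pi)+O(|\A|\log|\mathcal Z|)$.

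With $|\mathcal Z|=O(n^2/\Delta)$, because $(Z_n^a)^2\in[0,n^2]$, this yields the additive term $O(|\A|\log\frac n\Delta)$.

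\textbf{Step 3: variance.} This is where I expect the main subtlety. \Cref{clm:decomp_var_oneway} used only the following about $Z_n^a$:
\begin{itemize}
\item the observer's conditional mean given $(mry_n z_n^a a)$ equals $z_n^a$;
\item the final quantity reduces to $\E[(Z_n^a)^2\mid a]$.
\end{itemize}
With the rounded $\hat W^a$ sent instead, the conditional mean is no longer pinned down. I would argue directly. Conditioning on $\hat W^a$ still refines conditioning on $MRY_na$, and by \Cref{fact:cond_var} the observer's variance equals $\E[f^{+2}\mid a]-\E\big[\E[f^+\mid MRY_n\hat W^a a]^2\big]$.

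By Jensen applied to the conditional mean of $Z_n^a$, $\E[f^+\mid MRY_n\hat W^a a]=\E[Z_n^a\mid MRY_n\hat W^a a]$. The key step is to lower bound $\E[\E[Z_n^a\mid\cdot]^2]$ by something close to $\E[(Z_n^a)^2]$. Jensen goes the wrong way here, so the issue is controlling $\E[\Var(Z_n^a\mid \hat W^a,\dots)]$. Within a bucket of $(Z_n^a)^2$ of width $\frac\Delta{32}$, $Z_n^a$ can take two signs, so the variance is not automatically small.

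First I would try sending the sign of $Z_n^a$ as well, which costs one more bit per $a$. Then $Z_n^a$ lies in an interval where $z^2$ varies by at most $\frac\Delta{32}$. On such an interval $\Var(Z)\le\E[Z^2]-(\E Z)^2$ is bounded by the spread of $z^2$, since $|z_1-z_2|^2\le|z_1^2-z_2^2|$ for same-sign values.

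This gives a loss of at most $\frac{\Delta}{32}$ in $\VarObsRound(\pi^{<n}@f^{+(n-1)})$. Bob's variance for $\pi^n$ is untouched, because $\pi^n$ sends no auxiliary message. Combining with the remainder of the proof of \Cref{clm:decomp_var_oneway}, namely the rectangle property and \Cref{fact:cond_var}, gives the claimed variance bound.
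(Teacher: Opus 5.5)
Your proposal is correct and, in outline, is the paper's argument. It uses the same protocols $\pi^n$ and $\pi^{<n}$ and the same information bookkeeping: the extra message costs $|\A|\log|\mathcal Z|$ per term, then the rectangle property $X_n\perp Y_{<n}\mid MRX_{<n}Y_n$ and the chain rule. The variance chain is the one in \Cref{clm:decomp_var_oneway}.

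The one real difference is the discretization, and there your version is the one that works. The paper has Alice send only $(Z_n^a)^2$ rounded down. It claims a loss of $\Delta/32$ because the exact value enters only through $\E[(Z_n^a)^2\mid a]$. But that expression was derived from the observer's posterior mean being exactly $z_n^a$, which fails once the sign is hidden. As you identify, the true loss is $\E\bigl[\Var(Z_n^a\mid MRY_n\hat W^a a)\bigr]$, and without the sign this can be of order $(Z_n^a)^2$.

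Here is a concrete example. Take $n=2$, $f(x,y)=xy$ with uniform inputs, and the $3$-round protocol in which:
\begin{itemize}
\item Alice sends nothing;
\item Bob sends $Y$;
\item Alice sends $X_1Y_1+X_2Y_2$;
\item Bob's output is constant, so $|\A|=1$.
\end{itemize}
Then $\VarObsRound(\pi@f^{+2})=0$ and $\VarBobRound(\pi^n@f)=0$. Meanwhile $Z_n^a=X_1Y_1$, so the unsigned rounded message is a constant. On the event $X_1Y_1+X_2Y_2=0$, the observer of $\pi^{<n}$ cannot determine $X_1Y_1$, so its variance is $1/2>\Delta/32$.

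Your extra sign bit fixes this at a cost of only $O(|\A|)$ bits; equivalently, one could round $Z_n^a$ itself to a grid of width $\sqrt{\Delta/32}$. Your inequality $|z_1-z_2|^2\le|z_1^2-z_2^2|$ for same-sign values then bounds the loss by $\Delta/32$, as required. The same repair is needed in the one-step lemma of \Cref{sec:direct_sum_var}.

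Two minor points. First, the identity $\E[f^+\mid MRY_n\hat W^a a]=\E[Z_n^a\mid MRY_n\hat W^a a]$ is the tower property, not Jensen; it uses that $\hat W^a$ and the sign are functions of $MRX_{<n}Y_n$. Second, the paper defines $\mathsf{IC}$ for bounded-round protocols from $M$ alone, so $A$ needs no treatment in Step 2. If you did include $A$, the rectangle property would not make it free on Alice's side, since $I(A:Y_n\mid MRX)$ can be positive. Folding $A$ into the transcript costs at most $\log|\A|$, so that option is harmless.
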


We are now ready to prove the main result of this section.

\mainlemmaoneway*

\begin{proof}
    Consider the following recursive decomposition procedure. We first decompose $\pi$ into $\pi^{n}$ and $\pi^{<n}$. Next, we decompose $\pi^{<n}$ into $\pi^{n-1}$ and $\pi^{<n-1}$. This process is repeated $n-1$ times, until we obtain $n$ protocols $\{\pi^i\}_{i \in [n]}$, each operating over a single input pair drawn from $\mu$. Note further that every decomposition preserves the number of rounds. Thus, each protocol has $r$ rounds of communication.

Applying \Cref{lem:one_step_one_way} recursively, we obtain
$$\sum_{i \in [n]} \VarBobRound(\pi^{i} @ f) \leq \VarObsRound(\pi @ f^{+n}) + \frac{n \Delta}{16} \hspace{3mm} \text{ and } \hspace{3mm} \sum_{i \in [n]} \IC(\pi^i) \leq \IC(\pi) + O(|\A| \cdot n \log{\frac{n}{\Delta}}).$$

This means an average protocol $\eta := \pi^i$, where $i$ is chosen uniformly at random from $[n]$, satisfies $\IC(\eta) \leq \frac{\IC(\pi)}{n} + O(|\A| \cdot \log{\frac{n}{\Delta}})$ and $\VarBobRound(\eta@f) \leq \frac{\VarObsRound(\pi@f^{+n})}{n} + \Delta$. This concludes the proof.
\end{proof}

\section{Lower Bound Applications}
\label{sec:lb_apps}

Finally, we demonstrate the power of the gap-majority lemma by using it to derive lower bounds for the communication complexity of Gap-Hamming and the streaming complexity of triangle counting. We begin by revisiting a few fundamental communication problems and their \emph{information} lower bounds. These problems, throughout this section, will serve as the base functions for our gap-majority reduction. The first problem is \emph{set-disjointness}.
\begin{definition}[Set-Disjointness]
Let $k$ be a positive integer. The communication problem \emph{(unique) set-disjointness}, denoted $\DISJ_k$, is defined as follows. Alice receives a set $X \subseteq [k]$ and Bob receives a set $Y \subseteq [k]$ such that either $|X \cap Y| = 0 $ or $|X \cap Y| = 1$. They must decide whether their sets are disjoint, i.e. they wish to compute
$$
\DISJ_k(X,Y) := 
\begin{cases} 
-1 & \text{ if } |X \cap Y| = 0, \\
+1 & \text{ if } |X \cap Y| = 1.
\end{cases}
$$
Alternatively, we may write $|X \cap Y| = \frac{1}{2} \cdot \left(\DISJ_k(X,Y) + 1\right).$
\end{definition}

\begin{theorem} There is an input distribution $\mathcal{D}_{\DISJ_k}$ such that any protocol $\pi$ that computes $\DISJ_k$ with probability $0.501$ over the distribution $\mathcal{D}_{\DISJ_k}$ requires $\IC(\pi) \ge \Omega(k)$. In addition, the distribution $\mathcal{D}_{\DISJ_k}$ is balanced, meaning $\E\left[{\DISJ_k}(X,Y)\right] = 0$ under the randomness of $\mathcal{D}_{\DISJ_k}$.
\label{thm:disj_info_lb}
\end{theorem}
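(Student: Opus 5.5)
The plan is to take the standard hard distribution for unique set-disjointness, balanced between the two answers, and argue as in the Bar-Yossef--Jayram--Kumar--Sivakumar information-statistics framework. The twist is that correctness is only $0.501$, a constant advantage of $0.001$ over guessing. The lower bound then has to come from a small-advantage version of that argument, in the spirit of Braverman--Moitra, who show that advantage $\varepsilon$ forces information $\Omega(\varepsilon k)$; with $\varepsilon = 0.001$ this gives $\Omega(k)$.

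\textbf{Distribution.} For each coordinate $i\in[k]$, draw an auxiliary bit $D_i$ uniformly. If $D_i=0$, set $x_i=0$ and draw $y_i$ uniformly; if $D_i=1$, set $y_i=0$ and draw $x_i$ uniformly. Every coordinate is then in $\{(0,0),(0,1),(1,0)\}$ and the sets are disjoint. With probability $1/2$, additionally pick a uniform $J\in[k]$ and overwrite $(x_J,y_J)=(1,1)$. Call the result $\mathcal{D}_{\DISJ_k}$. It is supported on $|X\cap Y|\in\{0,1\}$, and $\E[\DISJ_k]=0$ by construction.

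\textbf{Step 1 (reduce to distinguishing).} A protocol $\pi$ that is correct with probability $0.501$ distinguishes the disjoint distribution from the planted one with advantage $\ge 0.002$. Hence the transcript distributions $\pi(M R\mid \text{disjoint})$ and $\pi(MR\mid\text{planted})$ are at total variation distance $\Omega(1)$. Averaging over $J$ and writing both as mixtures over $D_{-J}$, this means that for an average coordinate $j$, the transcripts on inputs with $(x_j,y_j)=(1,1)$ differ in total variation by $\Omega(1)$ from those with $(x_j,y_j)$ drawn from the $D_j$-mixture. Everything here is conditioned on the other coordinates following their $D$-distribution.

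\textbf{Step 2 (information bound).} Bound the conditional information cost from below:
\[
I(M:X\mid Y R D)+I(M:Y\mid XRD)\;\ge\;\sum_{j} I(M:X_jY_j\mid \ldots ,D)
\]
(superadditivity over independent coordinates), and relate each summand to Hellinger distances via the standard inequality $I\ge h^2$. The cut-and-paste (rectangle) property of protocols, Proposition~\ref{prop:rectangle}, gives
\[
h^2(\pi_{00},\pi_{11})=h^2(\pi_{01},\pi_{10})\le\big(h(\pi_{01},\pi_{00})+h(\pi_{00},\pi_{10})\big)^2 .
\]
So the $\Omega(1)$ separation between $\pi_{11}$ and the disjoint-type transcripts forces $\Omega(1)$ Hellinger distance on the $D$-side, hence $\Omega(1)$ information per coordinate. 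The conditioning on $D$ can be removed, since $I(M:X\mid YR)\ge I(M:X\mid YRD)-H(D\mid\ldots)$ is the wrong direction in general. Instead I would use the standard fact that for this distribution, conditioning on $D$ only lowers the relevant information for the prior-free internal cost. Alternatively I would directly cite Braverman--Moitra and Braverman--Garg--Pankratov--Weinstein, where internal information complexity under a product-like balanced distribution is handled.

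\textbf{Main obstacle.} The delicate step is the small-advantage accounting. The $\Omega(1)$ total-variation gap is a statement about the mixture over the planted coordinate $J$. What Step 2 needs is a per-coordinate Hellinger gap that survives summing over $k$ coordinates without losing a factor $k$ or a square. Naively, total variation $\delta$ spread over $k$ coordinates gives only $\delta^2/k$ per coordinate in Hellinger-squared terms, which is too weak.

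The first thing I would try is to follow Braverman--Moitra's argument. They compare the disjoint transcript distribution to each single-planted distribution and use a hybrid/averaging argument to get $\sum_j h^2(\pi_{\text{disj}},\pi_{j\text{-planted}})=\Omega(\varepsilon)$, without squaring the total advantage. This works because a single planted coordinate shifts the output probability, and by convexity the average shift equals the total advantage. If that proves too intricate to reproduce, I would simply state Theorem~\ref{thm:disj_info_lb} as a citation of their $\Omega(\varepsilon k)$ bound with $\varepsilon=0.001$, after checking that their distribution is balanced or can be made so by the $1/2$ planting above.
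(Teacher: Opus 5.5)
Your Steps 1--2 already contain a complete proof, and the ``main obstacle'' you describe does not exist. Write $\pi^{(j),d}_{ab}$ for the law of $(M,R)$ when $(x_j,y_j)=(a,b)$ and the other coordinates are drawn from the product distribution fixed by $D_{-j}=d$. Let $\mu_0$ and $\mu_{1,j}$ be the disjoint part and the $j$-planted part of $\mathcal{D}_{\DISJ_k}$. The planted half is the mixture $\frac1k\sum_j\mu_{1,j}$, and the disjoint side is the same for every $j$. So the advantage $0.002$ is an average of per-coordinate advantages:
\[
\frac1k\sum_{j}\E_{d}\,\delta_{j,d}\ \ge\ 0.002,\qquad \delta_{j,d}:=\mathrm{TV}\Bigl(\pi^{(j),d}_{11},\ \tfrac12\pi^{(j),d}_{00}+\tfrac14\pi^{(j),d}_{01}+\tfrac14\pi^{(j),d}_{10}\Bigr).
\]
This is exactly your Step 1; nothing gets divided by $k$. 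Cut-and-paste and the triangle inequality then give $h^2(\pi_{00},\pi_{01})+h^2(\pi_{00},\pi_{10})\ge c\,\delta_{j,d}^2$. Jensen turns this into $\Omega(k\cdot 0.002^2)=\Omega(k)$ after summing over $j$. Squaring a fixed constant advantage costs only a constant factor, so the linear-in-$\varepsilon$ refinement of Braverman--Moitra is unnecessary. Also, your summary of their bound as $\sum_j h^2=\Omega(\varepsilon)$ would give only $\Omega(\varepsilon)$ bits; the form that matters is $\Omega(\varepsilon k)$.

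What actually needs filling in is bookkeeping you left vague or misstated.

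(i) The superadditivity in Step 2 needs the coordinates to be independent given $D$. That holds on the disjoint half $\mu_0$, but not under $\mathcal{D}_{\DISJ_k}$, whose information the theorem measures. Since $B=\mathbf{1}[X\cap Y\neq\emptyset]$ is a function of $(X,Y)$, you get $I(M:X\mid YR)=I(M:XB\mid YR)\ge I(M:X\mid YRB)\ge\frac12\,I_{\mu_0}(M:X\mid YR)$, and likewise for Bob's term.

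(ii) Dropping $D$ goes in the right direction for a concrete reason, not a ``prior-free'' one. By the chain rule, $I(M:X\mid YR)+I(M:D\mid XYR)=I(M:D\mid YR)+I(M:X\mid YRD)$. Moreover $I(M:D\mid XYR)=0$, because the transcript depends on $D$ only through the inputs. Hence $I(M:X\mid YR)\ge I(M:X\mid YRD)$.

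(iii) Under $\mu_0$ given $D$, the inputs are product across the players. So internal and external information coincide, which justifies the inequality in your Step 2 display. Cut-and-paste for $\pi^{(j),d}$ is also available only because of this product structure.

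As for the comparison: the paper gives no argument for this theorem. It only points to folklore proofs and a textbook. Once repaired as above, your route is a self-contained BJKS-style proof matched to exactly what the paper later uses: a balanced distribution, distributional success $0.501$, and internal information under that same distribution. If you fall back on citing instead, make sure the cited bound is distributional. A bound for protocols with advantage $\varepsilon$ on every input does not directly apply to a protocol that is correct only on average over $\mathcal{D}_{\DISJ_k}$.
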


The next problem is \emph{indexing}. In fact, we will use a slightly non-standard variant of indexing which restricts the support size.

\begin{definition}[Indexing]
Let $N,k$ be positive integers such that $k \le \frac{N}{2}$. The \emph{one-way} communication problem \emph{indexing}, denoted $\IND_{N,k}$, is defined as follows. Alice receives an $N$-digit string $S \in \{\pm 1\}^N$ which contains exactly $k$ $1$'s, and Bob receives an index $\ell \in [N]$. Alice sends a single message to Bob, who then wishes to compute the $\ell^{\text{th}}$ digit of $S$, namely $
\IND_{N,k}(S,\ell) := S_\ell.$
\end{definition}

\begin{theorem}
There is an input distribution $\mathcal{D}_{\IND_{N,k}}$ such that any protocol $\pi$ that computes $\IND_{N,k}$ with probability $0.501$ over the distribution $\mathcal{D}_{\IND_{N,k}}$ requires $\IC(\pi) \ge \Omega(k)$. In addition, the distribution $\mathcal{D}_{\IND_{N,k}}$ is balanced, meaning $\E\left[{\IND_{N,k}}(X,Y)\right] = 0$ under the randomness of $\mathcal{D}_{\IND_{N,k}}$.
\label{thm:ind_info_lb}
\end{theorem}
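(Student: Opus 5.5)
We take $\mathcal{D}_{\IND_{N,k}}$ to be the following correlated distribution. Draw $S$ uniformly among strings in $\{\pm1\}^N$ with exactly $k$ entries equal to $+1$. With probability $1/2$, let $\ell$ be a uniform position with $S_\ell=+1$; otherwise let $\ell$ be a uniform position with $S_\ell=-1$. Balancedness is then immediate.

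Two properties of this distribution will be used.
\begin{itemize}
\item $\Pr[S,\ell]=\Pr[S]\cdot q(S_\ell)$, where $q(+1)=\frac{1}{2k}$ and $q(-1)=\frac{1}{2(N-k)}$.
\item The marginal of $\ell$ is uniform on $[N]$.
\end{itemize}
Since the protocol is one-way and $M$ is a function of $(S,R)$, the term $I(M:\ell\mid SR)$ vanishes, so $\IC(\pi)=I(M:S\mid \ell R)$. The whole task is to lower bound this quantity by $\Omega(k)$.

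\textbf{Step 1 (success forces posterior deviations).} Fix $R=r$ and $M=m$, and write $p_j:=\Pr[S_j=+1\mid m r]$ for the posterior under the uniform prior on $S$. By the product form above, Bob's posterior odds for $S_\ell=+1$ given $(m,r,\ell)$ equal
\[
\frac{p_\ell}{1-p_\ell}\cdot\frac{N-k}{k},
\]
so the prior odds are exactly $1$. We would then show that
\[
\Pr[\text{correct}]-\tfrac12=\tfrac12\,\E_{MR}\Big[\sum_j \big|q(+1)p_j-q(-1)(1-p_j)\big|\Big]
=\tfrac12\,\E_{MR}\Big[\sum_j \tfrac{N}{2k(N-k)}\big|p_j-\tfrac kN\big|\Big].
\]
Success $0.501$ therefore gives $\E_{MR}\sum_j |p_j-k/N|\ge \Omega(k)$; roughly, a constant fraction of the $+1$-mass must be "located". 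Using $D(\mathrm{Bern}(p)\|\mathrm{Bern}(k/N))\ge \Omega\big((p-k/N)^2/(k/N)\big)$ for $k\le N/2$ together with Cauchy--Schwarz over the $N$ coordinates, this yields
\[
\E_{MR}\sum_j D\big(\mathrm{Bern}(p_j)\,\|\,\mathrm{Bern}(k/N)\big)\ge \Omega\!\left(\frac{(k)^2}{N\cdot k/N}\right)=\Omega(k).
\]

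\textbf{Step 2 (KL to mutual information).} The sum above is $\sum_j I(M:S_j\mid R)$. We use superadditivity for the nearly-product uniform weight-$k$ distribution:
\[
I(M:S\mid R)=H(S)-H(S\mid MR)\ \ge\ \sum_j I(M:S_j\mid R)-\Big(\sum_j H(S_j)-H(S)\Big),
\]
by subadditivity of $H(S\mid MR)$. The correction $\sum_j H(S_j)-\log\binom Nk$ is $O(\log k)$ by Stirling, which is $o(k)$ for large $k$. For constant $k$ the claim only needs $\Omega(1)$, which follows since any nontrivial advantage forces $\Omega(1)$ bits. This gives $I(M:S\mid R)\ge\Omega(k)$.

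\textbf{Step 3 (conditioning on $\ell$) — the main obstacle.} We need $I(M:S\mid \ell R)$, not $I(M:S\mid R)$. The identity $I(M:S\mid \ell R)=I(M:S\mid R)-I(M:\ell\mid R)$ holds because $\ell\perp M\mid SR$, but $I(M:\ell\mid R)$ can be as large as about $\frac12\log(N/k)$, which may exceed $k$. We would therefore first try to redo Steps 1--2 conditionally on $\ell$ (and on $S_\ell$). Given $\ell$ and $S_\ell$, the remaining coordinates $S_{-\ell}$ are uniform of weight $k$ or $k-1$, and they are independent of which index $\ell$ is, by symmetry. We then apply the coordinate-wise KL argument to $S_{-\ell}$ under this conditional prior, using $I(M:S\mid \ell R)\ge I(M:S_{-\ell}\mid \ell S_\ell R)-H(S_\ell)$, where $H(S_\ell)\le 1$. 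The success guarantee is averaged over $\ell$, so it must be transferred to deviations of the other coordinates.

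If this is too delicate, the fallback is to modify the distribution so that $\ell$ is nearly independent of $S$. For instance, sample $\ell$ uniformly and then plant $S_\ell$ to be $\pm1$ with probability $1/2$ each, with the remaining weight uniform. This keeps $I(M:\ell\mid R)=O(1)$ at the price of a support constraint of $k$ or $k\pm1$ ones, which we would fix by a padding argument.
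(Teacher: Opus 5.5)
Your Step 3 is where the proof is missing. The identity $\mathsf{IC}(\pi)=I(M:S\mid R)-I(M:\ell\mid R)$ is correct. The crude bound $I(M:\ell\mid R)\le I(S:\ell)=\frac12\log\frac{N^2}{4k(N-k)}\le\frac12\log\frac{N}{2k}$ already closes the argument when $k\ge C\log(N/k)$; for instance, in the triangle-counting instance the weight is $N/10$ and $I(S:\ell)=O(1)$. But the statement covers every $k\le N/2$. For $\log(N/k)\gg k$ you give only a plan, and its key step (``the success guarantee must be transferred to deviations of the other coordinates'') is never carried out.

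The fallback does not repair this. With exactly $k$ ones, ``$\ell$ uniform, $S_\ell$ a fair coin, rest uniform'' is literally your distribution, since you already have $\Pr[S_j=+1\mid\ell=j]=\frac12$. If instead the weight depends on $S_\ell$, then $I(S:\ell)$ stays $\approx\frac12\log(N/k)$, because $\ell$ is still biased toward a set of density about $k/N$. Worse, Alice can reveal $S_\ell$ through $|S|$ with $O(1)$ bits.

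There is also a smaller error in Step 1. With $\alpha=k/N$, the inequality $D(\mathrm{Bern}(p)\|\mathrm{Bern}(\alpha))\ge\Omega((p-\alpha)^2/\alpha)$ is false for large deviations: $p=1$ gives $\ln(1/\alpha)$, not $\Omega(1/\alpha)$. Use the Hellinger bound $D\ge\frac{(p-\alpha)^2}{2(p+\alpha)}$ (in nats) instead. Combined with Cauchy--Schwarz and $\sum_j(p_j+\alpha)=2k$, it gives the needed $\mathbb{E}\sum_j D\ge(\mathbb{E}\sum_j|p_j-\alpha|)^2/(4k)=\Omega(k)$.

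The missing estimate follows from your own Step 1 computation. Since $\ell\perp MR\mid S$, you have $\Pr[\ell=j\mid mr]=p_jq(+1)+(1-p_j)q(-1)$. Equivalently, $N\Pr[\ell=j\mid mr]=1+\lambda(u_j-1)$ with $u_j=p_j/\alpha$ and $\lambda=\frac{1-2\alpha}{2(1-\alpha)}\in[0,\frac12]$. Take the convex function $g(x)=x\ln x-x+1$, which has $g(1)=0$. Then
$D(\Pr[\ell\mid mr]\,\|\,\mathrm{Unif}_{[N]})=\frac1N\sum_j g(1+\lambda(u_j-1))\le\frac{1}{2N}\sum_j g(u_j)$.
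Also $D(\mathrm{Bern}(p_j)\|\mathrm{Bern}(\alpha))\ge\alpha\,g(u_j)$. Hence $I(M:\ell\mid R)\le\frac{1}{2k}\sum_jI(M:S_j\mid R)$, and so $\mathsf{IC}(\pi)\ge(1-\frac1{2k})\sum_jI(M:S_j\mid R)-O(\log k)=\Omega(k)$ for large $k$. Small $k$ is covered by your $\Omega(1)$ remark.

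Alternatively, you can avoid Step 3 entirely, since the statement only asserts that some balanced hard distribution exists. (The paper gives no proof of this theorem, only citations to standard arguments.) Set $S_j=-1$ for $j>2k$, let $S_{[2k]}$ be uniform of weight $k$, and let $\ell$ be uniform on $[2k]$ independently of $S$. Then $\mathsf{IC}(\pi)=I(M:S\mid R)$. Every $S_j$ with $j\le 2k$ is a fair coin, so Pinsker directly gives $\sum_{j\le 2k}I(M:S_j\mid R)=\Omega(k)$. Your Step 2 correction $2k-\log\binom{2k}{k}=O(\log k)$ then finishes.
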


We note that there are many known information-theoretic proofs of these lower bounds; see, for example, the folklore arguments of \cite{KalyanasundaramS92,Razborov92} or the comprehensive textbook by Rao and Yehudayoff \cite{Rao_Yehudayoff_2020}.

\subsection{Gap-Hamming}
\label{subsec:GH}

Consider the following \emph{Gap-Hamming} problem in the two-player communication setting.

\begin{definition}[Gap-Hamming] Let $n$ be a positive integer. The communication problem \emph{Gap-Hamming}, denoted $\GH_n$, is defined as follows: Alice and Bob receives $X,Y \in \{\pm 1\}^n$ respectively, and they wish to distinguish whether the inner product of their input vectors is more than $0.01 \sqrt{n}$ or less than $-0.01\sqrt{n}$. Formally, they wish to compute
    $$\GH_n(X,Y) := \begin{cases} +1 & \text{ if $\langle X, Y \rangle \geq 0.01\sqrt{n}$} \\  -1 & \text{ if $\langle X, Y \rangle \leq -0.01\sqrt{n}$}
    \end{cases}$$
\end{definition}

Since its formal introduction by Indyk and Woodruff \cite{IndykW03}, Gap-Hamming has quickly become one of the central problems in communication complexity. Due to its ``gap'' structure, it has served as a key tool for proving lower bounds for a variety of problems, such as streaming frequency moment estimation \cite{AlonMS99,Woodruff04,IndykW05,JayramW13}, sketching complexity of graph cuts \cite{AndoniCKQWZ16}, testing of boolean function properties \cite{BlaisBM12}, to name just a few. Notably, early applications of the Gap-Hamming problem were restricted to the \emph{one-way} communication in which the optimal lower bound $\Omega(n)$ was quickly established in the original paper of \cite{IndykW03}. It was only in 2012 that Chakrabarti and Regev~\cite{ChakrabartiR12} proved that $\Omega(n)$ bits of communication are required even for protocols with unbounded rounds. Though their original proof was quite technically involved, a sequence of follow-up works \cite{Vidick12, Sherstov12} provided significantly simpler proofs. Another set of milestones was achieved by \cite{ChakrabartiKW12, KerenidisLLRX15, BravermanGPW16}, who obtained the explicit \emph{information} lower bound: they showed that, under various distributions, any protocol computing $\GH_n$ must reveal $\Omega(n)$ bits of information about the players' inputs.

As briefly discussed earlier, $\GH_n$ is in fact a special case of the gap-majority composition obtained by taking $f(x,y) := xy$. Therefore, it is almost mandatory for us to recover its optimal $\Omega(n)$ lower bound through our gap-majority framework. We remark although the lower bound result is not quantitatively novel, we consider our proof strategy to be the main takeaway message: utilizing the gap-majority lemma as a ``black-box'' tool, our lower bound proof of $\GH_n$ is merely a simple and clean reduction from the classical set-disjointness problem.

\begin{theorem}
Any randomized protocol that computes $\GH_n$ with probability $0.99$ requires $\Omega(n)$ bits of communication.
\label{thm:GH_lb}
\end{theorem}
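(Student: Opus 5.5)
We reduce a protocol for a gap-majority composition of set-disjointness to $\GH_n$ and then invoke \Cref{thm:gap_majority} together with \Cref{thm:disj_info_lb}. Since $\IC(\pi)\le\mathsf{CC}(\pi)$, an information lower bound suffices.

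Fix a constant block size $k$ and set $m=n/k$ (up to padding and rounding). Take the base function $f=\DISJ_k$ under the balanced distribution $\mathcal{D}_{\DISJ_k}$ of \Cref{thm:disj_info_lb}. Given inputs $(X_1,\dots,X_m)$ and $(Y_1,\dots,Y_m)$ of $\GapMAJ_m\circ\DISJ_k^m$, encode each set $X_j\subseteq[k]$ as a vector in $\{\pm1\}^k$ and concatenate. We want the inner product of the resulting length-$n$ vectors to be an affine function of $\sum_j|X_j\cap Y_j|=\tfrac12\sum_j(\DISJ_k(X_j,Y_j)+1)$.

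The standard encoding maps Alice's coordinate $i$ to $-1$ if $i\in X_j$ and $+1$ otherwise, and does the same for Bob. This does not give an affine function: the inner product of the encoded vectors is $k-2|X_j|-2|Y_j|+4|X_j\cap Y_j|$, which depends on the set sizes. We fix this with the usual trick of using public randomness and appending fixed padding coordinates to cancel the size terms. Alternatively, we restrict to a hard distribution in which $|X_j|$ and $|Y_j|$ are fixed, as in Razborov's distribution, so that $|X_j|=|Y_j|=k/4$. Then the inner product equals $\sum_j(k-k)+4\sum_j|X_j\cap Y_j|$, which is an affine function $c+2\sum_j\DISJ_k(X_j,Y_j)$ for an explicit constant $c$. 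Alice and Bob can subtract $c$ by appending $c$ publicly known coordinates with opposite signs, for instance setting Alice's padding entries to $+1$ and Bob's to $-1$. Dimensions change only by constant factors.

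With this encoding, a sum $\sum_j\DISJ_k\ge0.01\sqrt m$ translates to an inner product of at least $0.02\sqrt m$. Since $n'=\Theta(mk)$, this is $\Theta(\sqrt{n'}/\sqrt k)$. The threshold $0.01\sqrt{n'}$ is off by a constant factor depending on $k$, so we need a version of $\GH$ with gap $c'\sqrt{n}$ for a small constant $c'$.

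This constant mismatch is the main obstacle. It is a standard issue: Gap-Hamming with gap $c'\sqrt n$ reduces to gap $0.01\sqrt n$ by repeating coordinates and padding. Concretely, a $t$-fold repetition scales the inner product by $t$ and the length by $t$, so the threshold ratio improves by $\sqrt t$. Choosing $t=O(k)$ suffices.

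Given a protocol for $\GH_n$ with communication $C$ and success probability $0.99$, we therefore obtain a protocol for $\GapMAJ_m\circ\DISJ_k^m$ with information cost at most $C$ on $\mathcal{D}_{\DISJ_k}^m$. \Cref{thm:gap_majority} then yields a protocol for $\DISJ_k$ with success probability $0.501$ and information $C/m+O(1)$. By \Cref{thm:disj_info_lb}, this information is $\Omega(k)$. Choosing $k$ a large enough constant so that $\Omega(k)$ exceeds the $O(1)$ loss gives $C=\Omega(m)=\Omega(n)$.
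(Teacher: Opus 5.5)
Your proof is correct and has the same skeleton as the paper's. Both embed independent $\DISJ_k$ instances into a Gap-Hamming instance so that $\langle X,Y\rangle$ is a fixed multiple of $\sum_i \DISJ_k$, apply \Cref{thm:gap_majority} and \Cref{thm:disj_info_lb}, and take $k$ a large constant. The difference is the embedding.

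\textbf{The paper's embedding.} The paper uses one gadget of $4k^2$ coordinates per instance, with $n=4k^2t$ for $t$ instances. Each element $l\in[k]$ gets $3k$ coordinates. Its contribution to the inner product is $3k$ if $l\in X'\cap Y'$ and $-k$ in each of the other three membership cases. Then $k^2$ padding coordinates follow, giving exactly $\langle X_i,Y_i\rangle=2k\cdot\DISJ_k(X'_i,Y'_i)$. The per-element contribution depends only on the AND of the two memberships, so set sizes never appear and the gadget works for whatever balanced distribution \Cref{thm:disj_info_lb} supplies. The built-in $k$-fold repetition puts the threshold exactly at $0.01\sqrt{n}/(2k)=0.01\sqrt{t}$, so no separate amplification step is needed.

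\textbf{Your embedding.} Your route is more modular: standard encoding, then cancel the size terms, then generic repetition to fix the gap constant. It makes plain that the constant $0.01$ is immaterial. Both routes use $\Theta(k^2)$ coordinates per $\DISJ_k$ instance, and both conclude $\Omega(n/k)=\Omega(n)$.

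\textbf{One caveat.} Of your two fixes for the size dependence, only the padding one is supported by the black box as stated. \Cref{thm:disj_info_lb} gives no control over $|X_j|$ and $|Y_j|$. Razborov's distribution is not balanced, since it puts mass $3/4$ on disjoint pairs. So the fixed-size alternative would need a stronger disjointness lower bound (balanced, fixed sizes, success $0.501$) than the one the paper provides.

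The padding fix works and needs no public randomness, but you should spell it out. After your $k$ standard coordinates, append three blocks:
\begin{itemize}
\item $2k$ coordinates where Bob is all $+1$ and Alice has exactly $|X_j|+k$ entries equal to $+1$, contributing $2|X_j|$;
\item a symmetric block of $2k$ coordinates contributing $2|Y_j|$;
\item $k+2$ coordinates with Alice $+1$ and Bob $-1$.
\end{itemize}
The resulting block of length $6k+2$ has inner product exactly $4|X_j\cap Y_j|-2=2\DISJ_k(X_j,Y_j)$. Repeating the whole string $\lceil (3k+1)/2\rceil$ times then meets the $0.01\sqrt{n}$ threshold.
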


The rest of this subsection will prove \Cref{thm:GH_lb}. We first describe its hard distribution in which we will prove an \emph{information} lower bound against.

\paragraph{Hard distribution $\mathcal{D}_{\GH}$ for $\GH_n$.}
Let $n = 4k^{2}t$ for a sufficiently large constant $k$. We partition the $n$ coordinates into $t$ disjoint blocks of size $4k^{2}$. For each block $i \in [t]$, Alice receives $X_i \in \{\pm 1\}^{4k^{2}}$ and Bob receives $Y_i \in \{\pm 1\}^{4k^{2}}$, jointly drawn from the following distribution $\mathcal{D}_{\text{block}}$.

% \footnote{$\mathcal{D}_{\GH}$ is exactly the product distribution $\mathcal{D}_{\text{block}}^{t}$.}

\begin{enumerate}
    \item Sample $(X'_i, Y'_i) \sim \mathcal{D}_{\DISJ_k}$.
    
    \item The first $3k^{2}$ coordinates of $X_i$ and $Y_i$ are indexed by $[k] \times [3k]$ and constructed as follows.  
    For each $l \in [k]$:
    \begin{itemize}
        \item If $l \in X'_{i}$, set $X_i(l,j) = -1$ for $1 \le j \le k$, and $+1$ otherwise.
        \item If $l \notin X'_{i}$, set $X_i(l,j) = -1$ for $k+1 \le j \le 2k$, and $+1$ otherwise.
        \item If $l \in Y'_{i}$, set $Y_i(l,j) = -1$ for $1 \le j \le k$, and $+1$ otherwise.
        \item If $l \notin Y'_{i}$, set $Y_i(l,j) = -1$ for $2k+1 \le j \le 3k$, and $+1$ otherwise.
    \end{itemize}
    
    \item The next $k^{2}-k$ coordinates of both $X_i$ and $Y_i$ are set to $+1$.
    
    \item The final $k$ coordinates of $X_i$ are set to $+1$, and the final $k$ coordinates of $Y_i$ are set to $-1$.
\end{enumerate}

We now prove the linear lower bound of gap-hamming.

\begin{proof}[Proof of \Cref{thm:GH_lb}.]
Our strategy is to prove an $\Omega(n)$ \emph{information} lower bound against the distribution $\mathcal{D}_{\GH}$, which directly implies \Cref{thm:GH_lb}.  Let $\pi$ be a protocol for $\GH_n$ that succeeds with probability $0.99$.  When run on inputs drawn from $\mathcal{D}_{\GH} \equiv \mathcal{D}_{\text{block}}^{t}$, the protocol still succeeds with probability $0.99$.

Observe that by the construction of $\mathcal{D}_{\text{block}}$, we have $\langle X_i, Y_i \rangle = 2k$ if $|X'_i \cap Y'_i| = 1$, and $\langle X_i, Y_i \rangle = -2k$ if $|X'_i \cap Y'_i| = 0$. In other words, we can write $\langle X_i, Y_i \rangle = 2k \cdot \DISJ_k(X'_i, Y'_i)$, and therefore $\langle X, Y \rangle 
    = \sum_{i=1}^{t} \langle X_i, Y_i \rangle
    = 2k \cdot \sum_{i=1}^{t} \DISJ_k(X'_i, Y'_i),$
where the terms $\DISJ_k(X'_i, Y'_i)$ are drawn i.i.d. from $\mathcal{D}_{\DISJ_k}$ and each takes values in $\{\pm 1\}$ with $0$ mean.

Since $\pi$ distinguishes $\langle X, Y \rangle$ between $\geq 0.01\sqrt{n}$ and $\leq -0.01\sqrt{n}$ with probability 0.99, it also distinguishes $\sum_{i \in [t]} \DISJ_k(X'_i, Y'_i)$ from $\geq \frac{0.01\sqrt{n}}{2k} = 0.01\sqrt{t}$ and $\leq -\frac{0.01\sqrt{n}}{2k} = -0.01\sqrt{t}$ with probability 0.99. Thus, the premises of our gap-majority lemma (\Cref{thm:gap_majority}) are met. 

Applying the gap-majority lemma, we obtain a protocol $\eta$ for computing $\DISJ_k$ over $\mathcal{D}_{\DISJ_k}$ with sucesss probability $0.501$ and $\IC(\eta) \leq \frac{1}{t} \cdot \IC(\pi) + O(1)$. By \Cref{thm:disj_info_lb}, this protocol $\eta$ must have $\IC(\eta) \geq \Omega(k)$. Combining these inequalities, we obtain $\IC(\pi) \geq t \cdot \left(\Omega(k) - O(1)\right)$ which implies $\IC(\pi) \geq \Omega(kt) = \Omega(n)$ using that $k$ is a sufficiently large constant and $kt = \frac{n}{4k} = \Theta(n)$.  
\end{proof}

\subsection{Triangle Counting}
\label{subsec:tricount}

Consider the following \emph{triangle counting} problem in a single-pass streaming setting.

\begin{definition}[Streaming triangle counting] The streaming problem \emph{triangle counting}, denoted $\tricount$, is defined as follows. An $n$-vertex $m$-edge graph $G = (V,E)$ is given as a stream of edge arrival and let $\varepsilon \in (0,1)$ be a multiplicative error parameter. Let $T$ be the number of triangles in $G$ unknown to us. As a courtesy, we are given $\tau$ which is an $O(1)$-approximation of $T$ at the beginning of the stream. At the end of the stream, we wish compute $\widetilde{T}$ such that $|\widetilde{T}-T| \leq \eps T$.
\end{definition}

A naive streaming algorithm for triangle counting simply stores the entire graph as the stream progresses, which requires $\widetilde{O}(m)$ bits of memory. Beyond this trivial approach, a long line of work has culminated in a randomized algorithm that uses $\widetilde{O}\!\left(\frac{mn}{\varepsilon^2 T}\right)$ space \cite{Bar-YossefKS02, JowhariG05,BuriolFLMS06, BravermanOV13, KallaugherP17, JayaramK21}. In this section, we show that, up to $\operatorname{polylog}(n)$ factors, these two upper bounds taken together are in fact optimal \emph{even with respect to the dependence on $\varepsilon$}.

\begin{theorem} There exist absolute constants $c_1,c_2 >0$ such that the following statement is true. For any $\tau \leq c_1 \cdot \min\{m,\frac{n^3}{m}\}$ and $\eps \in (0,c_2)$, any streaming algorithms that answer $\tricount$ with $\widetilde{T}$ such that $|\widetilde{T}-T| \leq \eps T$ with probability $0.999$ must use $\Omega\left(\min\{m, \frac{mn}{\eps^2T}\}\right)$ bits of space.
\label{thm:tricount_lb}    
\end{theorem}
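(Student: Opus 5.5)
We plan to reduce a one-way communication problem to $\tricount$ and then combine the round-preserving framework (\Cref{cor:good_approx_1way} with $r=1$) with the indexing lower bound (\Cref{thm:ind_info_lb}). The lower bound $\Omega(m)$ arises when $\frac{mn}{\eps^2 T}\ge m$, and can be obtained from the same construction by capping $\eps$ (equivalently, running it with an effective $\eps' \ge \eps$ so that $\frac{mn}{\eps'^2 T}\asymp m$). So the main task is the $\Omega(\frac{mn}{\eps^2 T})$ bound. A single-pass streaming algorithm with space $s$ yields a one-way protocol of cost $s$: Alice streams her edges, sends the memory state, and Bob streams his edges.

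\textbf{Gadget.} We build $t$ independent copies of a base gadget for $\IND_{N,k}$. In copy $i$, Alice's string $S^{(i)}\in\{\pm1\}^N$ with $k$ ones is encoded as edges: take vertex sets $U_i=\{u_{i,1},\dots,u_{i,N}\}$ and a set $W_i$ of $d$ ``apex'' vertices. For each $j$ with $S^{(i)}_j=+1$, Alice inserts all edges between $u_{i,j}$ and $W_i$. Bob, holding index $\ell_i$, inserts the edges $(u_{i,\ell_i}, w)$ for a matching/clique structure inside $W_i$, specifically a clique or dense graph on $W_i$ chosen so that $u_{i,\ell_i}$ closes $\Theta(d^2)$ triangles exactly when $S^{(i)}_{\ell_i}=+1$. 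To make the count an affine function of $\IND$ with a fixed baseline, we also add padding: Bob adds edges to a dummy vertex that contributes $\Theta(d^2)$ triangles exactly when the bit is $-1$. Then the triangle count in copy $i$ is $c + \beta\cdot \IND(S^{(i)},\ell_i)$ with $\beta=\Theta(d^2)$, plus a large known constant $T_0$ coming from fixed, input-independent triangles that we add so that $T$ is at the prescribed value $\tau$. Summing over copies gives $T = T_0' + \beta\sum_i \IND(S^{(i)},\ell_i)$.

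\textbf{Parameter choice.} An $\eps T$-approximation distinguishes $\sum_i \IND$ at scale $0.01\sqrt t$ provided $\eps T \le 0.01\beta\sqrt t$, i.e.\ $t\approx (\eps T/\beta)^2$. By \Cref{cor:good_approx_1way} with $r=1$ (which has $O(\log t)$ loss; we take $k\gg \log t$), we get $s \ge \IC \ge t(\Omega(k)-O(\log t))=\Omega(tk)$. We need to balance the vertex count ($t(N+d)\lesssim n$), the edge count ($t(kd + d^2)\lesssim m$, with padding filling the rest), and the triangle count $T\approx \tau$ (with $T \gtrsim t\beta$ and the excess supplied by a separate dense padding component). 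Taking $N=2k$ and tuning $d$ and $k$ so that $tkd\asymp m$, $td\asymp n$, and $t=(\eps T/d^2)^2$ should give $tk \asymp m/d \asymp mn/(\eps^2 T)$ after substituting $d\asymp \eps T/n\cdot(\ldots)$. The hypotheses $\tau\le c_1\min\{m,n^3/m\}$ and $\eps<c_2$ are exactly what keep these integer parameters feasible ($d\ge1$, $t\ge1$, $k\le N/2$, and a padding graph with $\tau$ triangles fitting within the vertex and edge budget).

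\textbf{Main obstacle.} The step I expect to be hardest is the parameter bookkeeping: making the gadget's triangle contribution exactly affine in $\IND$ with no cross-copy triangles, fitting the padding so that $T=\Theta(\tau)$ within the budgets for $n$ and $m$, and simultaneously keeping $k\gg\log t$ so the $O(\log n)$ loss is absorbed. I would first fix $d$ as the free parameter, express $t,k$ through the three budget equations, and check the resulting constraints against the allowed range of $\tau$. If the clique-on-$W_i$ structure overspends edges, I would instead place Bob's edges as a star from $u_{i,\ell_i}$ plus a fixed matching inside $W_i$, so that $\beta=\Theta(d)$, and redo the balance.
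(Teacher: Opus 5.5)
Your proposal has a genuine gap, and it lies in the gadget rather than in the bookkeeping. As written, the gadget does not isolate the queried position. Any edges Bob places inside $W_i$ (clique, dense graph, or matching) are closed into triangles by \emph{every} $u_{i,j}$ with $S^{(i)}_j=+1$, because Alice joins each such vertex to all of $W_i$. So they contribute about $k\cdot e(W_i)$ triangles per copy, no matter what $\ell_i$ is. The only $\ell_i$-dependence is whether Bob's star at $u_{i,\ell_i}$ adds one more such vertex (bit $-1$) or duplicates Alice's edges (bit $+1$). That is a change of a single $e(W_i)$ on top of this baseline, and it is swamped: you would need $\eps t k\, e(W_i)\lesssim \sqrt{t}\, e(W_i)$. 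To isolate $u_{i,\ell_i}$, Bob's triangle-closing structure must pass through fresh vertices adjacent to $u_{i,\ell_i}$, not through $W_i$ itself. The padding that ``contributes triangles exactly when the bit is $-1$'' cannot be built by Bob, who does not know the bit. It is also unnecessary, since the indicator of $\IND=+1$ already equals $\frac{1}{2}(1+\IND)$, and if it existed it would cancel the signal.

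More fundamentally, even a repaired version of this encoding cannot reach the target, because you spend one vertex $u_{i,j}$ per bit position. The vertex budget you wrote down, $t(N+d)\lesssim n$, forces $tk\le tN\lesssim n$. Hence the bound you extract, $t\cdot\Omega(k)$, is at most $O(n)$; indeed Alice could simply send all $tN$ bits. But in the main regime $\eps^2T\ge n$ the target is $\frac{mn}{\eps^2T}=\Omega(\frac{n}{\eps^2})$, because $T=O(\tau)=O(m)$. Your tuning ($tkd\asymp m$, $td\asymp n$, $t=(\eps T/\beta)^2$) silently drops the $tN$ term. Even those three relations do not give $tk\asymp\frac{mn}{\eps^2T}$ for general parameters.

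The missing idea is to encode bits as \emph{edges}. The paper uses $\IND_{k^2,k^2/10}$, written as a $k\times k$ bipartite graph between mini-blocks $X_{i,j},Y_{i,j}$, so $\Theta(k^2)$ bits sit on $2k$ vertices. Bob joins both endpoints of the queried pair to $n$ apex vertices $z_1,\dots,z_n$ shared by all copies. This yields exactly $n$ triangles per mini-block iff the bit is $+1$, and none across mini-blocks. Each copy is repeated over $b=\Theta(\eps^2\tau/n)$ mini-blocks, with $t=\Theta(1/\eps^2)$ and $k=\Theta(\sqrt{mn/\tau})$. The resulting counts are:
\begin{itemize}
\item vertices: $n+2tkb=n+O(\sqrt{m\tau/n})=O(n)$, using $\tau\lesssim n^3/m$;
\item edges: $2nbt+k^2bt/10=\Theta(\tau+m)=\Theta(m)$, using $\tau\lesssim m$;
\item triangles: $T=\Theta(bnt)=\Theta(\tau)$, so no padding is needed.
\end{itemize}
The additive error on $\sum_i\IND$ is then $O(\eps T/(bn))=O(1/\eps)\le 0.01\sqrt t$. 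Finally, \Cref{thm:round_gap_majority} together with \Cref{thm:ind_info_lb} gives $t\cdot\Omega(k^2)=\Omega(\frac{mn}{\eps^2T})$.
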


Our proof of \Cref{thm:tricount_lb} follows a standard approach by reducing $\tricount$ to a new one-way communication problem $\mathcal{P}$ (state shortly). We assume the setting of parameters $m,n,\tau,T$ as stated in \Cref{thm:tricount_lb}, and may assume further that $\eps^2 T \geq n$. Let $\mathcal{A}$ be a streaming algorithm with space $S$ and succeeds for $\tricount$ with probability $0.999$. Our ultimate goal is to show that $S \geq \Omega\left(\frac{mn}{\eps^2T}\right)$. We note that in our upcoming reduction, we shall construct a graph with $\Theta(n)$ vertices, $\Theta(m)$ edges, and $\Theta(T)$ triangles instead of exactly $n$ vertices, $m$ edges and $T$ triangles. These numbers can then be scaled as needed.

Set $t = \Theta(1/\eps^{2})$, $k = \Theta(\sqrt{\frac{mn}{\tau}})$, and $b = \Theta(\eps^2 \tau/n)$ for suitable leading constants. We also ensure that $t,k,b \geq 1$ which is possible due to the set-up of our parameters. Consider the following one-way (distributional) communication problem $\mathcal{P}$ which can be interpreted as a $\pm 0.01\sqrt{t}$ approximation of $\mathsf{IND}_{k^2, k^2/10}^{+t}$ over the distribution $\mathcal{D}_{\mathsf{IND}_{k^2, k^2/10}}^{t}$.
\begin{enumerate}
    \item Sample $t$ i.i.d. instances of $\mathcal{D}_{\mathsf{IND}_{k^2, k^2/10}}$, namely $(S_{i}, \ell_{i})$ for each $i,j \in [t]$.
    \item Alice receives all the $\{S_{i}\}_{i \in [t]}$ and Bob receives all the $ \{\ell_{i}\}_{i \in [t]}$.
    \item Alice sends one message $M$ to Bob, who then are asked to compute $\sum_{i \in [t]} \mathsf{IND}_{k^2, k^2/10}(S_{i}, \ell_{i})$ within $\pm 0.01\sqrt{t}$ error.
\end{enumerate}

We will first prove the lower bound of $\mathcal{P}$ via our round-preserving gap-majority lemma taken with $r=1$.

\begin{claim} Assume the setting of parameters $m,n,T,$ and $\eps$.  Then, any one-way communication protocol that solves $\mathcal{P}$ with probability $0.99$ requires $\Omega\left(\frac{mn}{\eps^2T}\right)$ bits of communication.
\label{clm:pi_lb}
\end{claim}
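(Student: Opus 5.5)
The plan is to apply the corollary form of the round-preserving gap-majority lemma, \Cref{cor:good_approx_1way}, with $r=1$, taking as base function $f := \IND_{k^2,k^2/10}$ and as distribution $\mu := \mathcal{D}_{\IND_{k^2,k^2/10}}$. By \Cref{thm:ind_info_lb}, $\mu$ is balanced, so $\E_\mu[f]=0$ and the hypotheses of the corollary are met.

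Let $\pi$ be a one-way protocol solving $\mathcal{P}$ with probability $0.99$. It has $r=1$ round, and its output is a $\pm 0.01\sqrt{t}$ approximation to $f^{+t}$ over $\mu^t$. \Cref{cor:good_approx_1way} then yields a one-way protocol $\eta$ that decides $\IND_{k^2,k^2/10}$ over $\mu$ with probability $0.501$ and satisfies
$\IC(\eta)\le \IC(\pi)/t + O(\log t)$.

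\Cref{thm:ind_info_lb} gives $\IC(\eta)\ge \Omega(k^2/10)=\Omega(k^2)$. Combining the two bounds,
\[
\mathsf{CC}(\pi)\ \ge\ \IC(\pi)\ \ge\ t\bigl(\Omega(k^2)-O(\log t)\bigr).
\]

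The main obstacle is the additive $O(\log t)$ loss: we need $k^2 \gg \log t$. Since $k^2=\Theta(mn/\tau)$ and $\tau\le c_1 m$, we have $k^2\ge \Omega(n/c_1)$. Also $t=\Theta(1/\eps^2)$ and we assume $\eps^2T\ge n$, so $1/\eps^2\le T/n\le n^2$ (since $T\le n^3$), giving $\log t=O(\log n)$. Thus $k^2=\Omega(n)$ dominates $O(\log n)$ once $n$ is larger than a constant, with small $c_1$ absorbing the constants. Hence $\IC(\pi)\ge \Omega(tk^2)$.

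To finish, substitute the parameters: $tk^2=\Theta\bigl(\tfrac{1}{\eps^2}\cdot\tfrac{mn}{\tau}\bigr)=\Theta\bigl(\tfrac{mn}{\eps^2T}\bigr)$, using that $\tau=\Theta(T)$. This is the claimed bound.

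If one wants to be careful that $k^2$ is an integer and that $k^2/10\le k^2/2$ as the definition of $\IND_{N,k}$ requires, both follow by rounding the leading constants appropriately.
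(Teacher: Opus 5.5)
Your proposal is correct and follows the paper's proof. You apply the round-preserving gap-majority lemma with $r=1$ (through its approximation corollary, which amounts to the same thing), combine it with the $\Omega(k^2)$ information lower bound for indexing, absorb the additive $O(\log t)$ loss, and substitute $tk^2=\Theta\bigl(\frac{mn}{\eps^2 T}\bigr)$. Your explicit argument that $\log t = O(\log n)$ via $\eps^2 T \ge n$, so that $k^2=\Omega(n)$ dominates the loss, is somewhat more careful than the paper's one-line appeal to $T \le O(m)$.
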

\begin{proof} Let $\pi$ be such protocol using $C$ bits of communication. Using \Cref{thm:round_gap_majority} with $r=1$, we obtain a one-way protocol $\eta$ for computing $\mathsf{IND}_{k^2, k^2/10}$ over the distribution $\mathcal{D}_{\mathsf{IND}_{k^2, k^2/10}}$ that succeeds with probability $0.501$ and $\IC(\eta) \leq \frac{1}{t} \cdot \IC(\pi) + O(\log{t})$. By \Cref{thm:ind_info_lb}, we have $\IC(\eta) \geq \Omega(k^2) = \Omega\left(\frac{mn}{T}\right)$. Combining all with $t = \Theta(1/\eps^2)$, we have:
$$C \geq \IC(\pi) \geq \frac{1}{\eps^2} \cdot \left(\Omega\left(\frac{mn}{T}\right) - O\left(\log{(1/\eps)}\right)\right) \geq \Omega\left(\frac{mn}{\eps^2 T}\right)$$
where the final inequality uses the fact that $T \leq O(\tau) \leq \Theta(m)$.
\end{proof}

We now make the reduction by constructing a one-way protocol $\pi$ for $\mathcal{P}$ that succeeds with probability $0.99$. To do so, the players use their input to construct a graph $G$ of $\Theta(n)$ vertices, $\Theta(m)$ edges, and $\Theta(T)$ triangles, and then invoke the streaming algorithm $\mathcal{A}$ for $\tricount$. 

\paragraph{Vertices.} $G$ has $n+2tkb = \Theta(n)$ vertices where the first $n$ vertices are labeled $z_{1},...,z_{n}$, and the remaining $2tkb$ vertices are split into $2t$ disjoint blocks of size $kb$ named $X_{1},\ldots,X_{t}$ and $Y_{1},\ldots,Y_{t}$. Each block $X_i$ and $Y_i$ is further split into $b$ mini-blocks of size $k$ named $X_{i,1},\ldots,X_{i,b}$, and $Y_{i,1},\ldots,Y_{i,b}$ respectively. There are no edges within the $X$'s parts nor within the $Y$'s parts.

\paragraph{Edges and edge-partition.}  For each $i \in [t]$, consider the input pair $(S_{i}, \ell_{i}) \in \{\pm 1\}^{k^2} \times [k^2]$ given to Alice and Bob respectively. For each $j \in [b]$, Alice views the induced subgraph $X_{i,j} \cup Y_{i,j}$ as a vector of dimension $k^2$. She then encodes $S_{i} \in \{\pm 1\}^{k^2}$ by drawing the $k^2/10$ edges corresponding to the positions of 1s. For a pair of vertices $x \in X_{i,j}$ and $y \in Y_{i,j}$ that $(x,y)$ encodes $\ell_{i}$, Bob draws $(z_q,x)$ and $(z_q,y)$ for every $q \in [n]$. Note that the graph contains $2nbt + (k^2/10) \cdot bt = \Theta(m)$ edges.

\paragraph{Protocol $\pi$.} Alice starts the streaming algorithm $\mathcal{A}$ with an empty graph. She then inserts all of her edges and send the memory content to Bob. Bob then continues inserting all of his edges. In the end, Bob queries $\mathcal{A}$ for $\widetilde{T}$ such that $|T-\widetilde{T}| \leq \eps T$ with probability $0.99$ and answers $\frac{2\widetilde{T}}{bn}-t$.

\begin{claim} The protocol $\pi$ computes $\mathcal{P}$  with probability $0.99$.
\label{clm:pi_correct}
\end{claim}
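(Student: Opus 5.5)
The plan is to compute the triangle count $T$ of $G$ exactly as a function of the inputs, and then check that the accuracy of $\mathcal{A}$ carries over to the required additive accuracy for $\mathcal{P}$.

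\textbf{Counting triangles.} There are no edges inside the $X$-parts, inside the $Y$-parts, or among the $z$'s. So every triangle must use vertices from at least two of these three groups.

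First, the subgraph on $X\cup Y$ is bipartite and therefore contains no triangles. Every triangle thus involves some $z_q$. The only edges at $z_q$ are Bob's edges $(z_q,x)$ and $(z_q,y)$, where $(x,y)$ is the pair encoding $\ell_i$ inside mini-block $(i,j)$.

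Fix $q$, $i$ and $j$. The pair $\{x,y\}$ closes a triangle with $z_q$ exactly when Alice drew the edge $(x,y)$, that is, when $S_i$ has a $1$ at position $\ell_i$. Triangles of the form $z_q,x,x'$ are impossible because $X$ has no internal edges. Since the positions of the mini-blocks $X_{i,j},Y_{i,j}$ are disjoint across $j$ and across $i$, there is no other way to form a triangle.

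Writing $[S_i(\ell_i)=1]=\frac{1}{2}(\IND(S_i,\ell_i)+1)$, this gives
$$T = nb\sum_{i\in[t]} \tfrac{1}{2}\bigl(\IND_{k^2,k^2/10}(S_i,\ell_i)+1\bigr),$$
so that
$$\frac{2T}{bn}-t=\sum_{i\in[t]}\IND_{k^2,k^2/10}(S_i,\ell_i),$$
which is exactly the quantity Bob is asked to approximate.

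\textbf{Error analysis.} It remains to bound the error of Bob's answer $\frac{2\widetilde{T}}{bn}-t$. This error equals $\frac{2|\widetilde{T}-T|}{bn}\le \frac{2\eps T}{bn}$. Since $T\le nbt$, the error is at most $2\eps t$. With $t=c/\eps^2$ we get
$$2\eps t = 2\sqrt{c}\,\sqrt{t},$$
which is at most $0.01\sqrt{t}$ once the constant $c$ in $t=\Theta(1/\eps^2)$ is chosen small enough. The success probability is inherited from $\mathcal{A}$, which succeeds with probability $0.999\ge 0.99$.

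\textbf{Main obstacle: parameter consistency.} One must verify that $T=\Theta(\tau)$, so that the promise that $\tau$ is an $O(1)$-approximation of $T$ actually holds. We have $T\approx nbt/2$ up to fluctuations, because under the balanced distribution about half of the indices are $1$. The bad event is that the sum deviates by a constant fraction of $t$. By concentration this has probability $e^{-\Omega(t)}$, which is small because $t$ is large when $\eps<c_2$. Plugging in the parameters, $nbt=\Theta(n\cdot \eps^2\tau/n\cdot 1/\eps^2)=\Theta(\tau)$, as required.

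Finally, the small failure probability of this event is absorbed by the slack between the success probability $0.999$ of $\mathcal{A}$ and the target $0.99$.
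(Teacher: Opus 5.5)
Your proposal is correct and follows the paper's proof. It has the same exact count $T=\frac{bn}{2}\sum_{i\in[t]}\bigl(1+\IND_{k^2,k^2/10}(S_i,\ell_i)\bigr)$, the same conversion of the $\eps T$ multiplicative error into an $O(\eps t)=O(1/\eps)\le 0.01\sqrt{t}$ additive error by fixing the constant in $t=\Theta(1/\eps^2)$, and the same concentration step showing $T=\Theta(bnt)=\Theta(\tau)$ so the promise on $\tau$ holds. Your deterministic bound $T\le bnt$ and your explicit absorption of the $e^{-\Omega(t)}$ failure probability into the $0.999$ versus $0.99$ slack are only slightly more careful renderings of what the paper does.
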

\begin{proof} Consider the graph $G$ induced by blocks $(X_{i}, Y_{i})$ and  $z_1,\ldots,z_n$. If its corresponding instance $\mathsf{IND}(S_{i}, \ell_{i})$ answers $+1$, we shall discover $bn$ triangles as each of the $b$ mini-block produces exactly $n$ triangles. On the other hand, if the answer is $-1$, there is no triangles at all. Hence, we can write the number of triangles of $G$ as $T = \frac{bn}{2} \cdot \sum_{i \in [t]} \left(1+\mathsf{IND}_{k^2, k^2/10}(S_i, \ell_i)\right).$

The graph $G$ that Alice and Bob jointly constructs has $\Theta(n)$ vertices, $\Theta(m)$ edges, and the number of triangles in $G$ is $bn \cdot B(t,\frac{1}{2})$ which is $\Theta(bnt) = \Theta(\tau) = \Theta(T)$ with high constant probability. All of these satisfy the setting of $\tricount$. Thus, using $\mathcal{A}$, with probability 0.999, Bob answers $\widetilde{T}$ such that $|T - \widetilde{T}| \leq \eps T$. In other words, the additive error to $\mathcal{P}$ is $O\left(\frac{\eps T}{bn}\right) = O(1/\eps) \leq 0.01 \sqrt{t}$, using the fact that $\tau$ is a $O(1)$-approximation of $T$ and that we set $t = \Theta(1/\eps^2)$ for an appropriate leading constant. This means the protocol $\pi$ is correct with probability $0.99$, as we desired.
\end{proof}

\paragraph{Wrapping-up.} By chaining \Cref{clm:pi_lb} and \Cref{clm:pi_correct},  $\pi$ must use $\Omega\left(\frac{mn}{\eps^2T}\right)$ bits of communication. Since the sole message from Alice is the memory content of $\mathcal{A}$ which is $S$ bits, we have $S \geq \Omega\left(\frac{mn}{\eps^2T}\right)$ as desired.

\section{Acknowledgement}

The authors thank Elena Gribelyuk, Cheng Jiang, Yang Hu, Yinchen Liu, and Santhoshini Velusamy for helpful discussions that inspired this work. We are especially grateful to and credit Sepehr Assadi for pointing out the potential application of gap-majority lemmas to streaming triangle counting in \Cref{subsec:tricount}. We also thank anonymous FOCS 2026 reviewers for their valuable feedback and for identifying typographical errors in an earlier draft.
\vspace{3.5mm}

\noindent \textbf{AI Disclosure.} Numerical calculations in the proof of \Cref{lem:gapmaj_then_low_variance} (shown in Appendix \ref{appendix:BGW_reprove}) were assisted by ChatGPT.

% for discussions that influenced this work

\printbibliography

@inproceedings{SawettamalyaY25,
  author       = {Pachara Sawettamalya and
                  Huacheng Yu},
  title        = {Strong {XOR} Lemma for Information Complexity},
  booktitle    = {57th Annual {ACM} Symposium on Theory of Computing},
  pages        = {1626--1637},
  publisher    = {{ACM}},
  year         = {2025},
  url          = {https://doi.org/10.1145/3717823.3718258},
  doi          = {10.1145/3717823.3718258},
  bibsource    = {dblp computer science bibliography, https://dblp.org}
}

@inproceedings{Yu22,
  author       = {Huacheng Yu},
  title        = {Strong {XOR} Lemma for Communication with Bounded Rounds: (extended abstract)},
  booktitle    = {63rd {IEEE} Annual Symposium on Foundations of Computer Science},
  pages        = {1186--1192},
  publisher    = {{IEEE}},
  year         = {2022},
  url          = {https://doi.org/10.1109/FOCS54457.2022.00114},
  doi          = {10.1109/FOCS54457.2022.00114},
  bibsource    = {dblp computer science bibliography, https://dblp.org}
}

@inproceedings{BravermanGW20,
  author       = {Mark Braverman and
                  Sumegha Garg and
                  David P. Woodruff},
  title        = {The Coin Problem with Applications to Data Streams},
  booktitle    = {61st {IEEE} Annual Symposium on Foundations of Computer Science},
  pages        = {318--329},
  publisher    = {{IEEE}},
  year         = {2020},
  url          = {https://doi.org/10.1109/FOCS46700.2020.00038},
  doi          = {10.1109/FOCS46700.2020.00038},
  bibsource    = {dblp computer science bibliography, https://dblp.org}
}

@article{Braverman15,
  author       = {Mark Braverman},
  title        = {Interactive Information Complexity},
  journal      = {{SIAM} J. Comput.},
  volume       = {44},
  number       = {6},
  pages        = {1698--1739},
  year         = {2015},
  url          = {https://doi.org/10.1137/130938517},
  doi          = {10.1137/130938517},
  bibsource    = {dblp computer science bibliography, https://dblp.org}
}

@article{BYJKS04,
  author       = {Ziv Bar{-}Yossef and
                  T. S. Jayram and
                  Ravi Kumar and
                  D. Sivakumar},
  title        = {An information statistics approach to data stream and communication complexity},
  journal      = {J. Comput. Syst. Sci.},
  volume       = {68},
  number       = {4},
  pages        = {702--732},
  year         = {2004},
  url          = {https://doi.org/10.1016/j.jcss.2003.11.006},
  doi          = {10.1016/J.JCSS.2003.11.006},
  bibsource    = {dblp computer science bibliography, https://dblp.org}
}

@inproceedings{BBCR10,
  author       = {Boaz Barak and
                  Mark Braverman and
                  Xi Chen and
                  Anup Rao},
  title        = {How to compress interactive communication},
  booktitle    = {42nd {ACM} Symposium on Theory of Computing},
  pages        = {67--76},
  publisher    = {{ACM}},
  year         = {2010},
  url          = {https://doi.org/10.1145/1806689.1806701},
  doi          = {10.1145/1806689.1806701},
  bibsource    = {dblp computer science bibliography, https://dblp.org}
}

@inproceedings{IR24a,
  author       = {Siddharth Iyer and
                  Anup Rao},
  title        = {An {XOR} Lemma for Deterministic Communication Complexity},
  booktitle    = {65th {IEEE} Annual Symposium on Foundations of Computer Science},
  pages        = {429--432},
  publisher    = {{IEEE}},
  year         = {2024},
  url          = {https://doi.org/10.1109/FOCS61266.2024.00034},
  doi          = {10.1109/FOCS61266.2024.00034},
  bibsource    = {dblp computer science bibliography, https://dblp.org}
}

@inproceedings{IR24b,
  author       = {Siddharth Iyer and
                  Anup Rao},
  title        = {{XOR} Lemmas for Communication via Marginal Information},
  booktitle    = {56th Annual {ACM} Symposium on Theory of Computing},
  pages        = {652--658},
  publisher    = {{ACM}},
  year         = {2024},
  url          = {https://doi.org/10.1145/3618260.3649726},
  doi          = {10.1145/3618260.3649726},
  bibsource    = {dblp computer science bibliography, https://dblp.org}
}

@inproceedings{BRWY13,
  author       = {Mark Braverman and
                  Anup Rao and
                  Omri Weinstein and
                  Amir Yehudayoff},
  title        = {Direct Products in Communication Complexity},
  booktitle    = {54th Annual {IEEE} Symposium on Foundations of Computer Science},
  pages        = {746--755},
  publisher    = {{IEEE} Computer Society},
  year         = {2013},
  url          = {https://doi.org/10.1109/FOCS.2013.85},
  doi          = {10.1109/FOCS.2013.85},
  bibsource    = {dblp computer science bibliography, https://dblp.org}
}

@inproceedings{JPY12,
  author       = {Rahul Jain and
                  Attila Pereszl{\'{e}}nyi and
                  Penghui Yao},
  title        = {A Direct Product Theorem for the Two-Party Bounded-Round Public-Coin Communication Complexity},
  booktitle    = {53rd Annual {IEEE} Symposium on Foundations of Computer Science},
  pages        = {167--176},
  publisher    = {{IEEE} Computer Society},
  year         = {2012},
  url          = {https://doi.org/10.1109/FOCS.2012.42},
  doi          = {10.1109/FOCS.2012.42},
  bibsource    = {dblp computer science bibliography, https://dblp.org}
}

@inproceedings{BR11,
  author       = {Mark Braverman and
                  Anup Rao},
  title        = {Information Equals Amortized Communication},
  booktitle    = {{IEEE} 52nd Annual Symposium on Foundations of Computer Science},
  pages        = {748--757},
  publisher    = {{IEEE} Computer Society},
  year         = {2011},
  url          = {https://doi.org/10.1109/FOCS.2011.86},
  doi          = {10.1109/FOCS.2011.86},
  bibsource    = {dblp computer science bibliography, https://dblp.org}
}

@inproceedings{Yao82,
  author       = {Andrew Chi{-}Chih Yao},
  title        = {Theory and Applications of Trapdoor Functions (Extended Abstract)},
  booktitle    = {23rd Annual Symposium on Foundations of Computer Science},
  pages        = {80--91},
  publisher    = {{IEEE} Computer Society},
  year         = {1982},
  url          = {https://doi.org/10.1109/SFCS.1982.45},
  doi          = {10.1109/SFCS.1982.45},
  bibsource    = {dblp computer science bibliography, https://dblp.org}
}

@article{Lev87,
  author       = {Leonid A. Levin},
  title        = {One-way functions and pseudorandom generators},
  journal      = {Comb.},
  volume       = {7},
  number       = {4},
  pages        = {357--363},
  year         = {1987},
  url          = {https://doi.org/10.1007/BF02579323},
  doi          = {10.1007/BF02579323},
  bibsource    = {dblp computer science bibliography, https://dblp.org}
}

@inproceedings{IW97,
  author       = {Russell Impagliazzo and
                  Avi Wigderson},
  title        = {\emph{P = BPP} if \emph{E} Requires Exponential Circuits: Derandomizing the {XOR} Lemma},
  booktitle    = {29th {ACM} Symposium on the Theory
                  of Computing},
  pages        = {220--229},
  publisher    = {{ACM}},
  year         = {1997},
  url          = {https://doi.org/10.1145/258533.258590},
  doi          = {10.1145/258533.258590},
  bibsource    = {dblp computer science bibliography, https://dblp.org}
}

@inproceedings{Imp95,
  author       = {Russell Impagliazzo},
  title        = {Hard-Core Distributions for Somewhat Hard Problems},
  booktitle    = {36th Annual Symposium on Foundations of Computer Science},
  pages        = {538--545},
  publisher    = {{IEEE} Computer Society},
  year         = {1995},
  url          = {https://doi.org/10.1109/SFCS.1995.492584},
  doi          = {10.1109/SFCS.1995.492584},
  bibsource    = {dblp computer science bibliography, https://dblp.org}
}

@article{Dru12,
  author       = {Andrew Drucker},
  title        = {Improved direct product theorems for randomized query complexity},
  journal      = {Comput. Complex.},
  volume       = {21},
  number       = {2},
  pages        = {197--244},
  year         = {2012},
  url          = {https://doi.org/10.1007/s00037-012-0043-7},
  doi          = {10.1007/S00037-012-0043-7},
  bibsource    = {dblp computer science bibliography, https://dblp.org}
}

@article{BKLS20,
  author       = {Joshua Brody and
                  Jae Tak Kim and
                  Peem Lerdputtipongporn and
                  Hariharan Srinivasulu},
  title        = {A Strong {XOR} Lemma for Randomized Query Complexity},
  journal      = {CoRR},
  volume       = {abs/2007.05580},
  year         = {2020},
  url          = {https://arxiv.org/abs/2007.05580},
  eprinttype    = {arXiv},
  eprint       = {2007.05580},
  bibsource    = {dblp computer science bibliography, https://dblp.org}
}

@inproceedings{BesselmanGGM025,
  author       = {Tyler Besselman and
                  Mika G{\"{o}}{\"{o}}s and
                  Siyao Guo and
                  Gilbert Maystre and
                  Weiqiang Yuan},
  title        = {Direct Sums for Parity Decision Trees},
  booktitle    = {40th Computational Complexity Conference},
  series       = {LIPIcs},
  volume       = {339},
  pages        = {16:1--16:38},
  publisher    = {Schloss Dagstuhl - Leibniz-Zentrum f{\"{u}}r Informatik},
  year         = {2025},
  url          = {https://doi.org/10.4230/LIPIcs.CCC.2025.16},
  doi          = {10.4230/LIPICS.CCC.2025.16},
  bibsource    = {dblp computer science bibliography, https://dblp.org}
}

@article{Ben-DavidK18,
  author       = {Shalev Ben{-}David and
                  Robin Kothari},
  title        = {Randomized Query Complexity of Sabotaged and Composed Functions},
  journal      = {Theory Comput.},
  volume       = {14},
  number       = {1},
  pages        = {1--27},
  year         = {2018},
  url          = {https://doi.org/10.4086/toc.2018.v014a005},
  doi          = {10.4086/TOC.2018.V014A005},
  bibsource    = {dblp computer science bibliography, https://dblp.org}
}

@inproceedings{BlaisB19,
  author       = {Eric Blais and
                  Joshua Brody},
  title        = {Optimal Separation and Strong Direct Sum for Randomized Query Complexity},
  booktitle    = {34th Computational Complexity Conference},
  series       = {LIPIcs},
  volume       = {137},
  pages        = {29:1--29:17},
  publisher    = {Schloss Dagstuhl - Leibniz-Zentrum f{\"{u}}r Informatik},
  year         = {2019},
  url          = {https://doi.org/10.4230/LIPIcs.CCC.2019.29},
  doi          = {10.4230/LIPICS.CCC.2019.29},
  bibsource    = {dblp computer science bibliography, https://dblp.org}
}

@article{JainKS10,
  author       = {Rahul Jain and
                  Hartmut Klauck and
                  Miklos Santha},
  title        = {Optimal direct sum results for deterministic and randomized decision
                  tree complexity},
  journal      = {Inf. Process. Lett.},
  volume       = {110},
  number       = {20},
  pages        = {893--897},
  year         = {2010},
  url          = {https://doi.org/10.1016/j.ipl.2010.07.020},
  doi          = {10.1016/J.IPL.2010.07.020},
  bibsource    = {dblp computer science bibliography, https://dblp.org}
}

@article{Savick02,
  author       = {Petr Savick{\'{y}}},
  title        = {On determinism versus unambiquous nondeterminism for decision trees},
  journal      = {Electron. Colloquium Comput. Complex.},
  volume       = {{TR02-009}},
  year         = {2002},
  url          = {https://eccc.weizmann.ac.il/eccc-reports/2002/TR02-009/index.html},
  eprinttype    = {ECCC},
  eprint       = {TR02-009},
  bibsource    = {dblp computer science bibliography, https://dblp.org}
}

@inproceedings{LeeMRSS11,
  author       = {Troy Lee and
                  Rajat Mittal and
                  Ben W. Reichardt and
                  Robert Spalek and
                  Mario Szegedy},
  title        = {Quantum Query Complexity of State Conversion},
  booktitle    = {{IEEE} 52nd Annual Symposium on Foundations of Computer Science},
  pages        = {344--353},
  publisher    = {{IEEE} Computer Society},
  year         = {2011},
  url          = {https://doi.org/10.1109/FOCS.2011.75},
  doi          = {10.1109/FOCS.2011.75},
  bibsource    = {dblp computer science bibliography, https://dblp.org}
}

@inproceedings{Reichardt11a,
  author       = {Ben Reichardt},
  title        = {Reflections for quantum query algorithms},
  booktitle    = {22nd Annual {ACM-SIAM} Symposium on Discrete
                  Algorithms},
  pages        = {560--569},
  publisher    = {{SIAM}},
  year         = {2011},
  url          = {https://doi.org/10.1137/1.9781611973082.44},
  doi          = {10.1137/1.9781611973082.44},
  bibsource    = {dblp computer science bibliography, https://dblp.org}
}

@inproceedings{Ben-DavidGK020,
  author       = {Shalev Ben{-}David and
                  Mika G{\"{o}}{\"{o}}s and
                  Robin Kothari and
                  Thomas Watson},
  title        = {When Is Amplification Necessary for Composition in Randomized Query Complexity?},
  booktitle    = {Approximation, Randomization, and Combinatorial Optimization},
  series       = {LIPIcs},
  volume       = {176},
  pages        = {28:1--28:16},
  publisher    = {Schloss Dagstuhl - Leibniz-Zentrum f{\"{u}}r Informatik},
  year         = {2020},
  url          = {https://doi.org/10.4230/LIPIcs.APPROX/RANDOM.2020.28},
  doi          = {10.4230/LIPICS.APPROX/RANDOM.2020.28},
  bibsource    = {dblp computer science bibliography, https://dblp.org}
}

@article{FeigeRPU94,
  author       = {Uriel Feige and
                  Prabhakar Raghavan and
                  David Peleg and
                  Eli Upfal},
  title        = {Computing with Noisy Information},
  journal      = {{SIAM} J. Comput.},
  volume       = {23},
  number       = {5},
  pages        = {1001--1018},
  year         = {1994},
  url          = {https://doi.org/10.1137/S0097539791195877},
  doi          = {10.1137/S0097539791195877},
  bibsource    = {dblp computer science bibliography, https://dblp.org}
}

@inproceedings{GoosM21,
  author       = {Mika G{\"{o}}{\"{o}}s and
                  Gilbert Maystre},
  title        = {A Majority Lemma for Randomised Query Complexity},
  booktitle    = {36th Computational Complexity Conference},
  series       = {LIPIcs},
  volume       = {200},
  pages        = {18:1--18:15},
  publisher    = {Schloss Dagstuhl - Leibniz-Zentrum f{\"{u}}r Informatik},
  year         = {2021},
  url          = {https://doi.org/10.4230/LIPIcs.CCC.2021.18},
  doi          = {10.4230/LIPICS.CCC.2021.18},
  bibsource    = {dblp computer science bibliography, https://dblp.org}
}

@inproceedings{MackenzieS25,
  author       = {Simon Mackenzie and
                  Abdallah Saffidine},
  title        = {Refuting the Direct Sum Conjecture for Total Functions in Deterministic Communication Complexity},
  booktitle    = {57th Annual {ACM} Symposium on Theory of Computing},
  pages        = {572--583},
  publisher    = {{ACM}},
  year         = {2025},
  url          = {https://doi.org/10.1145/3717823.3718148},
  doi          = {10.1145/3717823.3718148},
  bibsource    = {dblp computer science bibliography, https://dblp.org}
}

@article{ChakrabartiR12,
  author       = {Amit Chakrabarti and
                  Oded Regev},
  title        = {An Optimal Lower Bound on the Communication Complexity of Gap-Hamming-Distance},
  journal      = {{SIAM} J. Comput.},
  volume       = {41},
  number       = {5},
  pages        = {1299--1317},
  year         = {2012},
  url          = {https://doi.org/10.1137/120861072},
  doi          = {10.1137/120861072},
  bibsource    = {dblp computer science bibliography, https://dblp.org}
}

@article{Sherstov12,
  author       = {Alexander A. Sherstov},
  title        = {The Communication Complexity of Gap Hamming Distance},
  journal      = {Theory Comput.},
  volume       = {8},
  number       = {1},
  pages        = {197--208},
  year         = {2012},
  url          = {https://doi.org/10.4086/toc.2012.v008a008},
  doi          = {10.4086/TOC.2012.V008A008},
  bibsource    = {dblp computer science bibliography, https://dblp.org}
}

@article{Vidick12,
  author       = {Thomas Vidick},
  title        = {A concentration inequality for the overlap of a vector on a large set, with application to the communication complexity of the Gap-Hamming-Distance problem},
  journal      = {Chic. J. Theor. Comput. Sci.},
  volume       = {2012},
  year         = {2012},
  url          = {http://cjtcs.cs.uchicago.edu/articles/2012/1/contents.html},
  bibsource    = {dblp computer science bibliography, https://dblp.org}
}

@inproceedings{ChakrabartiKW12,
  author       = {Amit Chakrabarti and
                  Ranganath Kondapally and
                  Zhenghui Wang},
  title        = {Information Complexity versus Corruption and Applications to Orthogonality and Gap-Hamming},
  booktitle    = {Approximation, Randomization, and Combinatorial Optimization},
  series       = {Lecture Notes in Computer Science},
  volume       = {7408},
  pages        = {483--494},
  publisher    = {Springer},
  year         = {2012},
  url          = {https://doi.org/10.1007/978-3-642-32512-0\_41},
  doi          = {10.1007/978-3-642-32512-0\_41},
  bibsource    = {dblp computer science bibliography, https://dblp.org}
}

@inproceedings{IndykW05,
  author       = {Piotr Indyk and
                  David P. Woodruff},
  title        = {Optimal approximations of the frequency moments of data streams},
  booktitle    = {37th Annual {ACM} Symposium on Theory of Computing},
  pages        = {202--208},
  publisher    = {{ACM}},
  year         = {2005},
  url          = {https://doi.org/10.1145/1060590.1060621},
  doi          = {10.1145/1060590.1060621},
  bibsource    = {dblp computer science bibliography, https://dblp.org}
}

@inproceedings{JowhariG05,
  author       = {Hossein Jowhari and
                  Mohammad Ghodsi},
  title        = {New Streaming Algorithms for Counting Triangles in Graphs},
  booktitle    = {11th Annual International Computing and Combinatorics Conference},
  series       = {Lecture Notes in Computer Science},
  volume       = {3595},
  pages        = {710--716},
  publisher    = {Springer},
  year         = {2005},
  url          = {https://doi.org/10.1007/11533719\_72},
  doi          = {10.1007/11533719\_72},
  bibsource    = {dblp computer science bibliography, https://dblp.org}
}

@inproceedings{Bar-YossefKS02,
  author       = {Ziv Bar{-}Yossef and
                  Ravi Kumar and
                  D. Sivakumar},
  title        = {Reductions in streaming algorithms, with an application to counting triangles in graphs},
  booktitle    = {30th {ACM-SIAM} Symposium on Discrete
                  Algorithms},
  pages        = {623--632},
  publisher    = {{ACM/SIAM}},
  year         = {2002},
  url          = {http://dl.acm.org/citation.cfm?id=545381.545464},
  bibsource    = {dblp computer science bibliography, https://dblp.org}
}

@inproceedings{BuriolFLMS06,
  author       = {Luciana S. Buriol and
                  Gereon Frahling and
                  Stefano Leonardi and
                  Alberto Marchetti{-}Spaccamela and
                  Christian Sohler},
  title        = {Counting triangles in data streams},
  booktitle    = {25th {ACM} {SIGACT-SIGMOD-SIGART} Symposium
                  on Principles of Database Systems},
  pages        = {253--262},
  publisher    = {{ACM}},
  year         = {2006},
  url          = {https://doi.org/10.1145/1142351.1142388},
  doi          = {10.1145/1142351.1142388},
  bibsource    = {dblp computer science bibliography, https://dblp.org}
}

@inproceedings{BravermanOV13,
  author       = {Vladimir Braverman and
                  Rafail Ostrovsky and
                  Dan Vilenchik},
  title        = {How Hard Is Counting Triangles in the Streaming Model?},
  booktitle    = {40th International Colloquium on Automata, Languages, and Programming},
  series       = {Lecture Notes in Computer Science},
  volume       = {7965},
  pages        = {244--254},
  publisher    = {Springer},
  year         = {2013},
  url          = {https://doi.org/10.1007/978-3-642-39206-1\_21},
  doi          = {10.1007/978-3-642-39206-1\_21},
  bibsource    = {dblp computer science bibliography, https://dblp.org}
}

@inproceedings{KallaugherP17,
  author       = {John Kallaugher and
                  Eric Price},
  title        = {A Hybrid Sampling Scheme for Triangle Counting},
  booktitle    = {28th Annual {ACM-SIAM} Symposium on Discrete
                  Algorithms},
  pages        = {1778--1797},
  publisher    = {{SIAM}},
  year         = {2017},
  url          = {https://doi.org/10.1137/1.9781611974782.116},
  doi          = {10.1137/1.9781611974782.116},
  bibsource    = {dblp computer science bibliography, https://dblp.org}
}

@inproceedings{JayaramK21,
  author       = {Rajesh Jayaram and
                  John Kallaugher},
  title        = {An Optimal Algorithm for Triangle Counting in the Stream},
  booktitle    = {Approximation, Randomization, and Combinatorial Optimization},
  series       = {LIPIcs},
  volume       = {207},
  pages        = {11:1--11:11},
  publisher    = {Schloss Dagstuhl - Leibniz-Zentrum f{\"{u}}r Informatik},
  year         = {2021},
  url          = {https://doi.org/10.4230/LIPIcs.APPROX/RANDOM.2021.11},
  doi          = {10.4230/LIPICS.APPROX/RANDOM.2021.11},
  bibsource    = {dblp computer science bibliography, https://dblp.org}
}

@article{JayramW13,
  author       = {T. S. Jayram and
                  David P. Woodruff},
  title        = {Optimal Bounds for Johnson-Lindenstrauss Transforms and Streaming Problems with Subconstant Error},
  journal      = {{ACM} Trans. Algorithms},
  volume       = {9},
  number       = {3},
  pages        = {26:1--26:17},
  year         = {2013},
  url          = {https://doi.org/10.1145/2483699.2483706},
  doi          = {10.1145/2483699.2483706},
  bibsource    = {dblp computer science bibliography, https://dblp.org}
}

@inproceedings{Woodruff04,
  author       = {David P. Woodruff},
  title        = {Optimal space lower bounds for all frequency moments},
  booktitle    = {15th Annual {ACM-SIAM} Symposium on Discrete
                  Algorithms},
  pages        = {167--175},
  publisher    = {{SIAM}},
  year         = {2004},
  url          = {http://dl.acm.org/citation.cfm?id=982792.982817},
  bibsource    = {dblp computer science bibliography, https://dblp.org}
}

@article{AlonMS99,
  author       = {Noga Alon and
                  Yossi Matias and
                  Mario Szegedy},
  title        = {The Space Complexity of Approximating the Frequency Moments},
  journal      = {J. Comput. Syst. Sci.},
  volume       = {58},
  number       = {1},
  pages        = {137--147},
  year         = {1999},
  url          = {https://doi.org/10.1006/jcss.1997.1545},
  doi          = {10.1006/JCSS.1997.1545},
  bibsource    = {dblp computer science bibliography, https://dblp.org}
}

@article{BravermanGPW16,
  author       = {Mark Braverman and
                  Ankit Garg and
                  Denis Pankratov and
                  Omri Weinstein},
  title        = {Information Lower Bounds via Self-Reducibility},
  journal      = {Theory Comput. Syst.},
  volume       = {59},
  number       = {2},
  pages        = {377--396},
  year         = {2016},
  url          = {https://doi.org/10.1007/s00224-015-9655-z},
  doi          = {10.1007/S00224-015-9655-Z},
  bibsource    = {dblp computer science bibliography, https://dblp.org}
}

@article{KerenidisLLRX15,
  author       = {Iordanis Kerenidis and
                  Sophie Laplante and
                  Virginie Lerays and
                  J{\'{e}}r{\'{e}}mie Roland and
                  David Xiao},
  title        = {Lower Bounds on Information Complexity via Zero-Communication Protocols
                  and Applications},
  journal      = {{SIAM} J. Comput.},
  volume       = {44},
  number       = {5},
  pages        = {1550--1572},
  year         = {2015},
  url          = {https://doi.org/10.1137/130928273},
  doi          = {10.1137/130928273},
  bibsource    = {dblp computer science bibliography, https://dblp.org}
}

@book{Rao_Yehudayoff_2020, place={Cambridge}, title={Communication Complexity: and Applications}, publisher={Cambridge University Press}, author={Rao, Anup and Yehudayoff, Amir}, year={2020}}

@inproceedings{AndoniCKQWZ16,
  author       = {Alexandr Andoni and
                  Jiecao Chen and
                  Robert Krauthgamer and
                  Bo Qin and
                  David P. Woodruff and
                  Qin Zhang},
  title        = {On Sketching Quadratic Forms},
  booktitle    = {Innovations in Theoretical Computer Science},
  pages        = {311--319},
  publisher    = {{ACM}},
  year         = {2016},
  url          = {https://doi.org/10.1145/2840728.2840753},
  doi          = {10.1145/2840728.2840753},
  bibsource    = {dblp computer science bibliography, https://dblp.org}
}

@article{Razborov92,
  author       = {Alexander A. Razborov},
  title        = {On the Distributional Complexity of Disjointness},
  journal      = {Theor. Comput. Sci.},
  volume       = {106},
  number       = {2},
  pages        = {385--390},
  year         = {1992},
  url          = {https://doi.org/10.1016/0304-3975(92)90260-M},
  doi          = {10.1016/0304-3975(92)90260-M},
  bibsource    = {dblp computer science bibliography, https://dblp.org}
}

@article{KalyanasundaramS92,
  author       = {Bala Kalyanasundaram and
                  Georg Schnitger},
  title        = {The Probabilistic Communication Complexity of Set Intersection},
  journal      = {{SIAM} J. Discret. Math.},
  volume       = {5},
  number       = {4},
  pages        = {545--557},
  year         = {1992},
  url          = {https://doi.org/10.1137/0405044},
  doi          = {10.1137/0405044},
  bibsource    = {dblp computer science bibliography, https://dblp.org}
}

@inproceedings{IndykW03,
  author       = {Piotr Indyk and
                  David P. Woodruff},
  title        = {Tight Lower Bounds for the Distinct Elements Problem},
  booktitle    = {44th Symposium on Foundations of Computer Science},
  pages        = {283--288},
  publisher    = {{IEEE} Computer Society},
  year         = {2003},
  url          = {https://doi.org/10.1109/SFCS.2003.1238202},
  doi          = {10.1109/SFCS.2003.1238202},
  bibsource    = {dblp computer science bibliography, https://dblp.org}
}

@article{BlaisBM12,
  author       = {Eric Blais and
                  Joshua Brody and
                  Kevin Matulef},
  title        = {Property Testing Lower Bounds via Communication Complexity},
  journal      = {Comput. Complex.},
  volume       = {21},
  number       = {2},
  pages        = {311--358},
  year         = {2012},
  url          = {https://doi.org/10.1007/s00037-012-0040-x},
  doi          = {10.1007/S00037-012-0040-X},
  bibsource    = {dblp computer science bibliography, https://dblp.org}
}

@article{FederKNN95,
  author       = {Tom{\'{a}}s Feder and
                  Eyal Kushilevitz and
                  Moni Naor and
                  Noam Nisan},
  title        = {Amortized Communication Complexity},
  journal      = {{SIAM} J. Comput.},
  volume       = {24},
  number       = {4},
  pages        = {736--750},
  year         = {1995},
  url          = {https://doi.org/10.1137/S0097539792235864},
  doi          = {10.1137/S0097539792235864},
  bibsource    = {dblp computer science bibliography, https://dblp.org}
}

@inproceedings{Tal13,
  author       = {Avishay Tal},
  title        = {Properties and applications of boolean function composition},
  booktitle    = {Innovations in Theoretical Computer Science},
  pages        = {441--454},
  publisher    = {{ACM}},
  year         = {2013},
  url          = {https://doi.org/10.1145/2422436.2422485},
  doi          = {10.1145/2422436.2422485},
  bibsource    = {dblp computer science bibliography, https://dblp.org}
}

@article{Montanaro14,
  author       = {Ashley Montanaro},
  title        = {A composition theorem for decision tree complexity},
  journal      = {Chic. J. Theor. Comput. Sci.},
  volume       = {2014},
  year         = {2014},
  url          = {http://cjtcs.cs.uchicago.edu/articles/2014/6/contents.html},
  bibsource    = {dblp computer science bibliography, https://dblp.org}
}

@inproceedings{BenDavidB25,
  author       = {Shalev Ben{-}David and
                  Eric Blais},
  title        = {Direct Product Theorems for Randomized Query Complexity},
  booktitle    = {66th {IEEE} Annual Symposium on Foundations of Computer Science},
  pages        = {710--733},
  publisher    = {{IEEE}},
  year         = {2025},
  url          = {https://doi.org/10.1109/FOCS63196.2025.00038},
  doi          = {10.1109/FOCS63196.2025.00038},
  bibsource    = {dblp computer science bibliography, https://dblp.org}
}

\appendix 

\section{Proof of \Cref{lem:gapmaj_then_low_variance}}
\label{appendix:BGW_reprove}

We give the missing proof of \Cref{lem:gapmaj_then_low_variance}, which closely follows Claim 6 of \cite{BravermanGW20}. 

% Note that some of the numerical calculations are assisted by ChatGPT.

\variancebound*

\begin{proof} Denote $\pi = (X,Y,M,R)$ where $X = (X_1,\ldots,X_n)$ and $Y = (Y_1,\ldots,Y_n)$ such that each $(X_i,Y_i) \sim \mu$ i.i.d. We may assume that the final message of $M$ is $A \in \{\pm 1\}$ indicating the players' answer to $\GapMAJ_n \circ f^n(X,Y)$. We start by expanding the observer's variance of $\pi$ on $f^{+n}$.  
\begin{align}
    \VarObs(\pi @ f^{+n}) & = \Esub{MR} \ \Varsub{XY}\left[f^+(X,Y) \mid MR\right] \nonumber \\ 
    & \leq \Esub{A} \ \Varsub{XY}\left[f^+(X,Y) \mid A\right] \tag{\Cref{fact:cond_var} and $M$ contains $A$} \nonumber \\
    & = \Esub{XY}\left[f^+(X,Y)^2\right] - \Esub{A} \left[\Esub{XY}\left[f^+(X,Y) \mid A\right]^2\right] \nonumber \\
    & = n - \Esub{A} \left[\Esub{XY}\left[f^+(X,Y) \mid A\right]^2\right] \tag{each $f(X_i,Y_i)$ is uniform $\{\pm 1\}$ i.i.d.}\nonumber \\
    & = n - \Esub{A} \left[\Esub{Z}\left[\langle Z,\mathbf{1}_n\rangle \mid A\right]^2\right] \label{eq:boundvar1}
\end{align}
where we denote $Z = (Z_1,\ldots,Z_n)$ such that each $Z_i = f(X_i,Y_i)$ is i.i.d. equally likely to take values $\{\pm 1\}$ under $\mu$. This also mean for any $z \in \{\pm 1\}^n$, we have $\pi(z) = 2^{-n}$. Also denote $p_z := \pi(A=1 \mid z)$. As $\pi$ correct with probability 0.99, we have:
\begin{align*}
    0.01 & \geq 2^{-n} \cdot \left(\sum_{\langle z, \mathbf{1}_n\rangle \geq 0.01\sqrt{n}} (1-p_z) + \sum_{\langle z, \mathbf{1}_n\rangle  \leq -0.01\sqrt{n}}p_z\right)
\end{align*} 
which implies $\left(\sum_{\langle z, \mathbf{1}_n\rangle \geq 0.01\sqrt{n}} p_z - \sum_{\langle z, \mathbf{1}_n\rangle \leq -0.01\sqrt{n}}p_z\right) \geq 0.486 \cdot 2^n$ using the fact that $|\{z \mid \langle z, \mathbf{1}_n\rangle \geq 0.01\sqrt{n}\} \geq 0.496 \cdot 2^n$. This further gives:
\begin{equation}
    \sum_{\langle z, \mathbf{1}_n\rangle \geq 0.01\sqrt{n}}p_z \geq 0.486 \cdot 2^n \hspace{3mm} \text{ and } \hspace{3mm}   \sum_{\langle z, \mathbf{1}_n\rangle \leq -0.01\sqrt{n}}p_z \leq 0.014 \cdot 2^n \label{eq:bound_pzs}
\end{equation}
% \begin{itemize}
%     \item $\sum_{z \geq 0.01\sqrt{n}}p_z \geq 2^n \cdot (0.498 - \frac{\eps}{2})$
%     \item $\sum_{z \leq -0.01\sqrt{n}} p_z \leq 2^n \cdot \frac{1-\eps}{2}- 2^n \cdot (0.498 - \frac{\eps}{2}) = 2^n \cdot 0.02$
% \end{itemize}

Our next step is to show $\Esub{z}\left[\langle z, \mathbf{1}_n\rangle \mid A=1\right] \geq \Omega(\sqrt{n})$. We begin by expanding the expectation.
\begin{align}
    \Esub{Z}\left[\langle Z, \mathbf{1}_n\rangle \mid A=1\right] & = \frac{\sum_{z} p_z \langle z, \mathbf{1}_n\rangle}{ \sum_{z} p_z} \geq 2^{-n} \cdot \sum_{z} p_z \langle z, \mathbf{1}_n\rangle \label{eq:exp_A1}
\end{align}
where the final inequality follows $\sum_{z} p_z \cdot 2^{-n} = \pi(A = 1) \leq 1$. We then expand
\begin{align*}
    \sum_{z} p_z \langle z,\mathbf{1}_n\rangle & = \underbrace{\sum_{\langle z,\mathbf{1}_n\rangle \geq 0.01 \sqrt{n}} p_z \langle z,\mathbf{1}_n\rangle}_{\Phi_1} + \underbrace{\sum_{\langle z,\mathbf{1}_n\rangle \leq -0.01 \sqrt{n}} p_z \langle z,\mathbf{1}_n\rangle }_{\Phi_2}
    \\ &  \hspace{1cm}+ \underbrace{\sum_{0 \leq \langle z,\mathbf{1}_n\rangle < 0.01 \sqrt{n}} p_z \langle z,\mathbf{1}_n\rangle}_{\Phi_3} + \underbrace{\sum_{-0.01 \sqrt{n} \leq \langle z,\mathbf{1}_n\rangle \leq 0} p_z \langle z,\mathbf{1}_n\rangle}_{\Phi_4}.
\end{align*}
Notice that $\Phi_3 \geq 0$. Utilizing \Cref{eq:bound_pzs} along with the facts that $\sum_{i = 0.005\sqrt{n}}^{1.162\sqrt{n}} \binom{n}{n/2+i} \leq 0.486 \cdot 2^n$ and $\sum_{i = 0}^{n/2-1.098\sqrt{n}} \binom{n}{i} \geq 0.014 \cdot 2^n$, we obtain the following bounds:
$$\Phi_4 \geq - \sum_{0 \leq \langle z,\mathbf{1}_n\rangle < 0.01 \sqrt{n}} \langle z,\mathbf{1}_n\rangle = - \sum_{i=0}^{0.005 \sqrt{n}} \binom{n}{n/2+i} \cdot 2i$$
$$\Phi_1 \geq \sum_{i = 0.005 \sqrt{n}}^{1.162\sqrt{n}} \binom{n}{n/2+i} \cdot 2i$$
$$\Phi_2 \geq - \sum_{i = 0}^{n/2-1.098\sqrt{n}} \binom{n}{i} \cdot (n-2i) = \sum_{i = 1.098\sqrt{n}}^{n/2} \binom{n}{n/2+i} \cdot 2i $$

Combining these terms, we have:
\begin{align*}
    \sum_{z} p_z \langle z,\mathbf{1}_n\rangle  & \geq \sum_{i = 0.005 \sqrt{n}}^{1.162\sqrt{n}} \binom{n}{n/2+i} \cdot 2i - \sum_{i =  1.098\sqrt{n}}^{n/2} \binom{n}{n/2+i} \cdot 2i - \sum_{i=0}^{0.005 \sqrt{n}} \binom{n}{n/2+i} \cdot 2i \\
    & = 2^n \cdot \sqrt{\frac{n}{2\pi}} \cdot \left[\phi(0.005) -\phi(1.162) - \phi(1.098) - \phi(0) + \phi(0.005)\right] - o(2^n \sqrt{n}) \tag{\Cref{fact:tailgaussian}}\\
    & \geq 0.6724 \cdot 2^n \cdot \sqrt{n}
\end{align*}

Plugging this back into \Cref{eq:exp_A1}, we have $\Esub{Z}\left[\langle Z, \mathbf{1}_n\rangle \mid A=1\right] \geq 0.6724\sqrt{n}.$ A symmetric argument  gives $\Esub{Z}\left[\langle Z, \mathbf{1}_n\rangle \mid A=-1\right] \geq 0.6724\sqrt{n}.$ These imply $\Esub{A} \ \Esub{Z}\left[\langle Z,\mathbf{1}_n\rangle \mid A \right] \geq 0.6724\sqrt{n}$. Combining with \Cref{eq:boundvar1}, we achieve $\VarObs(\pi @ f^{+n}) \leq n - (0.6724\sqrt{n})^2 \leq 0.99n$.
\end{proof}

\begin{fact} For any constant $c > 0$, we have $\sum_{i = 0}^{c \sqrt{n}} \binom{n}{n/2+i} \cdot i = 2^n \cdot \sqrt{\frac{n}{8 \pi}} \cdot (1-\phi(c)) + o(2^n\sqrt{n})$ where $\phi(c) := e^{-2c^2}.$
\label{fact:tailgaussian}
\end{fact}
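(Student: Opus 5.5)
The plan is to approximate each binomial coefficient by a Gaussian, uniformly in the range $0 \le i \le c\sqrt{n}$. The resulting sum is then a Riemann sum for $\int_0^c t e^{-2t^2}\,dt$. Throughout, $n$ is even, so that $n/2+i$ is an integer, and the upper limit $c\sqrt{n}$ is read as $\lfloor c\sqrt{n}\rfloor$.

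\textbf{Step 1 (local central limit estimate).} By Stirling's formula, uniformly for $0 \le i \le c\sqrt{n}$,
$$\binom{n}{n/2+i} = 2^n \sqrt{\frac{2}{\pi n}}\, e^{-2i^2/n}\,\bigl(1+O(1/\sqrt{n})\bigr).$$
To derive this, write $\log\binom{n}{n/2+i}$ using Stirling for each factorial. Then expand
$$(n/2+i)\log(1+2i/n) + (n/2-i)\log(1-2i/n) = \frac{2i^2}{n} + O\!\left(\frac{i^4}{n^3}\right),$$
where the odd-order terms cancel between the two pieces. The prefactor ratio is $\sqrt{n/(2\pi (n/2+i)(n/2-i))} = \sqrt{2/(\pi n)}\,(1+O(i^2/n^2))$. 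Since $i = O(\sqrt{n})$, all these corrections are $O(1/\sqrt{n})$ uniformly in the range. This is the step where care is needed: the error must be relative and uniform over the whole range, not merely pointwise.

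\textbf{Step 2 (reduce to a Riemann sum).} Multiply by $i$ and sum. The relative error $O(1/\sqrt{n})$ multiplies a main term of total size $O(2^n\sqrt{n})$, so it contributes $O(2^n) = o(2^n\sqrt{n})$. It remains to evaluate $\sum_{i=0}^{c\sqrt{n}} i\,e^{-2i^2/n}$. Substituting $t_i = i/\sqrt{n}$ gives
$$\sum_{i=0}^{c\sqrt{n}} i\, e^{-2i^2/n} = n\sum_i \frac{1}{\sqrt{n}}\, t_i e^{-2t_i^2}.$$
The inner sum is a Riemann sum with mesh $1/\sqrt{n}$ for the smooth, bounded-derivative function $t e^{-2t^2}$ on $[0,c]$. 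Rounding the endpoint $c\sqrt{n}$ shifts the range by at most one term of size $O(1/\sqrt{n})$. Hence
$$\sum_i \frac{1}{\sqrt{n}}\, t_i e^{-2t_i^2} = \int_0^c t e^{-2t^2}\,dt + O(1/\sqrt{n}) = \frac{1-e^{-2c^2}}{4} + O(1/\sqrt{n}).$$

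\textbf{Step 3 (combine).} Combining the two steps,
$$\sum_{i=0}^{c\sqrt{n}} \binom{n}{n/2+i}\, i = 2^n\sqrt{\frac{2}{\pi n}} \cdot n \cdot \frac{1-e^{-2c^2}}{4} + O(2^n).$$
For the constant, note that $\sqrt{2/\pi}/4 = \sqrt{1/(8\pi)}$, so the main term equals $2^n\sqrt{n/(8\pi)}\,(1-\phi(c))$, as claimed. The $O(2^n)$ error is $o(2^n\sqrt{n})$. The only genuine obstacle is Step 1, namely the uniform Stirling expansion; the remaining steps are routine bookkeeping.
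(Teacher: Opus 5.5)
Your proof is correct. The uniform Stirling estimate, the $O(2^n)$ error bookkeeping, and the Riemann-sum evaluation $\int_0^c t e^{-2t^2}\,dt = (1-e^{-2c^2})/4$ together give exactly the stated asymptotics. In fact the relative error in Step 1 is $O(1/n)$ uniformly for $i \le c\sqrt{n}$, so you have room to spare.

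The paper states this fact without any proof, so there is nothing to compare against. Your de Moivre--Laplace argument is the standard way to supply one.

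One small point: the appendix applies the fact for general $n$. For odd $n$, let $i$ range over half-integers, so that $\langle z,\mathbf{1}_n\rangle = 2i$ is an integer. Nothing in your argument changes.
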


\end{document}